\let\newfloat\newfloat@ltx
\DeclarePairedDelimiter\floor{\lfloor}{\rfloor}
\newcommand{\sandwich}[3]{\bra{#1}#2 \ket{#3}}
\newcommand{\sbkt}[1]{\left(#1\right)}
\newtheorem{theorem}{Theorem}
\newtheorem{lemma}{Lemma}
\newtheorem{definition}{Definition}
\newtheorem{corollary}{Corollary}
\newtheorem*{theorem*}{Theorem}
\newtheorem*{corollary*}{Corollary}
\newcommand\nonpfrate[1]{\gamma_{X, Y}}
\newcommand*{\rom}[1]{\expandafter\@slowromancap\romannumeral #1@}
\let\oldproof\proof
\renewcommand{\proof}{\oldproof}
\def\algbackskip{\hskip-\ALG@thistlm}
\begin{document}

\preprint{APS/123-QED}

\title{Performance and achievable rates of the Gottesman-Kitaev-Preskill code for pure-loss and amplification channels}

\author{Guo Zheng} 
\thanks{These authors contributed equally.}
\email{guozheng@uchicago.edu}
% \email{guozheng@uchicago.edu}
\affiliation{Pritzker School of Molecular Engineering, The University of Chicago, Chicago 60637, USA}

\author{Wenhao He}
\thanks{These authors contributed equally.}
\affiliation{Center for Computational Science and Engineering, Massachusetts Institute of Technology, Cambridge, MA 02139, USA}

\author{Gideon Lee}
\affiliation{Pritzker School of Molecular Engineering, The University of Chicago, Chicago 60637, USA}

\author{Kyungjoo Noh}
\affiliation{AWS Center for Quantum Computing, Pasadena, CA 91125, USA}

\author{Liang Jiang}
\email{liang.jiang@uchicago.edu}
\affiliation{Pritzker School of Molecular Engineering, The University of Chicago, Chicago 60637, USA}
\date{\today}% It is always \today, today,
             %  but any date may be explicitly specified

\begin{abstract}

    Quantum error correction codes protect information from realistic noisy channels and lie at the heart of quantum computation and communication tasks. Understanding the optimal performance and other information-theoretic properties, such as the achievable rates, of a given code is crucial, as these factors determine the fundamental limits imposed by the encoding in conjunction with the noise channel. Here, we use the transpose channel to analytically obtain the near-optimal performance of any Gottesman-Kitaev-Preskill (GKP) code under pure loss and pure amplification. We present rigorous connections between GKP code's near-optimal performance and its dual lattice geometry and average input energy. With no energy constraint, we show that when $\abs{\frac{\tau}{1 - \tau}}$ is an integer, specific families of GKP codes simultaneously achieve the loss and amplification capacity. $\tau$ is the transmissivity (gain) for loss (amplification). Our results establish GKP code as the first structured bosonic code family that achieves the capacity of loss and amplification. 

\end{abstract}

%\keywords{Suggested keywords}%Use showkeys class option if keyword
                              %display desired
\maketitle

%\tableofcontents

\section{Introduction}

An ultimate question in the theory of communication~\cite{nielsen_chuang_2010, wilde2017quantum} is to determine the maximum information transmission rate, also known as capacity, of various communication models and channels. Extending the seminal works of Shannon~\cite{Shannon_1948_1, Shannon_1948_2} to channels of quantum mechanical nature, quantum Shannon theory has seen significant developments in recent years~\cite{PhysRevLett.98.130501, Holevo+2013, RevModPhys.66.481, RevModPhys.84.621}. As an example of the capacity quantities, the (one-way) quantum capacity~\cite{PhysRevA.54.2629, PhysRevA.55.1613, 1377491, PhysRevA.57.4153} focuses on the communication model where no classical communications are allowed, and it is defined to be the supremum of the quantum error correction (QEC)~\cite{shor1997faulttolerant, PhysRevLett.78.2252} coding rate with perfect decoding of the transmitted information. Straightforwardly, the achievable rates of specific codes provide lower bounds to the quantum capacity. However, computing the rates through the original definition is often cumbersome: it is highly nontrivial to provide a decent guess on both the encoding and decoding simultaneously. Moreover, since the coding rate is defined over asymptotically many channel uses, the performance of the encoder-decoder pair would need to be obtained analytically. Alternatively, it is more common to calculate achievable rates through the coherent information~\cite{1377491, PhysRevA.55.1613, Shor_lecture_notes}. The advantage of this method is that it only requires the conjecture of an input state. Many state-of-the-art bosonic channels' capacity lower bounds are determined through this approach~\cite{PhysRevA.63.032312, Wolf_2007, Noh_2020}. Among them, it was shown that loss and amplification channel capacities are exactly known and coincide with the coherent one-shot information of thermal input states~\cite{PhysRevA.63.032312, Wolf_2007, RevModPhys.84.621, PhysRevA.74.062307}. Nevertheless, a drawback of this method is that it only proves the existence of an optimal code based on the input states~\cite{1377491, PhysRevA.55.1613, Shor_lecture_notes, Hayden_2008_decouple}. The code construction has little structure, which poses great difficulty in its decoding and practicality.

The Gottesman-Kitaev-Preskill (GKP) code~\cite{PhysRevA.64.012310} is a promising bosonic code that has attracted active theoretical and experimental effort in superconducting circuits~\cite{Campagne_Ibarcq_2020, Sivak_Nature_2023, PhysRevLett.132.150607}, ion traps~\cite{deNeeve_2022, Fl_hmann_2019}, and optical platforms~\cite{PhysRevA.102.012607, Bourassa2021blueprintscalable}. Its popularity is largely due to the Gaussian nature of its logical operations~\cite{PhysRevA.64.012310} and the recent advances in its decoding~\cite{PhysRevLett.125.260509, Hastrup:2021aa}. Another key motivation is that while the GKP code was originally designed against displacement noise~\cite{PhysRevA.64.012310}, there has been numerical evidence~\cite{PhysRevA.97.032346, 8482307, Leviant_2022} that demonstrates the superiority of the optimally decoded single-mode GKP code under loss compared to other single-mode bosonic codes. Compared to displacement noise, loss is more experimentally motivated and is the dominant noise process in various hardware platforms such as microwave cavities~\cite{PhysRevX.6.031006} and optical fibers~\cite{RevModPhys.84.621}. The outstanding performance of GKP code under loss is rather surprising. While there have been some qualitative arguments~\cite{PhysRevA.97.032346, PhysRevA.93.012315}, there is not yet a rigorous and quantitative explanation that connects GKP code's performance with its parameters, such as its dual lattice and energy. In addition, the optimal performance of GKP code under loss is obtained through numerical convex optimization~\cite{PhysRevA.97.032346, 8482307}, leading to restrictions on the accessible system size. It is left unknown whether GKP code can maintain its performance at larger average photon numbers.

The information-theoretic properties of GKP codes are also remarkable. Since its code construction is based on translationally-invariant lattices~\cite{PhysRevA.64.012310, conrad2024good, Conrad_2022, PRXQuantum.3.010335, PRXQuantum.4.040334}, the performance of its optimal decoder against displacement noise is tightly connected to the geometry of its dual lattice. The achievable rate of GKP code under Gaussian displacement noise matches the lower bound of the channel's quantum capacity~\cite{PhysRevA.64.062301}. It is tempting to extend the analysis to GKP code under loss. Nevertheless, the analysis of GKP code's optimal decoder against loss is cumbersome to analyze. Taking a step back, past works~\cite{PhysRevA.97.032346, 8482307} obtained a nonvanishing achievable rate through the amplification decoder (AD), whose performance is tractable yet suboptimal. The achievable rate of the GKP code under such a decoder is shown to have a nonzero gap compared to the capacity of the loss channel. It is unclear whether the gap is due to the non-optimal decoder, the limitations of the GKP code, or a combination of both.

In this article, we obtain the near-optimal performance of GKP codes under loss and amplification. The key tool we adopt is the recently obtained formalism of near-optimal fidelity~\cite{zheng2024near}. The near-optimal fidelity provides an easy-to-compute metric based on the QEC matrix and is guaranteed to be close to the optimal decoder performance. With the near-optimal fidelity, we are able to obtain analytical expressions for the GKP code's performance, which overcomes the system size restrictions of the optimization-based approaches. The expression reveals the connection between GKP code's performance and its dual lattice geometry. As an example, we numerically compare the optimal performance of finite-energy square-lattice GKP code with the performance of multiple existing decoders and observe a performance gap at order(s) of magnitude. At infinite energy, we show that when $\abs{\frac{\tau}{1 - \tau}}$ is an integer, GKP codes based on specific lattice families achieve the capacity of not only loss but also amplification. Here, $\tau = \eta\leq 1$ is the transmissivity for loss, and $\tau = G\geq 1$ is the gain for amplification. The lattice families can be, e.g., scaled symplectic self-dual lattices or NTRU-based lattices~\cite{Harrington2004AnalysisOQ, conrad2024good}. To the best of our knowledge, our results establish GKP code as the first structured bosonic code family that achieves the capacity of loss and amplification. Last but not least, our methodology based on the near-optimal fidelity only requires the conjecture of the input QEC code. Thus, it opens the door to computing the achievable rates of other multimode bosonic codes~\cite{Jain_2024} or qubits codes.

\section{Results}

% \begin{figure}[t!]
%     \centering
%     \includegraphics[width = 0.45 \textwidth]{pics/grid_size.pdf}
%     \caption{(a) Quantum communication (quantum capacity) setting (b) Channel composition \todo{Keep or not keep?}\label{fig:grid_size}}
% \end{figure}

\subsection{The Gottesman-Kitaev-Preskill code}

The GKP code~\cite{PhysRevA.64.012310} is a lattice-based bosonic code. Specifically, the stabilizer group of an infinite-energy GKP code corresponds to translations along the lattice vectors of a symplectic integral lattice. There are two key parameters of a GKP code, i.e. the underlying lattice, $\Lambda$, and the envelope parameter, $\Delta$. $\Delta$ is also referred to as the effective squeezing and determines the energy of the code. An infinite-energy ($\Delta = 0$) GKP code has codewords $\ket{\mu_L}_0$, stabilized by
\begin{eqnarray}
    \hat{S}(\mathbf{v}) = \hat{T}(\sqrt{2\pi}\mathbf{v}),
\end{eqnarray}
for all lattice points $\mathbf{v}\in \Lambda$. Here, the translation operator $\hat{T}(\mathbf{u}) := \exp{-i\mathbf{u}^T \Omega \hat{\mathbf{x}}}$, and $\Omega$ is the symplectic form. Finite- and infinite-energy GKP codes are connected through the envelope operator as
\begin{eqnarray}
    \ket{\mu}_\Delta = N_\mu e^{-\Delta^2 \hat{n}}\ket{\mu}_0,
\end{eqnarray}
where $N_\mu$ is the normalization constant, and $\hat{n}$ represents the number operator. While the above defines single-mode GKP codes, most of them carry over straightforwardly to multimode GKP codes.

The symplectic dual lattice of any lattice is defined as 
\begin{eqnarray}
\Lambda^\perp := \set{\vec{u}\vert \vec{u}^T\Omega \vec{v}\in \mathbb{Z}, \forall \vec{v}\in\Lambda}.
\end{eqnarray}
The commutation relation of the stabilizer group is guaranteed by $\Lambda\subseteq \Lambda^\perp$, a property known as symplectically integral or weakly self-dual. The logical operators of the GKP code correspond exactly to translations along $\mathbf{v}\in\Lambda^\perp\notin \Lambda$. Therefore, when the GKP code is analyzed against random displacement errors, there is a natural notion of distance given by the closest lattice point~\cite{PRXQuantum.4.040334, Conrad_2022} in the dual lattice
\begin{eqnarray}
    \abs{\Lambda^\perp}_{\text{min}} := \min_{\mathbf{v}\in\Lambda^\perp} \abs{\mathbf{v}}.\label{eq:dual_lattice_closest_distance}
\end{eqnarray}
While the focus of this work is on GKP code under more practical noise channels like loss, the same distance, surprisingly, plays a critical role.

\subsection{Quantum capacities and achievable rates}

\begin{figure*}
    \centering
    \includegraphics[width = 0.9\textwidth]{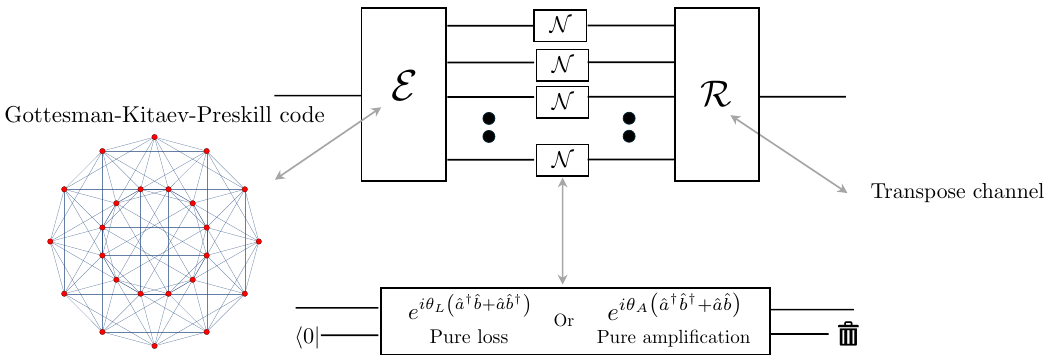}
    \caption{Coding scheme for one-way quantum information transmission without classical assistance. The input is any state within a $d_L$-dimensional logical space, which transmits through $N$ copies of the communication channel $\mathcal{N}$. In the end, the recovery decodes the logical information. In this work, we focus on $\mathcal{N}$ being either pure loss or pure amplification, whose dilated dynamics correspond to beam-splitter and two-mode squeezing respectively. The encoding is a Gottesman-Kitaev-Preskill code, whose near-optimal performance is tightly connected to the packing properties of its underlying lattice. The shown plot of the 24 cell corresponds to the Voronoi cells of the D4 lattice, which is the densest lattice packing in 4-dimensional space. The transpose channel is applied as the near-optimal recovery.}
    \label{fig:illustrative}
\end{figure*}

In this work, we focus on the communication model of transmitting quantum information through quantum channels without classical communications, also known as one-way communication~\cite{wilde2017quantum}. Suppose we are interested in transmitting information through channel $\mathcal{N}$, we have the option to perform arbitrary collective encoding, $\mathcal{E}$, of the information into $N$ copies of the physical systems. After the encoding, each system passes through an independent application of $\mathcal{N}$. At the end, a recovery, $\mathcal{R}$, restores and decodes the information stored. This procedure is illustrated in Fig.~\ref{fig:illustrative}. The performance of $\mathcal{E}$ and $\mathcal{R}$ is parameterized by the fidelity $F$ and the encoded logical dimensions $d_L$. Throughout our work, we adopt the channel fidelity as the fidelity metric, which is defined for any channel $\mathcal{Q}$ as~\cite{PhysRevA.75.012338, PhysRevA.54.2614}
\begin{eqnarray}
    F\left(\mathcal{Q}\right) := \bra{\Phi} \mathcal{Q} \otimes\mathcal{I}_A\left(\ket{\Phi}\bra{\Phi}\right)\ket{\Phi}
\end{eqnarray}
where $\ket{\Phi}$ is the purified maximally mixed state, and $\mathcal{I}_A$ is the identity channel acting on the \emph{ancillary} system. The channel fidelity characterizes how well entanglement is preserved~\cite{Tomamichel_2016} and has tight connections with metrics like average input-output fidelity~\cite{PhysRevA.60.1888, Nielsen_2002}. Generally speaking, there exists a tradeoff relation between $F$ and $d$: as the logical dimension increases, the redundancy in the physical system decreases, providing less protection against the loss of information. Conveniently, the achievable rate combines the above and serves as a unified benchmark~\cite{Wang_2022}, defined as 
\begin{eqnarray}
    R(\mathcal{E},\mathcal{R}) := \lim_{N\to\infty}\frac{1}{N}\log_2 d_L(\mathcal{E},\mathcal{R}), \text{ as } 1-F\to0.
\end{eqnarray}
Intuitively, the achievable rate is the maximum number of logical qubits transmitted per channel used in the asymptotic limit while the encoded information can still be perfectly decoded. The (one-way) quantum capacity follows as 
\begin{eqnarray}
        C_Q := \sup_{\mathcal{E}, \mathcal{R}} R(\mathcal{E},\mathcal{R}),
\end{eqnarray}
which sets an upper bound on all possible achievable rates. Thus, by definition, if a code's rate achieves the channel's capacity, the code is one of the optimal codes under the given channel. However, this definition has not been very useful in determining the capacities since it involves optimization over both encodings and recoveries. As an alternative, a seminal result~\cite{1377491, PhysRevA.55.1613, Shor_lecture_notes} on quantum capacity states that
\begin{eqnarray}
    C_Q = \lim_{N\to\infty}\max_{\rho\in \mathcal{D}(\mathcal{H}_0^{ N})}\frac{1}{N}I_c(\hat{\rho}, \mathcal{N}^{\otimes N})\label{eq:coherent_information_bound}
\end{eqnarray}
where $\mathcal{H}_0^{N}$ is the Hilbert space $\mathcal{N}^{\otimes N}$ acts on and $I_c$ is the coherent information, defined in Methods. Eq.~\eqref{eq:coherent_information_bound} is a direct consequence of the existence of an encoder-decoder pair that can reliably transmit information at a rate equal to the coherent information. To compute the coherent information, one only needs to specify an input state, which provides a convenient approach to lower bound the capacity. Many bosonic channel capacity lower bounds are obtained in this manner, such as thermal states for pure loss, pure amplification, and Gaussian random displacement channels~\cite{PhysRevA.63.032312}. While most bosonic channels' capacities are not fully determined, the pure loss and pure amplification channels are amongst the few with known capacities~\cite{PhysRevA.63.032312, RevModPhys.84.621, Lami_2023}, thanks to them being degradable channels~\cite{wilde2017quantum}. Both channels can be understood through a dilated form~\cite{PhysRevLett.101.200504} $\mathcal{N}(\rho) = \Tr_E \left(\hat{U} \left(\rho\otimes \ket{0}\bra{0}\right)\hat{U}^\dagger\right)$, where the unitary corresponds to a beam-splitter and two-mode squeezing for pure loss and amplification, respectively. In practice, the environment would be a finite temperature bath, which gives rise to thermal loss or thermal amplification. Nevertheless, the environment temperature is generally small, and pure loss or amplification serves as a good approximation to the noise process. Henceforth, we refer to pure loss (pure amplification) as loss (amplification) for simplicity. These channels admit Kraus decompositions $\set{\hat{N}_l}$ as
\begin{eqnarray}
    \hat{N}_l = 
    \begin{cases}
        &\sqrt{\frac{1}{l!}\left(\frac{1-\eta}{\eta}\right)^l}\hat{a}^l\eta^{\hat{n}/2} \;\;\;\;\;\;\text{(Loss)}\\
        &\\
        &\sqrt{\frac{1}{l!}\frac{(G-1)^l}{G}}G^{-\frac{\hat{n}}{2}}\hat{a}^{\dagger l}\;\;\;\;\;\;\text{(Amp)},
    \end{cases}
\end{eqnarray}
and their channel capacities can be written as~\cite{RevModPhys.84.621}
\begin{eqnarray}
        C_Q = \max\left(\log_2\abs{\frac{\tau}{1-\tau}},  0\right),\label{eq:capacity}
\end{eqnarray}
where $\tau = \eta\leq 1$ for loss and $\tau = G\geq 1$ for amplification.

\subsection{The near-optimal fidelity}

The near-optimal fidelity~\cite{zheng2024near}, $\tilde{F}^{\text{opt}}$, benchmarks the near-optimal performance of a QEC code. Specifically, it provides a narrow two-sided bound over the optimal fidelity, $F^{\text{opt}}$, as 
\begin{eqnarray}
    \frac{1}{2}\left(1 - \tilde{F}^{\text{opt}} \right)\leq 1 - F^{\text{opt}} \leq 1 - \tilde{F}^{\text{opt}}.\label{eq:two_sided_bound}
\end{eqnarray}
Here, optimal fidelity is defined as the optimal performance over any completely positive trace-preserving recovery channels, which can be obtained through convex optimizations~\cite{PhysRevLett.94.080501, PhysRevA.65.030302, 8482307, PhysRevA.97.032346, PhysRevA.75.012338}. The key quantity towards computing the near-optimal fidelity is the QEC matrix, defined as
\begin{eqnarray}
    M_{[\mu  l], [\nu k]} = \bra{\mu_L} \hat{N}_l^\dagger \hat{N}_k\ket{\nu_L}, \label{eq:QEC_mat}
\end{eqnarray}
in index notation, where $\ket{\mu_L}$ are the logical codewords and $\hat{N}_l$ are the Kraus operators of the noise channel of interest. With the QEC matrix, the near-optimal fidelity can be computed through
\begin{eqnarray}
    \tilde{F}^{\text{opt}}=\frac{1}{d_L^2}\left\|\operatorname{Tr}_L \sqrt{M}\right\|_F^2,\label{eq:near_opt_F}
\end{eqnarray}
where the partial trace is taken over the code space indices, the matrix norm is the Frobenius norm, and $d_L$ is the logical dimension of the code. Therefore, the metric provides a computationally efficient method to obtain a rigorous approximation of any code's optimal performance. Importantly, the near-optimal fidelity can be achieved through the transpose channel~\cite{10.1063/1.1459754, Petz:1988usv, PhysRevLett.128.220502, PhysRevA.81.062342}, also known as the Petz recovery.

While Eq.~\eqref{eq:near_opt_F} is interesting in its own right, it is not straightforward to obtain analytical expressions from it. Inspired by the Knill-Laflamme condition~\cite{PhysRevA.55.900}, one can divide the QEC matrix into the correctable part, $I_L\otimes D$ and the residual uncorrectable part, $\Delta M := M - I_L\otimes D$. Here, $D$ is taken to be a diagonal matrix, which can be achieved either through truncation or through choosing the unitary gauge of the noise channel's Kraus representation. Qualitatively, $I_L\otimes D$ represents the effect of the noise that does not lead to ambiguity in the logical information. Therefore, expanding the matrix square root in Eq.~\eqref{eq:near_opt_F}, the near-optimal fidelity's perturbative expression can be written as~\cite{zheng2024near}
\begin{eqnarray}
    \tilde{\epsilon} := \frac{1}{d_L}\norm{f(D)\odot \Delta M }_F^2,\label{eq:F^TC_pert}
\end{eqnarray}
such that 
\begin{eqnarray}
    1-\tilde{F}^{\text{opt}} = \tilde{\epsilon} + O\left(\tilde{\epsilon}^{3/2}\right) \label{eq:F^TC_pert_approximation}
\end{eqnarray}
where we define $f(D) _{[\mu l],[\nu k]} = \frac{1}{\sqrt{D_{ll}} + \sqrt{D_{kk}}}$, and the Hadamard product $\left(A\odot B\right)_{ij} = A_{ij}B_{ij}$, i.e. element-wise multiplication. In Eq.~\eqref{eq:F^TC_pert_approximation}, the residual error is bounded under the assumption that the logical dimension, $d_L$, is a finite constant. It is important to note that this does not hold if one would like to consider the achievable rate of a code family in the asymptotic limit. Nevertheless, one can show that~\cite{zheng2024near} if $\Tr D = 1$,
\begin{eqnarray}
        \lim_{\tilde{\epsilon}\to 0} F^{\text{opt}} = \lim_{\tilde{\epsilon}\to 0}\tilde{F}^{\text{opt}} =1. \label{eq:pert_form_property}
\end{eqnarray}
This is particularly useful since it implies that to prove a rate is achievable, it is sufficient to show $\tilde{\epsilon}$ vanishes. As defined in Eq.~\eqref{eq:QEC_mat}, $M$ only depends on the codewords and noise Kraus operators. Thus, obtaining achievable rates through the near-optimal fidelity is relatively accessible since it only requires the encoder ansatz, and the decoder implicitly adopts a near-optimal decoder.

Throughout this work, our main focus is on the near-optimal fidelity. To avoid confusion, we denote all fidelity measures related to the near-optimal performance with a tilde. See Ref.~\cite{zheng2024near} for a detailed comparison between the near-optimal fidelity $\tilde{F}^{\text{opt}}$ and the optimal fidelity $F^{\text{opt}}$.

\subsection{Near-optimal performance of GKP codes}

\begin{figure*}
    \centering
    \includegraphics[width = 1.0 \textwidth]{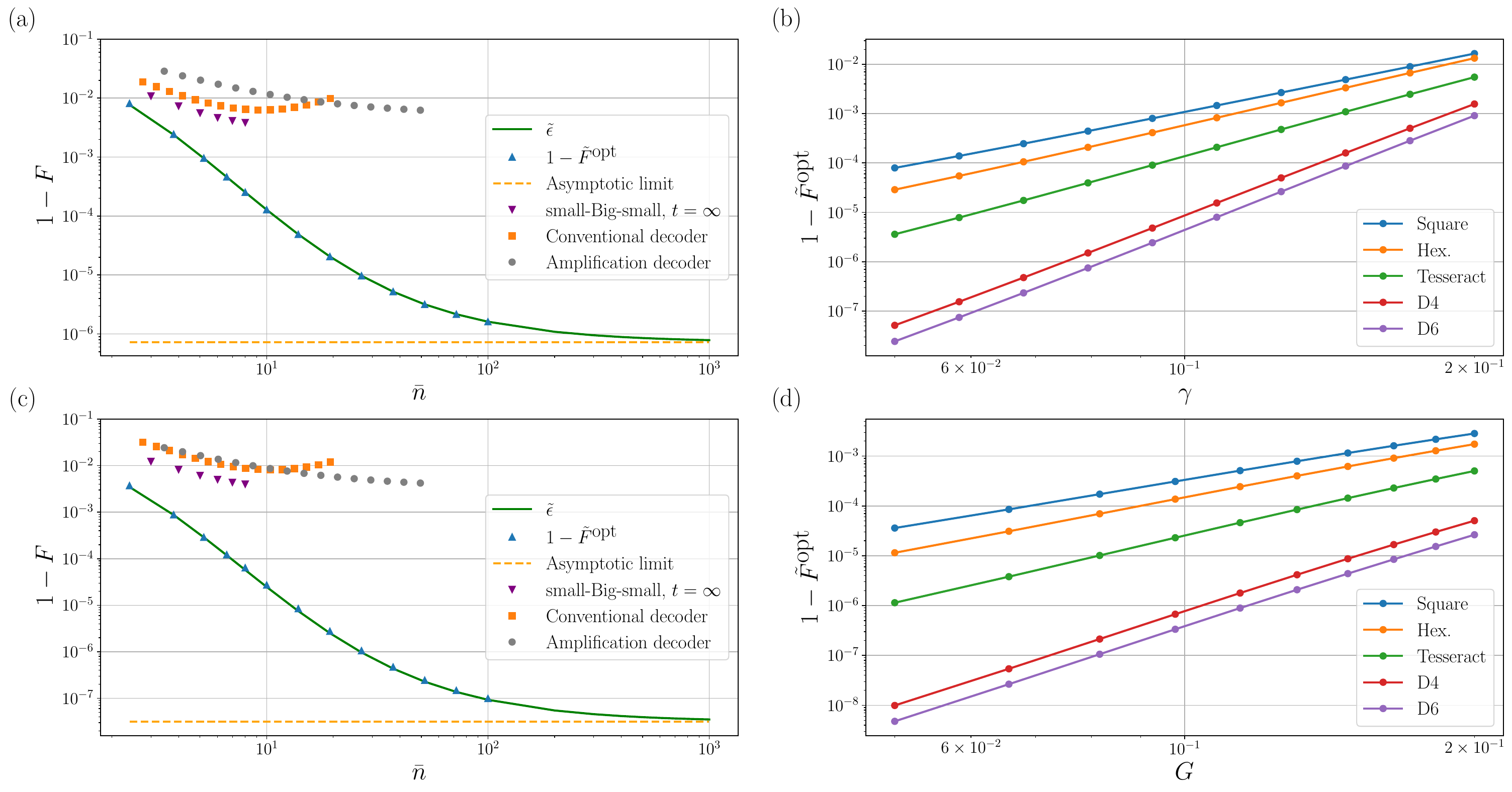}
    \caption{The performance of GKP codes under (a, b) loss and (c, d) amplification. (a, c) The infidelities of near-optimal recovery are compared against that of existing GKP decoders, such as AD~\cite{PhysRevA.97.032346, 8482307}, the conventional decoder~\cite{PhysRevA.64.012310}, and the sBs circuit~\cite{PhysRevLett.125.260509}. In particular, the sBs circuit is repeated until the system reaches a stabilized state. The noise channels are set to be (a) $\gamma = 10\%$ for loss and (c) $G = 1.1$ for amplification. (b, d) The performance of various lattices with parameters given in Tab.~\ref{tab:lattice_params} is presented. The encoding dimension is fixed to $d_L=2$ with $\bar{n}/N = 5$ average photon number per mode.}
    \label{fig:fidelities}
\end{figure*}

To analytically obtain the near-optimal performance of the GKP code, the QEC matrix defined in Eq.~\eqref{eq:QEC_mat} is the key ingredient. While it is in general tedious to compute the QEC matrix entries, it is convenient when we utilize the code structures. In this section, we focus on single-mode GKP codes with arbitrary lattice shapes for simplicity, but the results can be generalized to multimode GKP code. The QEC matrix for finite-energy GKP code involves three elements, the infinite-energy codewords $\ket{\mu}_0$ (see Supplementary), the envelope operator $e^{-\Delta^2 \hat{n}}$, and the combined Kraus operators $\hat{N}_{l}^\dagger \hat{N}_{k}$. The key observation is that the infinite-energy GKP states correspond to translationally-invariant lattices. Thus, displaced codewords are simply translated lattices, and their overlaps are straightforward to compute. To utilize such an observation, we can express all relevant operators with their displacement operator decompositions, i.e. their characteristic functions. For example, the Kraus operator combinations can be written as~\cite{SM, PhysRevA.97.032346}
\begin{eqnarray}
    &&\hat{N}_l^\dagger \hat{N}_{k}\nonumber \\
    &=& \int \frac{d^2 \alpha}{\pi} e^{-\frac{1}{2}\tau|\alpha|^2}\left\langle l\left|\hat{D}\left({\alpha^{\ast}}\right)\right| k\right\rangle \hat{D}\left(\alpha \sqrt{\abs{1-\tau}}\right)\label{eq:kraus_op_comb_noise_channel}
\end{eqnarray}
where $\ket{l}$ are Fock states with $l$ excitations, and $\tau = \eta\leq 1$ and $\tau = G\geq 1$ for pure loss and amplification respectively. The almost identical characteristics functions of these two noise channels' Kraus operator combinations are crucial and enable us to extend our result from one to the other.

In the following, we focus on single-mode GKP code under loss to highlight the key takeaways. The full expressions for multimode GKP code under loss or amplification are derived in the Supplementary Notes. For a GKP code with $d_L$ logical dimensions and lattice $\Lambda$ undergoing loss with probability $\gamma=1-\eta$, its QEC matrix expression can be written as
\begin{widetext}
    \begin{align}
        M_{[\mu, l],[\nu, k]} = \left(1 + \delta\right) \frac{\sqrt{R}^{l+k}}{\gamma n_\Delta + 1} \sum_{n_1,n_2\in \mathbb{Z}} e^{i\pi  n_{2}\left(n_{1}+\frac{\mu + \nu}{d_L}\right)}e^{-\frac{\pi}{2}\frac{1-\gamma}{ \gamma+\frac{1}{n_{\Delta}}}\left|\bold{L}\right|^2}\left\langle l\left|\hat{D}\left({\sqrt{\frac{\pi\left(n_\Delta + 1\right)}{\gamma n_\Delta + 1}}\left(L_1 - iL_2\right)}\right) \right| k\right\rangle\label{eq:QEC_matrix}
    \end{align}
\end{widetext}
where $\delta = O\left(e^{-\pi\left(n_\Delta + \frac{1}{2}\right)\abs{\Lambda^\perp}_{\text{min}}^2}\right)$, $t:= \frac{\gamma n_\Delta}{\gamma n_\Delta + 1}$, and $n_\Delta := \frac{1}{e^{2\Delta^2} -1}$. Here, we have defined the symplectic dual lattice to be $\Lambda^\perp$, and $\abs{\Lambda^\perp}_{\text{min}}$ is defined in Eq.~\eqref{eq:dual_lattice_closest_distance}. $\delta$ exponentially decreases with the input energy and is the direct consequence of the GKP codewords being non-orthogonal for finite $\Delta$. Suppose $M^\perp$ generates $\Lambda^\perp$, the vector $\bold{L}:= \left(M^\perp\right)^T \left(d_L n_1 + \mu -\nu, n_2\right)^T$ enumerates lattice points over a sublattice of the dual lattice such that 
\begin{eqnarray}
    \Lambda^\perp = \bigcup_{\mu - \nu \in \mathbb{Z}_{d_L}} \set{\bold{L}\vert n_1,n_2\in \mathbb{Z}}.\label{eq:dual_sublattice}
\end{eqnarray}
The appearance of the symplectic dual lattice is natural since we are working with displacement operators, and the displacements that exactly connect infinite-energy codewords are the ones that correspond to dual lattice points. 

Physically, $n_\Delta$ is tightly connected to GKP code's average energy. Suppose $\hat{P}_L$ is the codespace projection, and the average energy is defined as $\bar{n}:= \frac{1}{d_L}\Tr{\hat{P}_L\hat{n}}$, and it can be showed that for any GKP code~\cite{SM}
\begin{eqnarray}
    \bar{n} = \frac{1}{e^{2\Delta^2} -1} + \delta\approx n_\Delta.\label{eq:average_photon_number}
\end{eqnarray}
Since the gap of $\delta$ is exponentially suppressed with $\bar{n}$, it is already negligible for GKP codes with a few photons. The commonly cited expression of $\bar{n}\approx\frac{1}{2\Delta^2}-\frac{1}{2}$~\cite{PhysRevA.64.012310, PhysRevA.97.032346} can be seen as a taylor expansion of Eq.~\eqref{eq:average_photon_number} in small $\Delta$ limit.

With a different motivation, Ref.~\cite{PhysRevA.97.032346} derived the QEC matrix expression for single-mode square lattice GKP qubit code. While the general form is similar, their derivation is restricted to square lattices and assumed $\Delta\ll 1$ to simplify the derivations along the way. In contrast, the only approximation we made is in the prefactor of $\left(1+\delta\right)$. Moreover, our derivation is valid for arbitrary multimode lattices.

With the QEC matrix expression, we arrive at the near-optimal performance of the GKP code through the perturbative infidelity. The full infidelity expression contains summations over all dual sublattices given in Eq.~\eqref{eq:dual_sublattice}. Nevertheless, the leading order contributions are given by the closest lattice points of the dual lattice. Only leaving those leading terms, we arrive at
\begin{eqnarray}
    \tilde{\epsilon} &\approx& \frac{m}{2} e^{-2\pi \left(n_\Delta + \frac{1}{2}\right)\abs{\Lambda^\perp}_{\text{min}}^2} \times\nonumber\\
    &&\left( 8\sum_{\Delta l=0}^\infty\frac{R^{\Delta l}}{(R^{\Delta l} + 1)^2}I_{2\Delta l}(z) - I_0(z) -1\right), \label{eq:finite_energy_single_mode_fidelity}
\end{eqnarray}
where $m$ denotes the kissing number (the number of closest lattice points with distance $\abs{\Lambda^\perp}_{\text{min}}$), $I$ denotes modified Bessel functions, and $z := \pi (n_\Delta + 1) \sqrt{R}\abs{\Lambda^\perp}_{\text{min}}^2$. Some examples of single-mode GKP codes that are of interest include the square and hexagonal lattice GKP codes, whose relevant parameters are provided in Tab.~\ref{tab:lattice_params}.

\begin{table*}
    \centering
    \begin{tabular}{p{2cm}<{\centering} p{2.7cm}<{\centering} p{3.5cm}<{\centering} p{2.5cm}<{\centering} p{2.5cm}<{\centering} p{1cm}<{\centering}}
        \hline \hline
        Lattice type & Number of modes, $N$ & $\abs{\Lambda^\perp}^2_{\text{min}}$ & Kissing number, $m$ & $\text{det}(\Lambda)$\\
         \hline 
         Square & 1 & 1 & 4 & 1\\
        %  \hline 
        Hexagonal & 1 & $2/\sqrt{3}$ & 6 & 1\\
        Tesseract & 2 & $1/\sqrt{2}$ & 8 & 4\\
        $D4$ & 2 & $\sqrt{2}$ & 24 & 1\\
        $D6$ & 3 & $1$ & 12 & 4\\
        $E8$ & 4 & $2$ & 240 & 1\\
        Leech & 12 & 4 & 196560 & 1\\
         \hline \hline 
    \end{tabular}
    \caption{Examples of symplectic integral lattices parameters~\cite{Florian_2004, PRXQuantum.3.010335}. The kissing number is the number of closest lattice points in the dual lattice. The determinant of the lattice is connected to the GKP code's logical dimension through $\text{det}(\Lambda) = d_0^2$.}
    \label{tab:lattice_params}
\end{table*}

While Eq.~\eqref{eq:finite_energy_single_mode_fidelity} and similar approximations closely follow the near-optimal fidelity, it does not establish a rigorous bound on it. Therefore, we derive a rigorous upper bound in the infinite energy limit, i.e. $n_\Delta\to \infty$
\begin{eqnarray}
        \lim_{\bar{n}\to \infty} \tilde{\epsilon} \leq \frac{1}{4}\sum_{\bold{x}\in \Lambda^\perp\notin\set{\bold{0}}}e^{-\pi\frac{1-\gamma}{\gamma}\left|\bold{x}\right|^2 } \label{eq:inf_energy_performance}
\end{eqnarray}
for any multimode GKP code. Here, the summation sums over all lattice points of the dual lattice except the origin. The equality sign is taken for single-mode codes. Since the GKP code's performance improves with energy, this serves as an asymptotic limit. For the detailed derivation, see Lemma 15 in the supplemental notes~\cite{SM}. 

As an example, if we consider the leading contributions from the closest lattice points, the single-mode square lattice GKP code's performance is given by $e^{-\frac{\pi}{d_L}\frac{1-\gamma}{\gamma}}$. Such scaling can be compared with the performance of GKP code with AD, which gives an infinite-energy performance of $e^{-\frac{\pi}{4d_L}\frac{1-\gamma}{\gamma}}$~\cite{PhysRevA.97.032346} for square lattice. Our infidelity expression for near-optimal recovery shows a significant improvement over that of AD -- increasing the exponent by a factor of 4. This enhancement is not only practically important for high-performance bosonic QEC, but also crucial for our later proof that the GKP code achieves quantum channel capacity.

When the GKP code is considered in the context of displacement noise, the connection between the dual lattice and the GKP code's performance is apparent. However, to the best of our knowledge, this is the first time GKP's performance under loss (or amplification) has been rigorously connected to its dual lattice. The dual lattice spacings and the relevant phase have their origins in the QEC matrix expression since we decomposed the noise channel Kraus operators as displacement operators.

In Fig.~\ref{fig:fidelities}, we explicitly compare the performance of GKP decoders and lattices under loss and amplification. In subfigures (a, b) and (c, d), the noise channels are loss channels with loss probability $\gamma = 0.1$ and amplification channels with gain $G = 1.1$, respectively. In (a, c), we focus on the square lattice GKP code encoding $d_L=2$ logical dimensions. We present a comparison between the analytical and perturbative expression given in Eq.~\eqref{eq:finite_energy_single_mode_fidelity} (green lines) and the \textit{exact} values of the near-optimal infidelity, $1-\tilde{F}^\text{opt}$ (blue triangles). The exact values are computed through Eq.~\eqref{eq:near_opt_F}. Clearly, the perturbative expression approximates the exact values very well. Moreover, the asymptotic limits are shown in orange dashed lines, which performs surprisingly well: even with $10\%$ loss, GKP code can still achieve infidelities of order $10^{-6}$.

Numerically, one can observe that GKP code's performance improves monotonically as the average energy increases~\cite{zheng2024near}, which is distinct from other bosonic codes: there has been the misconception that it is more difficult to restore the lost information when the number of lost photons increase. Past results have found an optimal photon number for most other bosonic codes~\cite{PhysRevA.97.032346, PhysRevX.6.031006, PhysRevLett.119.030502}, and the GKP code is a first instance that does not have such a limitation. There could exist other unexplored bosonic codes that have similar features as well, such as the squeezed cat code~\cite{XuZheng2023, Schlegel_2022, Hillmann_2023}, which is a simplified GKP code with only a few peaks.

While the near-optimal fidelities reveal the limit of GKP code, we examine the performance of a few existing GKP code recoveries. For example, proposed along with the GKP code~\cite{PhysRevA.64.012310}, the conventional decoder is designed against displacement errors and recovers the information through modular quadrature measurement and adaptive displacement recovery. As shown in Fig.~\ref{fig:fidelities} (a, c), when it is applied against loss or amplification noise, its performance does not always improve with increasing photon numbers. Small-Big-small (sBs)~\cite{PhysRevLett.125.260509} is arguably the state-of-the-art recovery against loss, and it is widely adopted in superconducting circuit~\cite{Campagne_Ibarcq_2020, Sivak_Nature_2023, Eickbusch_2022} and ion trap experiments~\cite{Fl_hmann_2019, deNeeve_2022} for the correction and preparation of GKP states. A round of sBs consists of a short circuit comprised of conditional displacements and ancilla qubit rotations. As an optimistic benchmark, we show its stabilized fidelity after multiple rounds of sBs. Last but not least, AD is the decoder that established the state-of-the-art rate of GKP code under pure and thermal loss~\cite{8482307}. At the cost of injecting additional noise, it converts loss or amplification channels into Gaussian random displacement channels, whose optimal decoder at infinite energy is known to be the conventional decoder. The simulation details and exact definition of the recoveries are included in Methods.

 Overall, the comparisons between recoveries lead to the conclusion that in experimentally accessible regimes, e.g. $\bar{n}\sim 10$ and $\gamma = 10\%$, GKP code's performance can be improved by order(s)-of-magnitude solely through refining its recovery. The improvement can be even more significant for smaller $\gamma$. Since the near-optimal fidelity is achieved by the transpose channel, it is promising to approximate the transpose channel or other near-optimal recovery channels through techniques such as optimal control.

In Fig.~\ref{fig:fidelities} (b, d), we show the near-optimal infidelities that correspond to various lattices types. The average photon number per mode is fixed to $\bar{n}/N = 5$, and all codes encode $d_L = 2$ dimensional logical space. The lattices and their respective number of modes, $N$, shortest dual lattice vector sizes, and the kissing numbers are given in Tab.~\ref{tab:lattice_params}. In particular, it is apparent that as we increase the number of modes that encode the same amount of information, we effectively have more redundancy that we can use to protect the information, which is quantified by the distance of the dual lattice $\abs{\Lambda^\perp}_{\text{min}}$. Some lattices utilize space more effectively than others, which is related to the well-known sphere packing problem. For example, in two- and four-dimensional space, the tightest packing is achieved by the hexagonal and the $D4$ lattice respectively, rendering them more potential in correcting errors.
\begin{figure*}
    \centering
    \includegraphics[width = 1.0 \textwidth]{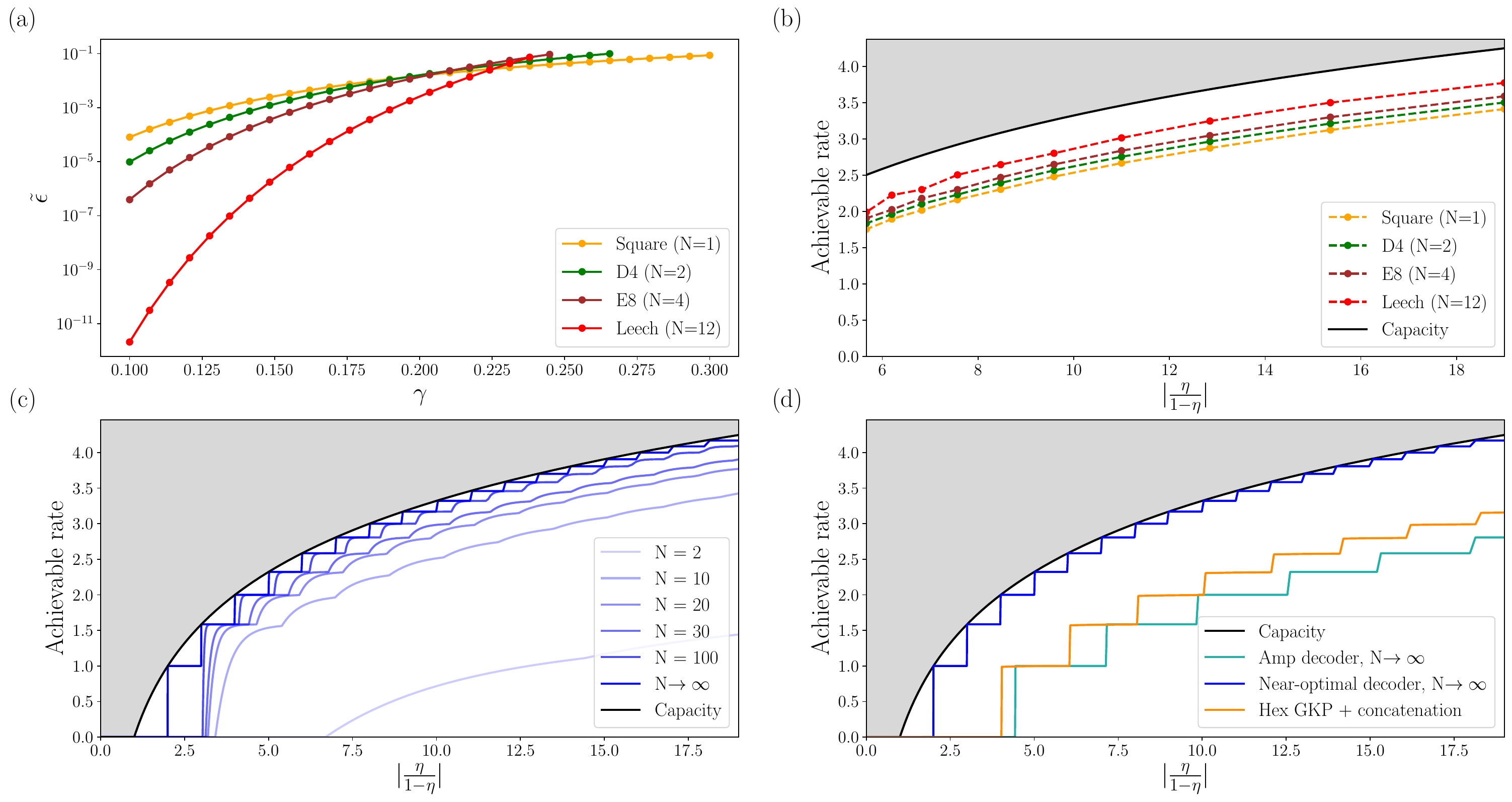}
    \caption{(a) The infidelities of GKP code are based on a few lattices with varying numbers of modes with a constant encoding rate such that $d_L = 3^N$. (b) Coherent information of GKP codes is based on various lattices, which are achievable rates. The coherent information is computed through the Werner states as given in Eq.~\eqref{eq:werner_coherent_info}. (c) Lower bounds on the achievable rates of GKP codes based on scaled symplectically self-dual lattices with an increasing number of modes, $N$. The infidelity is upper bounded through Eq.~\eqref{eq:upper_bound_infidelity_selfdual}. (d) A comparison of the rates achieved through the amplification decoder, the near-optimal decoder, and the single-mode hexagonal GKP code concatenated with random stabilizer codes for asymptotic system sizes. All subplots are focusing on infinite-energy GKP codes. \label{fig:rates}}
\end{figure*}

\subsection{Achievable rates}

In the previous section, we focused on the fidelity of the recovery, i.e. how well entanglement is preserved. Another equally important metric is the amount of logical information transmitted. In fact, the near-optimal fidelity derived has close connections with the logical dimension, $d_L$, as shown in Eq.~\eqref{eq:finite_energy_single_mode_fidelity}. In Fig.~\ref{fig:fidelities}(b,d), the comparison is made between various lattices transmitting the same amount of information, $d_L = 2$. However, this is not necessarily a fair comparison since some use more resources, e.g. number of modes, than others. In other words, the series of lattices we examine have a vanishing rate. In this light, we show a similar performance plot in Fig.~\ref{fig:rates}(a). The difference is that here we are showing the upper bound for the infinite-energy performance, as given in Eq.~\eqref{eq:inf_energy_performance}, and the logical dimensions are set to $d_L = 3^N$ with $N$ being the number of modes used for each lattice type. Recall that the quantum capacity of loss channel is $\log_2\frac{1-\gamma}{\gamma}$ number of qubits per mode. This essentially means that at this logical dimension rate, there exists an asymptotic code family that can transmit information arbitrarily well as we increase the number of modes if $\gamma<\frac{1}{d_L+1} = 0.25$. Qualitatively, Fig.~\ref{fig:rates}(a) supports the existence of such a threshold for the GKP code: when the loss probability gets sufficiently small, the fidelity of transmission improves with increasing system size, and vice versa. The crossover points of these lattices are not strictly $0.25$ for several reasons. Firstly, while the threshold is defined in the asymptotic limit and corresponds to certain good lattice families, the lattices shown are low-dimensional ones and are not guaranteed to be in the same optimal lattice family. Moreover, the performance crossing points increase towards 0.25 as we increase the number of modes, supporting the existence of a threshold. Last but not least, our performance metric is the near-optimal fidelity, which does not necessarily have the same crossover point as the optimal fidelity at finite error probability.

The achievable rate serves as a single benchmark that incorporates both the fidelity and the logical dimension. Other than following the original definition of the achievable rate, we can lower bound the achievable rates through coherent information. In particular, with a code of fidelity $F$ encoding $d_L$ dimensional logical space, it can be converted into a Werner state with coherent information~\cite{8482307}
\begin{eqnarray}
        && I\left(F, d_L\right) \nonumber\\
        &=& \log_2 d_L + F\log_2 F + (1-F)\log_2\frac{1-F}{d^2_L-1},\label{eq:werner_coherent_info}
\end{eqnarray}
whose regularized version, $I\left(F, d_L\right)/N$, is an achievable rate. In Fig.~\ref{fig:rates}(b), the achievable rates based on each base lattice are shown. For each base lattice, we compute their achievable rates with varying logical dimensions and plot the supremum. The logical dimensions are varied through scaling the base lattices. However, not all logical dimensions can be achieved through this approach. Assuming the base lattice has a logical dimension of $d_0$, we are only able to achieve logical dimensions of the form~\cite{PRXQuantum.3.010335, PhysRevA.64.062301, PhysRevA.64.012310}
\begin{eqnarray}
        d_L= \lambda^N d_0\label{eq:scale_log_dim}
\end{eqnarray}
where $\lambda$ is any positive integer and $N$ is the number of modes. See Methods for more details. Therefore, it is possible for the envelope of the achievable rates to be not smooth. Moreover, the rates are computed with $\tilde{\epsilon}$ as an approximation of $\tilde{F}^{\text{opt}}$. As shown in Eq.~\eqref{eq:F^TC_pert_approximation}, such an approximation only holds when the code performance is expected to be promising. Recall that the rates are optimized over all possible logical dimensions, we set the numerical constraint that we only optimize over logical dimensions such that $\tilde{\epsilon}\leq 10^{-2}$ to ensure the rates we compute are reliable.

By far, the shown lattices are specific instances and are not in the same asymptotic family. We can instead focus on a lattice family called the self-dual symplectic lattices~\cite{PhysRevA.64.062301, Harrington2004AnalysisOQ, PhysRevA.64.012310, Conrad_2022}. Such lattices have dual lattices identical to the prime lattices. A direct consequence is that the base lattice only encodes a one-dimensional space, i.e. $d_0=1$, and it can be scaled to encode $\lambda^N$ logical dimensions. A crucial property~\cite{PhysRevA.64.062301, busar_sarnak_1994} of the self-dual family is that there exists a self-dual lattice, $\Lambda$, such that 
\begin{eqnarray}
        \sum_{\bold{x}\in \Lambda\notin \set{\bold{0}}} f(\bold{x}) \leq \int f(\bold{x}) d^{2N} \bold{x}
\end{eqnarray}
for any integrable and rotationally invariant function $f$. As a result, for an infinite-energy GKP code that corresponds to a scaled self-dual lattice, $\sqrt{\lambda} \Lambda$, its infidelity is upper bounded through
\begin{eqnarray}
    \lim_{\bar{n}\to \infty} \tilde{\epsilon} \leq \frac{1}{4}\left(\lambda \frac{\gamma}{1-\gamma}\right)^N, \label{eq:upper_bound_infidelity_selfdual}
\end{eqnarray}
which vanishes if the dimension per mode $\lambda \leq \frac{1-\gamma}{\gamma} + \delta$ and $\delta\to0$. See Theorem 1 in supplemental notes for details~\cite{SM}. Recall the property of the near-optimal infidelity's perturbative form given in Eq.~\eqref{eq:pert_form_property}, the condition for a vanishing $\tilde{\epsilon}$ corresponds to the condition of perfect entanglement preservation. Following a similar derivation for the amplification channel, GKP codes achieve rates of 
\begin{eqnarray}
        R = \max\left(\log_2\floor*{\frac{\tau}{\abs{1-\tau}}},  0\right) \label{eq:multimode_rate}
\end{eqnarray}
where $\tau = \eta\leq 1$ for loss and $\tau = G\geq 1$ for amplificatio. Here, $\floor*{\cdot}$ denotes the floor function. In Fig.~\ref{fig:rates}(c), we show the rates that are achievable through different dimensions of the self-dual lattice. For a finite number of modes, such a bound would not give a vanishing error rate, but we can still obtain a rate through Eq.~\eqref{eq:werner_coherent_info}. It is clear that as we increase the number of modes we use, we can achieve superior rates with genuine multimode lattices compared to schemes based on low-dimensional lattices. For genuine lattices at $N\to\infty$, there is a step-like feature due to the floor function in Eq.~\eqref{eq:multimode_rate}. Such a feature appeared in relevant works where a similar construction was used to achieve the lower bound of the Gaussian random displacement channel or finite rate for loss~\cite{PhysRevA.64.062301, 8482307}. Therefore, the GKP code achieves the channel capacity for loss or amplification only when $\frac{\eta}{\abs{1-\eta}}\in \mathbb{Z}$. However, such a step-like feature is an artificial feature because of the limited choices of logical dimensions through scaling self-dual lattices. It is likely that with some alternative lattice constructions, the GKP code could always achieve the channel capacity. Another construction for symplectic lattices with goodness properties is the NTRU-based lattices~\cite{conrad2024good}, but it faces the same restriction on the logical dimension. There are algorithms that generate these lattice families with high probabilities~\cite{Harrington2004AnalysisOQ, conrad2024good}.

In the past, there have been multiple works on the achievable rates of GKP code. In particular, Ref.~\cite{8482307} established the state-of-the-art rate against loss channel before this work. The key to their approach is the amplification decoder (AD), which converts the (thermal) pure loss channel into a Gaussian random displacement channel through an amplification channel. However, the amplification channel inevitably introduces noise from the environment port of the amplification channel, leading to the suboptimality of AD. The rates of GKP code under AD are shown in Fig.~\ref{fig:rates}(d), which has a finite gap of $\log_2 e\approx 1.4$ compared to the capacity. Whether the gap was due to the deficiency in the recovery was left as an open question. Our result demonstrates that the GKP encoding with a (near-)optimal recovery is sufficient to close such a gap.

An alternative method to obtain rates from bosonic codes is to concatenate them with qudit codes. For example, on the lower level, we can select infinite-energy single-mode hexagonal GKP code. If we perform twirling, the noise channel can be understood as a generalized Pauli channel or a depolarizing channel, depending on the specific twirl we perform. The only constraint is that the summed error probability should equal $1-F$. To obtain a lower bound on the rates, we assume the channel is twirled such that we are dealing with a qudit depolarizing channel, i.e. the error probability vector is $\bold{p} =\left(F, \frac{1-F}{d_L^2 - 1}, \dots, \frac{1-F}{d_L^2 - 1}\right)$. The achievable rate of concatenating it with random stabilizer codes is given by the hashing bound (see Methods for more details) and is shown in Fig.~\ref{fig:rates}(d). Note that the hashing bound is strictly below the achievable rate of the genuine multimode lattices. As pointed out in Ref.~\cite{PhysRevA.64.062301}, while concatenated single mode GKP lattices can also be understood as lattices living in $2N$ dimensions, they do not achieve as good a packing as the sphere-packing lattices like the scaled self-dual lattices. 

One interesting conclusion from our approach based on scaled self-dual lattice is that there exists a GKP code that simultaneously achieves the capacity of loss and amplification given that 
\begin{eqnarray}
        \frac{1-\gamma}{\gamma} = \frac{G}{G-1}\label{eq:equivalence_loss_amp}
\end{eqnarray}
with tailored decoders. This observation is nontrivial and has its root in the almost identical form of loss and amplification Kraus operator combinations as given in Eq.~\eqref{eq:kraus_op_comb_noise_channel}. A relevant discussion is presented in the supplementary materials.

\section{Discussion}

In this work, we have studied the (near-)optimal performance of the GKP code under loss and amplification noise. The results are analytical and general such that they are applicable to multimode GKP codes over arbitrary input energy and lattice shapes. In particular, at experimentally accessible parameters, we presented a comprehensive comparison between the performance of existing decoders and a near-optimal decoder, i.e. the transpose channel. The limit of the GKP encoding is shown to be remarkably well, and it is possible to further suppress GKP code's error by order(s) of magnitude through refining the existing recoveries. Through exploiting the similar structures of the loss and amplification channels, we established the optimality of infinite-energy GKP code: it achieves the capacity for both channels simultaneously when $\abs{\frac{\tau}{1 - \tau}}$ is an integer. Here, $\tau$ is the transmissivity (gain) for loss (amplification). To the best of our knowledge, GKP code is the first known structured bosonic code family that achieves the capacity of bosonic channels that has a known capacity, i.e. pure loss and pure amplification.

This work applied a key technique called the near-optimal fidelity, which is based on the transpose channel, sometimes known as the Petz recovery~\cite{10.1063/1.1459754, Petz:1988usv, PhysRevLett.128.220502, PhysRevA.81.062342}. The Petz recovery has been known to have many information theoretic applications~\cite{PhysRevA.77.034101, 10.1116/5.0060893, PhysRevX.9.031029, Buscemi_2021}. Our work first demonstrated its power in determining a QEC code's achievable rate. It enables us to explore the performance and to determine the rates for codes with guaranteed near-optimality. Thus, it opens the door to computing the achievable rates of other multimode bosonic or qubit codes, such as the quantum spherical codes~\cite{Jain_2024}. In addition, while we have computed the rates of GKP codes under pure loss (amplification), its rates under thermal loss (amplification) remain unknown. With numerical evidence of GKP code's performance under thermal loss~\cite{8482307}, it would be very interesting to see if GKP code's rate saturates or even surpasses the state-of-the-art capacity lower bounds for these noisy channels.

\section{Methods}

\textbf{Coherent information.} Consider a quantum channel $\mathcal{N}$ and an input quantum state $\hat{\rho}$, the coherent information is given by 
\begin{eqnarray}
        I_c\left(\hat{\rho}, \mathcal{N}\right):= S\left(\mathcal{N}\left(\hat{\rho}\right)\right) - S\left(\mathcal{N}^c\left(\hat{\rho}\right)\right),
\end{eqnarray}
where $\mathcal{N}^c$ is the complementary channel, and $S$ denotes the von Neumann entropy. The (regularized) coherent information is tightly connected to the quantum channel capacity~\cite{1377491, PhysRevA.55.1613, Shor_lecture_notes} via Eq.~\eqref{eq:coherent_information_bound}. As a corollary, it has been established that for any coherent information, there exists a code that can achieve such a rate, which are lower bounds of the channel capacity. Consequently, given a state with a known channel fidelity, one can obtain a lower bound of the achievable rate by converting it into a Werner state after local operations and classical communications. For more details, see Refs.~\cite{8482307} and references therein.

Another tool we have applied is the qudit hashing bound. Suppose we consider a generalized Pauli channel as the noise channel in the Weyl operator basis, the hashing bound provides an achievable rate of random stabilizer codes and can be written as 
\begin{eqnarray}
        D_{\mathcal{N}} = (1-H_{d_L}(\bold{p}))\log_2 d_L
\end{eqnarray}
where $d_L$ is the number of levels of the qudits, $\bold{p}$ is the probability vector, $H_{n}$ is the entropy with base $n$, i.e. $H_n(\bold{x}):= -\sum_i x_i\log_n x_i$. $\log_2 d_L$ converts the unit from qudit to qubit. Worth noting, there are also other more general definitions of the Hashing bound (see, e.g., Refs.~\cite{Leviant_2022} and references therein) for general channels, but it converges to the expression above for qudit Pauli channels.

\textbf{The near-optimal channel fidelity and the transpose channel.} The near-optimal fidelity is a benchmark of a code's performance proposed in Ref.~\cite{zheng2024near}. It provides a tight two-sided bound of the optimal fidelity as given by Eq.~\eqref{eq:two_sided_bound}, where the fidelity metric we choose is the channel fidelity, also known as the process fidelity in some scenarios. The near-optimality of the near-optimal fidelity is a direct consequence of it being the performance of the transpose channel (TC) recovery, $\mathcal{R}^{\text{TC}}$, i.e.
\begin{eqnarray}
    \tilde{F}^{\text{opt}}:=F\left(\mathcal{R}^\text{TC}\circ \mathcal{N}\circ\mathcal{E}\right)
\end{eqnarray}
with $\mathcal{N}$ and $\mathcal{E}$ denoting the noise and encoding channels, respectively, and $\circ$ represents channel compositions. The transpose channel, also known as the Petz recovery in some circumstances, has been proven to be near-optimal, and possesses a constructive Kraus form,
\begin{eqnarray}
        \hat{R}^{\text{TC}}_i = \hat{P}_L \hat{N}_i^\dagger \mathcal{N}\left(P_L\right)^{-1/2}.\label{eq:TC_kraus_form}
\end{eqnarray}
Here, $\hat{N}_i$ are the Kraus operators of the noise channel, $\hat{P}_L$ is the logical codespace projector, and the inverse should be understood as the pseudoinverse. Therefore, the performance of any code that is given by its near-optimal fidelity is not only an existence proof of the recovery, but the explicit channel that realizes such a performance is also known.

\textbf{Basics of lattice theory and multimode GKP codes.} In this section, we give a brief introduction to lattice theory and its connection to GKP codes. For a more detailed explanation, we refer the readers to the supplementary materials and references therein. 

All $N$-mode GKP codes are in one-to-one correspondence with lattices in $2N$ dimensions. A lattice, $\Lambda$, can be specified through its generator matrix
\begin{eqnarray}
    M:= \begin{pmatrix}
    \bold{v}_1^T\\
    \vdots\\
    \bold{v}_{2N}^T\\
\end{pmatrix},
\end{eqnarray}
where $\bold{v}_i\in \mathbb{R}^{2N}$. The stabilizer group of the GKP code is exactly the translation operator that corresponds to the lattice point set. To have the stabilizers commute with each other, there is an additional constraint that the lattice is symplectically integral, also known as weakly self-dual, i.e. $\Lambda \subseteq \Lambda^\perp$. Here, $\Lambda^\perp$ represents the symplectic dual lattice of $\Lambda$. The self-dual lattices, which played a crucial role in our proof of the rates of the GKP code, are the special case where $\Lambda^\perp = \Lambda$.

The logical dimension of the GKP code is given by 
\begin{eqnarray}
        d_L = \text{det}(M)\in \mathbb{Z}.
\end{eqnarray}
One important approach to obtaining new lattices is to simply scale the base lattice in a uniform fashion, i.e. $M = \sqrt{\lambda} M_0$. Then, the scaled lattice can encode $d_L = \lambda^N d_0$. Therefore, when $d_0, d_L\in \mathbb{Z}$, we require $\lambda$ to be an integer as well, which leads to the conclusion that scaled lattices cannot achieve arbitrary logical dimensions, except when $N=1$ and $d_0 = 1$. To give some examples of the base lattices, the square and hexagonal lattices have generator matrices 
\begin{eqnarray}
    M_{\text{sq}} = \begin{pmatrix}
        1 & 0\\
        0 & 1
    \end{pmatrix},
    M_{\text{hex}} &=& \sqrt{\frac{2}{\sqrt{3}}}\begin{pmatrix}
        1 & 0\\
        -\frac{1}{2} & \frac{\sqrt{3}}{2}
    \end{pmatrix}.
\end{eqnarray}
For the generator matrices of other lattices, see Ref.~\cite{PRXQuantum.3.010335} and references therein.

\textbf{Recoveries of the GKP code.} In this section, we review the existing recoveries protocol that are compared against in the main text and the simulation details. To start with, the conventional decoder is the decoder proposed by the original work~\cite{PhysRevA.64.012310}. Consider a single-mode square lattice GKP code, the decoder measures the quadrature operators modular the lattice spacings and performs a corresponding displacement. It can be implemented through, e.g., a supply of ancillary GKP states along with SUM gates~\cite{PhysRevA.64.012310, PhysRevLett.125.080503}, phase estimations~\cite{PhysRevA.93.012315}, or teleportation-based error correction~\cite{PRXQuantum.3.010315}. However, while such a strategy is optimal for infinite-energy GKP code under displacement channel, it is far from optimal when we consider realistic noises since it does not take into account, for example, the contraction features of loss. 

A closely related decoder is the amplification decoder (AD)~\cite{PhysRevA.97.032346, 8482307}. The key idea is that through channel composition, we can convert loss (amplification) channel into Gaussian random displacement noise channel through amplifying (contracting) the logical states,
\begin{eqnarray}
        \mathcal{N}_L(\eta)\circ \mathcal{N}_A(G) = \mathcal{N}_{B_2}(\sigma^2)
\end{eqnarray}
with $\eta G = 1$ and $\sigma^2 = 1-\eta$. Here, $\mathcal{N}_L$, $\mathcal{N}_L$, and $\mathcal{N}_{B_2}$ represent pure loss, pure amplification, and Gaussian random displacements, respectively. After the noise channel is converted into Gaussian noise, it is natural to then append it with the conventional decoding. Nevertheless, the channel conversion process injects additional uncertainty into the system since it involves the participation of an input ancillary mode in a vacuum, which is eventually traced out. Therefore, it is also suboptimal. The simulations of both the conventional decoder and AD are performed through the Zak basis representation of the GKP code~\cite{Shaw_2024}, which is computationally efficient. 

The small-Big-small (sBs) is another recovery that is practical and has attracted experimental interest. The core advantage of it is its components are all experimentally accessible and are free from measurements, which could be the bottleneck for both operating speed and quality for many experimental platforms. More specifically, one round of sBs recovery circuit that stabilizes the position quadrature consists of a series of unitary operations on the oscillator and an ancillary qubit,
\begin{eqnarray}
    C\hat{D}\left(\delta/2\right)\hat{R}^\dagger (\pi/2)C\hat{D}\left(-il\right)\hat{R}(\pi/2)C\hat{D}\left(\delta/2\right)
\end{eqnarray}
where the conditional displacement is defined as $C\hat{D}\left(\alpha\right):= \ket{0}\bra{0}\otimes \hat{D}(\alpha/2) + \ket{1}\bra{1}\otimes \hat{D}(-\alpha/2)$ and $\hat{R}(\pi/2) := \text{exp}\left\{-i\hat{\sigma}_x\pi/4\right\}$. The displacement lengths $\delta \approx \Delta^2 l$ and $l = \sqrt{2\pi}$. The ancilla qubit is initialized in $\ket{+}$ and reset at the end. The sBs circuit is developed through the trotterization of the nullifier of GKP's finite-energy stabilizers. Since sBs does not stabilize the finite-energy GKP codespace with the Gaussian envelope, our simulation first discovers the stabilized codewords by initializing the oscillator in GKP codewords and applying the sBs circuits till convergence. Then, the stabilized codewords undergo the noise channel of interest and are later recovered through repeated applications of the stabilization circuits. The performance is evaluated through comparing the fidelities between the states before the noise channel and after the recovery.

\textbf{Approximate implementation of the near-optimal recovery.} While our core message is the analytical performance of the GKP code, there are also implications of practical interest. One of them is the possibility to design more optimal recoveries inspired by the near-optimal recoveries such as the transpose channel. 

Suppose one focuses on the two leading orders of the TC Kraus operators, i.e. $i =0,1$ in Eq.~\eqref{eq:TC_kraus_form}, it is straightforward to see that it roughly corresponds to a projection onto the no-error and one-error subspace and a rotation back to the codespace. One property of GKP code is that if it only encodes $d_L=2$~\footnote{Different from common believes~\cite{PhysRevX.10.011058}, this statement is not true for GKP encoding arbitrary logical dimensions.}, it possesses a 2-fold rotation symmetry~\cite{PhysRevX.10.011058} such that both of its codewords live in the even parity subspace. Therefore, take square lattice qubit GKP code as an example, one can approximate the error subspace projection through a parity measurement. Then, one can apply a unitary $\hat{U}_{0\to L}$ or $\hat{U}_{1\to L}$ depending on the parity being measured to be even or odd, respectively. Here, the unitary $\hat{U}_{i\to L}$ rotates the oscillator from the $i$-th error subspace back to the logical codespace. The exact implementation of such a unitary can be discovered through techniques such as optimal control. While we have been focused on the explicit implementation of the first two Kraus operators, it is worth noting that it is in principle possible to extend such analysis to higher orders or through systematic techniques that implement general CPTP maps~\cite{PhysRevB.95.134501}

\section*{Acknowledgments}
We thank Victor Albert, Jonathan Conrad, and Xiehang Yu for their helpful discussions. We acknowledge support from the ARO(W911NF-23-1-0077), ARO MURI (W911NF-21-1-0325), AFOSR MURI (FA9550-19-1-0399, FA9550-21-1-0209, FA9550-23-1-0338), DARPA (HR0011-24-9-0359, HR0011-24-9-0361), NSF (OMA-1936118, ERC-1941583, OMA-2137642, OSI-2326767, CCF-2312755), NTT Research, Samsung GRO,  Packard Foundation (2020-71479), and the Marshall and Arlene Bennett Family Research Program. This material is based upon work supported by the U.S. Department of Energy, Office of Science, National Quantum Information Science Research Centers, and Advanced Scientific Computing Research (ASCR) program under contract number DE-AC02-06CH11357 as part of the InterQnet quantum networking project. This work was completed with resources provided by the University of Chicago’s Research Computing Center.

\clearpage

% \pagebreak
% \break
\appendix 

\widetext
\begin{center}
\textbf{\large Supplemental Material}
\end{center}

\section{Basics on lattice theory}

In this section, we focus on the relevant lattice theory concepts that are helpful for our understanding of the GKP codes. In GKP codes, we are normally concerned with $N$ modes. Thus, we explore lattice theory in $2N$ dimensions. The lattice can be defined through a set of independent generators, $\bold{v}_i \in \mathbb{R}^{2N}$. These generators form the generator matrix,
\begin{eqnarray}\label{eq:gen_matrix}
    M:= \begin{pmatrix}
    \bold{v}_1^T\\
    \vdots\\
    \bold{v}_{2N}^T\\
\end{pmatrix}.
\end{eqnarray}
The linear combination of the lattice generators span the whole lattice, i.e.
\begin{eqnarray}
    \Lambda(M) := \set{M^T \bold{a} : \bold{a}\in \mathbb{Z}^{2N}}.
\end{eqnarray}
In particular, the type of lattices that are relevant to our discussion of GKP codes is the symplectic integral lattices, also known as weakly self-dual. Such lattices are defined to be lattices whose symplectic Gram matrix, defined as
\begin{eqnarray}\label{eq:sym_Gram_mat}
    A:= M\Omega M^T,
\end{eqnarray}
have only integer entries. The N-mode symplectic form is defined as $\Omega = I_N\otimes \omega$ and $\omega = \begin{pmatrix}
    0 & 1\\
    -1 & 0
\end{pmatrix}$. While a generator matrix uniquely determines the lattice, different generator matrices could correspond to the same lattice. 

\begin{lemma}[Canonical basis]
    For any symplectic integral lattice, there exists a canonical generator matrix $M$ such that its symplectic Gram matrix
    \begin{eqnarray}
    A= \operatorname{diag}\left(d_1, \cdots, d_{N}\right) \otimes \omega.
    \end{eqnarray}
\end{lemma}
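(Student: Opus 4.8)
The plan is to reduce the statement to a normal-form theorem for integer alternating matrices (the skew-symmetric analogue of Smith normal form) and then prove that by induction on $N$. First I would record the effect of a change of generating set: if $M' = UM$ with $U\in GL_{2N}(\mathbb{Z})$, then $M'$ generates the same lattice $\Lambda$, while the symplectic Gram matrix transforms by congruence, $A' = M'\Omega M'^{T} = U A U^{T}$. Since the rows of $M$ are independent, $A=M\Omega M^{T}$ is a full-rank skew-symmetric integer matrix. Hence the lemma is equivalent to the arithmetic claim: every full-rank skew-symmetric $A\in M_{2N}(\mathbb{Z})$ is $GL_{2N}(\mathbb{Z})$-congruent to $\operatorname{diag}(d_{1},\dots,d_{N})\otimes\omega$ with each $d_i$ a positive integer.

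For the arithmetic claim I would argue by induction on $N$. The base case $N=1$ is immediate, after possibly swapping the two basis vectors (a congruence) to make the single nonzero entry positive. For the inductive step, let $d_{1}$ be the minimum of the nonzero absolute values of the entries taken over the whole $GL_{2N}(\mathbb{Z})$-congruence orbit of $A$; this is a positive integer because $\det A\neq 0$ forces $A\neq 0$, and it is an invariant of the orbit. After permuting basis vectors assume the minimizer sits in position $(1,2)$, so $A_{12}=d_{1}>0$ and $A_{21}=-d_1$. Now apply the elementary congruences $U=I+tE_{k1}$ and $U=I+tE_{k2}$ for $k\ge 3$, $t\in\mathbb{Z}$: one checks these are simultaneous row/column operations that replace $A_{2k}$ (resp.\ $A_{1k}$) by $A_{2k}-t d_{1}$ (resp.\ $A_{1k}+t d_1$) while leaving the other entries of rows and columns $1,2$ untouched. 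Choosing $t$ to take remainders modulo $d_{1}$ and invoking minimality of $d_{1}$, these remainders must vanish; so after finitely many steps $A_{1k}=A_{2k}=0$, and by skew-symmetry $A_{k1}=A_{k2}=0$, for all $k\ge 3$. The matrix is now block diagonal, $d_{1}\omega\oplus A'$ with $A'\in M_{2N-2}(\mathbb{Z})$ skew-symmetric and full rank; applying the inductive hypothesis to $A'$, embedded via $U\mapsto I_{2}\oplus U$, completes the proof. (If one also wants the divisibility chain $d_{1}\mid d_{2}\mid\cdots$, the same minimality argument shows $d_1$ divides every entry of $A'$ before recursing, but the statement as phrased does not require this.)

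The main obstacle is the bookkeeping in the clearing step rather than any deep idea: one must verify that each operation used is genuinely a $GL_{2N}(\mathbb{Z})$-congruence, so that the orbit-minimum $d_1$ is a bona fide invariant that no operation can decrease; that reducing one of $A_{1k},A_{2k}$ modulo $d_1$ does not re-enlarge the other, nor the already-cleared columns $j\ne k$; and that skew-symmetry is preserved at every stage so the process terminates. The remaining ingredients — the translation $M\mapsto UM$, $A\mapsto UAU^T$, and the $N=1$ base case — are routine.
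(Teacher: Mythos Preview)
Your argument is correct and complete: reducing to $GL_{2N}(\mathbb{Z})$-congruence of the integer skew-symmetric matrix $A=M\Omega M^{T}$ and then running the skew-symmetric analogue of Smith normal form (Frobenius normal form for alternating matrices) by an orbit-minimum / clearing induction is exactly the standard proof, and your bookkeeping concerns all resolve cleanly since the congruence $I+tE_{k1}$ (resp.\ $I+tE_{k2}$) touches only row and column $k$ and leaves $A_{1k}$ (resp.\ $A_{2k}$) fixed because $A_{11}=A_{22}=0$.

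The paper itself omits the proof entirely and simply cites Ref.~\cite{PRXQuantum.4.040334}; the argument there, and in the lattice literature more broadly, is the same Frobenius normal-form reduction you have written out, so there is no alternative approach to compare against. Your optional remark about the divisibility chain $d_1\mid d_2\mid\cdots$ is also correct and is what makes the $d_i$ canonical (the elementary divisors of $A$), though as you note the lemma as stated does not require it.
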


The proof is omitted here for simplicity, but it can be found in, e.g., Ref.~\cite{PRXQuantum.4.040334} and references therein.

\begin{lemma}[Square lattice with a symplectic transformation]\label{lem:sq_lattice_connect}
    The canonical generator matrix of any symplectic integral lattice, $M$, can be connected with a square lattice through
    \begin{eqnarray}
    M = M_{\text{sq}} S^T
    \end{eqnarray}
    where $M_{\text{sq}} = D_{\text{sq}}\otimes I_2 = \operatorname{diag}\left(\sqrt{d_1}, \cdots, \sqrt{d_{N}}\right) \otimes I_2$, and $S^T = M_{\text{sq}}^{-1} M$ is a symplectic matrix.
\end{lemma}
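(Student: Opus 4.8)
The plan is to take the canonical generator matrix $M$ furnished by Lemma~1 and verify by a short direct computation that $S^T := M_{\text{sq}}^{-1}M$ is symplectic; the factorization $M = M_{\text{sq}}S^T$ is then immediate. First I would record the inputs: by Lemma~1 one has $M\Omega M^T = \operatorname{diag}(d_1,\dots,d_N)\otimes\omega$ with each $d_i$ a positive integer, so $M_{\text{sq}} = \operatorname{diag}(\sqrt{d_1},\dots,\sqrt{d_N})\otimes I_2$ is a symmetric invertible matrix with $M_{\text{sq}}^{-1} = \operatorname{diag}(1/\sqrt{d_1},\dots,1/\sqrt{d_N})\otimes I_2$; moreover $M$ is invertible, since its $2N$ rows are linearly independent lattice generators spanning $\mathbb{R}^{2N}$. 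Hence $S^T$ is a well-defined invertible $2N\times 2N$ matrix and $M_{\text{sq}}S^T = M_{\text{sq}}M_{\text{sq}}^{-1}M = M$, which is the stated factorization.

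Next I would compute the symplectic form of $S^T$. Using that $M_{\text{sq}}^{-1}$ is symmetric, so $(S^T)^T = M^T M_{\text{sq}}^{-1}$, one finds
\[
S^T\,\Omega\,(S^T)^T = M_{\text{sq}}^{-1}\,(M\Omega M^T)\,M_{\text{sq}}^{-1} = M_{\text{sq}}^{-1}\bigl(\operatorname{diag}(d_1,\dots,d_N)\otimes\omega\bigr)M_{\text{sq}}^{-1}
\]
by Lemma~1. The key point is that the $\otimes I_2$ block structure of $M_{\text{sq}}$ makes the rescaling ``commute through'' the $\omega$ factor: applying the Kronecker mixed-product identity $(A_1\otimes B_1)(A_2\otimes B_2) = (A_1A_2)\otimes(B_1B_2)$ twice, the right-hand side collapses to $\bigl(\operatorname{diag}(1/\sqrt{d_i})\operatorname{diag}(d_i)\operatorname{diag}(1/\sqrt{d_i})\bigr)\otimes(I_2\,\omega\,I_2) = I_N\otimes\omega = \Omega$. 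Thus $S^T\Omega(S^T)^T = \Omega$; and since the symplectic group is stable under transposition (equivalently $P^T\Omega P = \Omega \iff P\Omega P^T = \Omega$, a consequence of $\Omega^2 = -I$), both $S^T$ and $S$ are symplectic, which completes the proof.

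There is no substantial obstacle here: the whole argument is a rescaling computation resting only on Lemma~1 and the Kronecker mixed-product identity. The points that do require care are all bookkeeping — that the $d_i$ are strictly positive so $M_{\text{sq}}$ is real and invertible; that the $\otimes I_2$ structure (a mode-wise, quadrature-symmetric rescaling) is exactly what lets $M_{\text{sq}}^{-1}$ preserve the symplectic form, whereas a generic diagonal rescaling of the quadratures would not; and that the convention adopted for ``symplectic matrix'' is the transpose-stable one, so that deducing that $S$ itself is symplectic from the computed identity for $S^T$ is legitimate.
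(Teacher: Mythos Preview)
Your proposal is correct and is exactly the direct verification the paper alludes to: the paper's own ``proof'' is just the single sentence that $S$ can be easily verified to be symplectic, and your computation via $S^T\Omega(S^T)^T = M_{\text{sq}}^{-1}(M\Omega M^T)M_{\text{sq}}^{-1}$ together with the Kronecker mixed-product identity is precisely that verification spelled out in full.
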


Lemma~\ref{lem:sq_lattice_connect} is straightforward, and $S$ can be easily verified to be symplectic.

\begin{definition}[Symplectic dual lattice] For a lattice, $\Lambda$, its symplectic dual lattice is given by $\Lambda^\perp$ such that
    \begin{eqnarray} \label{eq:sym_dual_cond}
    \Lambda^\perp := \set{\vec{u}\vert \vec{u}^T\Omega \vec{v}\in \mathbb{Z}, \forall \vec{v}\in\Lambda}
    \end{eqnarray}
\end{definition}
Similarly, if the symplectic form in Eq.~\eqref{eq:sym_dual_cond} is absent, the lattice points form the Euclidean dual lattice.

\begin{definition}[Self-dual lattice]
    A symplectic self-dual lattice is a lattice, $\Lambda$, whose dual lattice is itself, i.e. $\Lambda = \Lambda^\perp$ and has a canonical symplectic Gram matrix of 
    \begin{eqnarray}
        A= I_N \otimes \omega.
    \end{eqnarray}
    where $I_N$ is a $N\times N$ identity matrix.
\end{definition}

\begin{lemma}[Existence of good self-dual lattice; \cite{busar_sarnak_1994}]\label{lem:Busar_good_lattice}
    There exist a $2N$-dimensional symplectic self-dual lattice, $\Lambda$, such that
    \begin{eqnarray}
            \sum_{\bold{x}\in \Lambda\notin \set{\bold{0}}} f(\bold{x}) \leq \int f(\bold{x}) d^{2N} \bold{x}
    \end{eqnarray}
    for any integrable and rotationally invariant function $f$.
\end{lemma}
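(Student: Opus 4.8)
The plan is to realize the desired lattice as a \emph{typical} element of the moduli space of symplectic self-dual lattices, equipped with its invariant (Siegel) probability measure, and then to upgrade the ``for each $f$'' existence that comes from averaging into a single $\Lambda$ that works for \emph{all} admissible $f$ at once. Concretely, I would parametrize the symplectic self-dual lattices by the homogeneous space $X_N = \mathrm{Sp}(2N,\mathbb{R})/\mathrm{Sp}(2N,\mathbb{Z})$, which carries a finite $\mathrm{Sp}(2N,\mathbb{R})$-invariant measure $\mu$ that I normalize to a probability measure. Every $\Lambda\in X_N$ is unimodular (its symplectic Gram matrix is $I_N\otimes\omega$, forcing covolume $1$), so the integral $\int f\, d^{2N}\mathbf{x}$ is the natural candidate for the ensemble average of the lattice sum.

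The analytic engine is a symplectic mean value identity of Siegel--Weil type, inherited from Buser--Sarnak~\cite{busar_sarnak_1994}, whose leading form reads
\begin{equation}
    \int_{X_N}\Big(\sum_{\mathbf{x}\in\Lambda\setminus\{\mathbf{0}\}} f(\mathbf{x})\Big)\, d\mu(\Lambda) = \int_{\mathbb{R}^{2N}} f(\mathbf{x})\, d^{2N}\mathbf{x}.
\end{equation}
I would first establish this identity on $X_N$ for rotationally invariant $f$ of suitable decay. It immediately produces, for any single fixed $f$, a lattice with $\sum_{\Lambda\setminus\{\mathbf{0}\}} f \le \int f$, since the mean of $\int f - \sum_{\Lambda\setminus\{\mathbf{0}\}} f$ vanishes. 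This is the naive averaging step, and by itself it only yields one lattice per $f$.

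The real work is exchanging the quantifiers. Here I would exploit rotational invariance: writing $f(\mathbf{x}) = g(\lvert\mathbf{x}\rvert^2)$ and applying the layer-cake decomposition, every nonnegative radially decreasing $f$ is a nonnegative superposition of ball indicators, so it suffices to control the counting function $\#\{\mathbf{x}\in\Lambda\setminus\{\mathbf{0}\}: \lvert\mathbf{x}\rvert\le r\}$ against $\mathrm{vol}(B_r)$ for a countable dense set of radii $\{r_k\}$. For each radius the mean value identity pins the ensemble average; promoting this into a bound that holds \emph{simultaneously} for all $r_k$ on a common set of lattices of positive measure — and then rebuilding the general $f$ from the superposition — is the heart of the matter.

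I expect the main obstacle to be precisely this quantifier exchange. The averaged difference has mean zero, so the first moment alone only gives the weak Markov estimate $\mu(A_k)\le 1/(1+\delta_k)$ on each bad event $A_k=\{\sum_{\Lambda\setminus\{\mathbf{0}\}} f_k > (1+\delta_k)\int f_k\}$, which is too weak for a union bound over the countable family to miss a set of positive measure. Closing this gap requires either a genuine second-moment (Rogers-type) variance estimate for the symplectic self-dual ensemble, or a monotonicity/extremal argument collapsing the continuum of radii onto a single functional of $\Lambda$ — for instance the full theta function, controlled through the self-dual functional equation $\Theta_\Lambda(t)=t^{-N}\Theta_\Lambda(1/t)$. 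A secondary subtlety is that the inequality cannot survive for arbitrary signed or non-monotone $f$ (a thin radial bump concentrated on a sphere through a lattice point violates it), so the reduction must remain within the nonnegative, radially decreasing class — which is exactly the class the application needs, namely the Gaussian $f(\mathbf{x})=e^{-\pi\frac{1-\gamma}{\gamma}\lvert\mathbf{x}\rvert^2}$ appearing in Eq.~\eqref{eq:inf_energy_performance}.
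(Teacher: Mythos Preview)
The paper does not prove this lemma; it merely cites Buser--Sarnak~\cite{busar_sarnak_1994}. Your approach via the Siegel-type mean value identity on the moduli space $X_N=\mathrm{Sp}(2N,\mathbb{R})/\mathrm{Sp}(2N,\mathbb{Z})$ is exactly the argument underlying that reference (and its application by Harrington--Preskill~\cite{PhysRevA.64.062301}), and it cleanly delivers the $\forall f\,\exists\Lambda$ conclusion.

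Where your proposal overshoots is in the quantifier exchange. You correctly observe that the lemma, as written, asserts $\exists\Lambda\,\forall f$, and you then invest most of your plan in upgrading the averaging argument to this stronger statement --- while simultaneously noting the counterexample (a thin radial bump on a sphere through a lattice point) that shows the literal $\exists\Lambda\,\forall f$ claim cannot hold for arbitrary rotationally invariant integrable $f$. That tension is real: the lemma as stated is an overstatement relative to what Buser--Sarnak prove and to what the paper actually uses. Look at Theorem~\ref{theo:rates_GKP_loss_amp}: the lemma is invoked for a \emph{single} Gaussian $f(\mathbf{x})=e^{-\frac{\pi}{\lambda}\frac{1-\gamma}{\gamma}|\mathbf{x}|^2}$ (one for each fixed $\gamma,\lambda$), so the lattice is allowed to depend on $f$. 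The $\forall f\,\exists\Lambda$ version --- which your averaging step already establishes --- is all that is required.

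So the ``main obstacle'' you identify is not an obstacle at all: drop the second-moment/Rogers machinery and the layer-cake uniformization, stop after the mean value identity, and note that the quantifier order in the lemma should be read (or amended) as $\forall f\,\exists\Lambda$.
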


\section{GKP code}

In this section, we focus on GKP and its properties. Following the conventions, we define the set of position and momentum operators for $N$ modes as
\begin{eqnarray}
    \hat{\bold{x}} := \left(\hat{q}_1, \hat{p}_1, \dots, \hat{q}_N, \hat{p}_N\right)^T.
\end{eqnarray}
There are two equivalent ways of expressing a displacement, being through $\bold{u}\in \mathbb{R}^{2N}$ and $\bold{\alpha}\in \mathbb{C}^N$
\begin{eqnarray}
    \hat{T}(\bold{u}) &:=& \exp{-i\bold{u}^T \Omega \hat{\bold{x}}},\\
    \hat{D}(\boldsymbol{\alpha}) &:=& \exp{\boldsymbol{\alpha} \hat{a}^\dagger - \boldsymbol{\alpha}^\ast \hat{a}},
\end{eqnarray}
and they are related by $\alpha_j = \frac{1}{\sqrt{2}}\left(u_{2j-1} + i u_{2j}\right)$. Equivalently, we can define a linear map
\begin{eqnarray}\label{eq:conversion_T_D}
    C:= \frac{1}{\sqrt{2}}\begin{pmatrix}
        1 & i & 0 & 0 & \dots\\
        0 & 0 & 1 & i & \dots\\
        && \dots &&
    \end{pmatrix},
\end{eqnarray}
such that $C:\mathbb{R}^{2N}\to \mathbb{C}^N$ for N modes, and if $C\bold{u} =\boldsymbol{\alpha}$, we have $\hat{T}(\bold{u}) = \hat{D}(\boldsymbol{\alpha})$. For any N-mode GKP, it has $2N$ independent stabilizer generators, each corresponding to a displacement,
\begin{eqnarray}
    \hat{S}_i = \hat{T}(\sqrt{2\pi}\bold{v}_i).
\end{eqnarray}
A generic element in the stabilizer group can then be written as
\begin{eqnarray}
    \hat{S} = \hat{T}(\sqrt{2\pi}  M^T \bold{a}),
\end{eqnarray}
where $M:= \begin{pmatrix}
    \bold{v}_1^T\\
    \vdots\\
    \bold{v}_{2N}^T\\
\end{pmatrix}$ and corresponds exactly to the generator of lattices as given in Eq.~\eqref{eq:gen_matrix}. Moreover, the stabilizers should commute with each other, leading to the condition that the symplectic Gram matrix shown in Eq.~\eqref{eq:sym_Gram_mat} should have only integer entries. Therefore, the stabilizer group of a GKP code is isomorphic to a symplectic lattice, with each stabilizer corresponding to a lattice point. The GKP's logical dimension is determined through the lattice as 
\begin{eqnarray}
    d_L = \text{det}(M)\in \mathbb{Z}.
\end{eqnarray}
As a result, if we scale the lattice through $M^\prime = \alpha M$, the new lattice should be symplectic integral, i.e. $\alpha^2 = \lambda \in \mathbb{Z}$. Then, the new code will have a logical dimension of $d_L^\prime = \alpha^{2N}d_L = \lambda^N d_L$. Therefore, it is not generally possible to attain arbitrary logical dimensions through scaling a base lattice.

\begin{corollary}[Gaussian transformation to a square lattice]\label{cor:transform_square_lattice_GKP}
    Suppose a GKP code's underlying symplectic integral lattice, $\Lambda$, has a canonical symplectic Gram matrix of the form $A= \operatorname{diag}\left(d_1, \cdots, d_{N}\right) \otimes \omega$. There exists a Gaussian transformation, $\hat{U}_S$, that transforms code into a GKP code that corresponds to a square lattice, $\Lambda_\text{sq}$, with a generator of the form $M_{\text{sq}} = D_{\text{sq}}\otimes I_2 = \operatorname{diag}\left(\sqrt{d_1}, \cdots, \sqrt{d_{N}}\right) \otimes I_2$.
\end{corollary}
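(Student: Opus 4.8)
The plan is to obtain the required Gaussian transformation directly from Lemma~\ref{lem:sq_lattice_connect} together with the standard fact that every element of the real symplectic group $Sp(2N,\mathbb{R})$ is implemented by a Gaussian unitary (the metaplectic / Stone--von Neumann correspondence). First I would apply Lemma~\ref{lem:sq_lattice_connect} to the canonical generator $M$ of $\Lambda$ to produce the symplectic matrix $S$ with $M = M_{\text{sq}}S^T$, where $M_{\text{sq}}=D_{\text{sq}}\otimes I_2=\operatorname{diag}(\sqrt{d_1},\dots,\sqrt{d_N})\otimes I_2$. Since $S\in Sp(2N,\mathbb{R})$, there is a Gaussian unitary $\hat{U}_S$ on the $N$ modes with $\hat{U}_S^\dagger\hat{\mathbf{x}}\,\hat{U}_S = S\hat{\mathbf{x}}$, unique up to an irrelevant global phase; this $\hat{U}_S$ (or its inverse, see below) is the claimed transformation, and it is manifestly Gaussian.

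Next I would track the stabilizer group under conjugation. Using $\hat{T}(\mathbf{u})=\exp(-i\mathbf{u}^T\Omega\hat{\mathbf{x}})$ and the symplectic identity $\Omega S^{-1}=S^T\Omega$, one gets $\hat{U}_S\,\hat{T}(\mathbf{u})\,\hat{U}_S^\dagger=\hat{T}(S\mathbf{u})$. Hence a generic stabilizer $\hat{T}(\sqrt{2\pi}\,M^T\mathbf{a})$, $\mathbf{a}\in\mathbb{Z}^{2N}$, is carried to $\hat{T}(\sqrt{2\pi}\,SM^T\mathbf{a})$, i.e. the stabilizer group goes to that of the GKP code with generator matrix $MS^T$. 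Comparing with Lemma~\ref{lem:sq_lattice_connect}, the conjugation that lands precisely on $M_{\text{sq}}=M(S^T)^{-1}$ is by $\hat{U}_{S^{-1}}$ (again Gaussian, as $S^{-1}\in Sp(2N,\mathbb{R})$), which I would simply rename $\hat{U}_S$ in the statement; one checks $M_{\text{sq}}\Omega M_{\text{sq}}^T=\operatorname{diag}(d_1,\dots,d_N)\otimes\omega$ is integer, so the image is a bona fide GKP code. Finally, because conjugation by a Gaussian unitary acts covariantly on all Heisenberg--Weyl operators, it also maps the logical operators of $\Lambda$ (translations by $\Lambda^\perp\setminus\Lambda$) onto those of $\Lambda_{\text{sq}}$; since the infinite-energy GKP codespace is determined by its stabilizer group, $\hat{U}_S$ maps the $\Lambda$ code onto the square-lattice code.

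I expect the argument to be essentially bookkeeping, so the one genuine point of care is convention-matching: whether to conjugate by $S$ or $S^{-1}$ must be pinned down against the exact form of Lemma~\ref{lem:sq_lattice_connect}, which is settled by the identity $\Omega S^{-1}=S^T\Omega$ used above. A caveat worth stating explicitly is that the clean claim lives at the level of the stabilizer lattice: for a \emph{finite-energy} GKP code the envelope operator $e^{-\Delta^2\hat{n}}$ is not invariant under a generic Gaussian unitary, so the image code carries a (possibly squeezed) Gaussian envelope rather than the canonical thermal one. This does not affect the lattice-level statement of the corollary, but it should be kept in mind whenever the corollary is later invoked in the finite-$\Delta$ setting.
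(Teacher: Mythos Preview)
Your proposal is correct and follows essentially the same route as the paper: invoke Lemma~\ref{lem:sq_lattice_connect} to get the symplectic matrix, lift it to a Gaussian unitary via the metaplectic correspondence, and check that conjugation sends the stabilizer generator matrix $M$ to $MS^T$. The paper simply chooses $S^T=M^{-1}M_{\text{sq}}$ from the outset (the inverse of Lemma~\ref{lem:sq_lattice_connect}'s $S^T$), which is exactly your ``rename $\hat{U}_{S^{-1}}$ as $\hat{U}_S$'' step; your extra remark about the finite-energy envelope is a valid caveat but not needed for the corollary as stated.
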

\begin{proof}
    With a Gaussian unitary, $\hat{U}_S$, the quadrature operators transform as 
    \begin{eqnarray}
        \hat{U}_S^\dagger \hat{x}\hat{U}_S = S \hat{x}
    \end{eqnarray}
    where $S$ is a symplectic matrix. Equivalently, one can show that the GKP stabilizer group with a generator matrix $M$ is transformed into another GKP code with $M^\prime = MS^T$. Therefore, from Lemma~\ref{lem:sq_lattice_connect}, it is straightforward to take $S^T = M^{-1}M_{\text{sq}} $, leading to $M^\prime = M_{\text{sq}} = \operatorname{diag}\left(\sqrt{d_1}, \cdots, \sqrt{d_{N}}\right) \otimes I_2$.
\end{proof}

Therefore, we can understand each GKP code as being Gaussian transformed from a square lattice GKP code, where the $i$th mode encodes $d_i$ logical dimensions. For an infinite energy GKP code, the codewords can then be denoted as $\ket{\boldsymbol{\mu}}_0$ where $\boldsymbol{\mu} = \begin{pmatrix}
    \mu_1\\
    \vdots\\
    \mu_N
\end{pmatrix}$ and $\mu_i\in \mathbb{Z}^{d_i}$. The codewords follow as
\begin{eqnarray}
    \ket{\boldsymbol{\mu}}_0 := \hat{U}_S^\dagger \ket{\boldsymbol{\mu}}^{\text{sq}}_0 = \hat{U}_S^\dagger \bigotimes_{i=1}^N \ket{\mu_i}^{\text{sq}}_0
\end{eqnarray}
where the single-mode square lattice codewords are 
\begin{eqnarray}
    \ket{\mu}^{\text{sq}}_0 := \sum_{n\in\mathbb{Z}}\ket{\sqrt{\pi}(dn+\mu)}_x
\end{eqnarray}
with the codewords defined in the conventional Z basis. To obtain its finite-energy counterpart, an overall Gaussian envelope is applied
\begin{align}
    \ket{\boldsymbol{\mu}}_\Delta = N_{\boldsymbol{\mu}} e^{-\Delta^2 \sum_{i=1}^N \hat{n}_i}\ket{\boldsymbol{\mu}}_0
\end{align}
where $\hat{n}_i$ is the number operator for the $i$th mode and $N_{\boldsymbol{\mu}}$ is a normalization factor that is in general dependent on the codeword. We will determine $N_{\boldsymbol{\mu}}$ as a byproduct in Lemma~\ref{lem:single_mode_displaced_code_overlap}. The Gaussian envelope can be decomposed in the displacement operator basis~\cite{PhysRevA.97.032346}. In a single mode, we have
\begin{eqnarray}
    e^{-\Delta^2 \hat{n}} = \frac{1}{\pi\left(1-e^{-\Delta^2}\right)} \int_{-\infty}^\infty d^2 \alpha\hat{D}(\alpha)e^{-\frac{\abs{\alpha}^2}{2\tanh\frac{\Delta^2}{2}}},
\end{eqnarray}
which can be straightforwardly extended to multimode scenarios.

\section{Noise channels\label{sec:appendix_noise_channel}}

In this work, we are mainly concerned with channels such as pure loss, amplification, and thermal loss.

\subsection{Pure loss}

For codes encoded in an oscillator, the excitation loss noise channel, also known as the pure loss channel, has the form of $\mathcal{N}_L(\hat{\rho}) = \sum_{i=0}^\infty \hat{E}_l \hat{\rho} \hat{E}_l^\dagger$, where
\begin{equation}
    \hat{E}_l = \left(\frac{\gamma}{1-\gamma}\right)^{l/2}\frac{\hat{a}^l}{\sqrt{l!}}\left(1-\gamma\right)^{\hat{n}/2},
\end{equation}
and $\hat{n} = \hat{a}^\dagger \hat{a}$ is the number operator. Here, $\gamma$ is the loss parameter and is related to the transmissivity as $\eta = 1-\gamma$. Given the overlap between displaced GKP states, it is ideal if we can expand the Kraus operators of the loss channel in terms of displacement. Such an expression was derived in Ref.~\cite{PhysRevA.97.032346}.

\begin{lemma}[Displacement representation of the loss channel~\cite{PhysRevA.97.032346}] \label{lem:loss_disp_expand}
The loss Kraus operators can be expanded as
    \begin{equation}
    \hat{E}_{l}^{\dagger} \hat{E}_{l^{\prime}}=\int \frac{d^2 \alpha}{\pi} e^{-\frac{1}{2}(1-\gamma)|\alpha|^2}\left\langle l\left|\hat{D}_{\alpha^{\ast}}\right| l^{\prime}\right\rangle \hat{D}_{\alpha \sqrt{\gamma}}.
\end{equation}
\end{lemma}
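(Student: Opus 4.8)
The plan is to reduce the operator identity to a short consequence of three standard facts: (i) the Stinespring dilation of the loss channel as a beam-splitter coupling to a vacuum environment, (ii) the fact that a Gaussian unitary conjugates a displacement operator into a (tensor product of) displacement operator(s), and (iii) the Weyl--Fourier completeness of the displacement operators. Writing the system mode as $\hat a$ and the environment mode as $\hat b$, one has $\hat E_l = \big(\hat I_S\otimes\bra{l}_E\big)\hat U_{BS}\big(\hat I_S\otimes\ket{0}_E\big)$ for the beam-splitter $\hat U_{BS}$ with $\hat U_{BS}^\dagger \hat b\,\hat U_{BS} = \pm\sqrt{\gamma}\,\hat a + \sqrt{1-\gamma}\,\hat b$; with this convention $\langle l|_E\hat U_{BS}|0\rangle_E = \tfrac{1}{\sqrt{l!}}\big(\tfrac{\gamma}{1-\gamma}\big)^{l/2}\hat a^l(1-\gamma)^{\hat n/2}$, which is exactly the loss Kraus operator with transmissivity $\eta = 1-\gamma$. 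Hence
\begin{equation}
\hat E_l^\dagger\hat E_{l'} = \bra{0}_E\,\hat U_{BS}^\dagger\,\big(\hat I_S\otimes\ket{l}\!\bra{l'}_E\big)\,\hat U_{BS}\,\ket{0}_E .
\end{equation}

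Next I would insert the Weyl inversion formula on the environment mode. From the trace orthogonality $\operatorname{Tr}[\hat D(\alpha)\hat D^\dagger(\beta)] = \pi\,\delta^{(2)}(\alpha-\beta)$, every Hilbert--Schmidt operator obeys $\hat O = \int\frac{d^2\alpha}{\pi}\operatorname{Tr}[\hat O\,\hat D^\dagger(\alpha)]\,\hat D(\alpha)$; applied to $\hat O=\ket{l}\!\bra{l'}$ this yields $\ket{l}\!\bra{l'}=\int\frac{d^2\alpha}{\pi}\bra{l'}\hat D(-\alpha)\ket{l}\,\hat D(\alpha)$. Substituting, and using that $\hat U_{BS}$ sends $\hat I_S\otimes\hat D_E(\alpha)$ to $\hat D_S(\pm\sqrt{\gamma}\,\alpha)\otimes\hat D_E(\sqrt{1-\gamma}\,\alpha)$ together with $\bra{0}\hat D(\sqrt{1-\gamma}\,\alpha)\ket{0}=e^{-\frac12(1-\gamma)|\alpha|^2}$, the environment factor collapses and leaves
\begin{equation}
\hat E_l^\dagger\hat E_{l'} = \int\frac{d^2\alpha}{\pi}\,e^{-\frac12(1-\gamma)|\alpha|^2}\,\bra{l'}\hat D(-\alpha)\ket{l}\,\hat D(\pm\sqrt{\gamma}\,\alpha).
\end{equation}
It then remains only to rewrite the matrix element: since the Fock-basis matrix elements of $\hat a$ and $\hat a^\dagger$ are real, $\overline{\bra{l}\hat D(\alpha)\ket{l'}}=\bra{l}\hat D(\alpha^*)\ket{l'}$, so $\bra{l'}\hat D(-\alpha)\ket{l}=\overline{\bra{l}\hat D(\alpha)\ket{l'}}=\bra{l}\hat D(\alpha^*)\ket{l'}$; the residual sign in $\hat D(\pm\sqrt{\gamma}\alpha)$ is fixed by (equivalently, merely determines) the phase convention of the dilating isometry, and choosing it appropriately gives exactly the claimed expression.

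I do not expect a conceptual obstacle; the only thing needing care is keeping all conventions mutually consistent --- the sign in $\hat U_{BS}$, the displacement direction in the Weyl inversion, and the conjugation identity for $\bra{l}\hat D(\alpha)\ket{l'}$ --- so that the end result reproduces $\bra{l}\hat D(\alpha^*)\ket{l'}$ (and $\hat D(\alpha\sqrt{\gamma})$ with the correct sign) rather than a variant differing by a factor $(-1)^{l-l'}$. As an independent sanity check I would verify the $l=l'=0$ instance directly: the right-hand side must collapse to $\hat E_0^\dagger\hat E_0=(1-\gamma)^{\hat n}$, and indeed $\int\frac{d^2\alpha}{\pi}e^{-(1-\gamma/2)|\alpha|^2}\hat D(\sqrt{\gamma}\,\alpha)$ is the Weyl representation of the (unnormalized) thermal operator $(1-\gamma)^{\hat n}$ with mean photon number $\tfrac{1-\gamma}{\gamma}$. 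Should the bookkeeping become unpleasant, a purely computational alternative is to check the identity on coherent-state matrix elements $\bra{\beta}(\cdot)\ket{\beta'}$ after resumming over $l,l'$ against the generating weights $\bar s^{\,l}/\sqrt{l!}$ and $t^{l'}/\sqrt{l'!}$: both sides then reduce to explicit Gaussians in $\alpha,\beta,\beta',s,t$ whose equality follows from a single Gaussian integral.
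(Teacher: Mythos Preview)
Your argument is correct and takes a genuinely different route from the paper. The paper computes the characteristic function of $\hat E_l^\dagger \hat E_k$ directly: it evaluates $\Tr\!\big[\hat D_\alpha^\dagger \hat E_l^\dagger \hat E_k\big]$ by inserting a Fock-basis resolution of identity, expanding the displacement matrix elements as Laguerre polynomials, and then invoking a closed-form Laguerre summation identity to collapse the sum; the Weyl inversion is applied only at the very end. Your dilation-based derivation bypasses all special-function work: the Gaussian weight $e^{-\frac12(1-\gamma)|\alpha|^2}$ is simply the vacuum overlap $\langle 0|\hat D_E(\sqrt{1-\gamma}\,\alpha)|0\rangle$, and the Fock matrix element $\langle l|\hat D(\alpha^*)|l'\rangle$ is what remains of the environment projector after Weyl-expanding $|l\rangle\langle l'|$. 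This makes the structure of the formula transparent and, as a bonus, immediately explains why the amplification case (Lemma~\ref{lem:amp_disp_expand}) has the same shape with $(1-\gamma,\gamma)\to(G,G-1)$: one merely replaces the beam-splitter dilation by the two-mode squeezer. The paper's computation, on the other hand, is self-contained and does not require fixing a dilation convention; your only delicate point is exactly the one you flag, namely matching the beam-splitter sign to the stated Kraus phase so that no stray $(-1)^{l-l'}$ survives.
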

\begin{proof}
    The Kraus operator combinations
    \begin{eqnarray}
        \hat{E}_{l}^\dagger \hat{E}_k &=& \frac{\sbkt{\frac{\gamma}{1 - \gamma}}^{\frac{l+k}{2}}}{\sqrt{k! l!}}\sbkt{1-\gamma}^{\hat{n}/2} \hat{a}^{\dagger k} \hat{a}^l (1-\gamma)^{\hat{n}/2}\\
        \Tr{\hat{D}_\alpha^\dagger \hat{E}_{l}^\dagger \hat{E}_k} &=& \sum_n\bra{n}\hat{D}_\alpha^\dagger\frac{\sbkt{\frac{\gamma}{1 - \gamma}}^{\frac{l+k}{2}}}{\sqrt{k! l!}}\sbkt{1-\gamma}^{\hat{n}/2} \hat{a}^{\dagger k} \hat{a}^l (1-\gamma)^{\hat{n}/2}\ket{n}
    \end{eqnarray}
    Redefining $p = n-l$, we can expand the displacement operators in the Fock basis and arrive at
\begin{eqnarray}
        \Tr{\hat{D}_\alpha^\dagger \hat{E}_{l}^\dagger \hat{E}_k} &=& \frac{\sbkt{\frac{\gamma}{1 - \gamma}}^{\frac{l+k}{2}}}{\sqrt{k! l!}}\sum_p\bra{p+l}\hat{D}_\alpha^\dagger \ket{p+k}\sbkt{1 - \gamma}^{p + \frac{k + l}{2}}\frac{\sqrt{\sbkt{p+l}!\sbkt{p+k}!}}{p!}\\
        &=& \frac{\sbkt{\frac{\gamma}{1 - \gamma}}^{\frac{l+k}{2}}}{\sqrt{k! l!}}\sbkt{1 - \gamma}^{\frac{k + l}{2}}e^{-\frac{\abs{\alpha}^2}{2}}\sbkt{-\alpha}^{l-k}\sum_{p}\frac{\sbkt{p+k}!}{p!}L_{p+k}^{l-k}\sbkt{\abs{\alpha}^2}\sbkt{1-\gamma}^p\\
        &=& \frac{1}{\gamma}\gamma^{\frac{k-l}{2}}\sqrt{\frac{k!}{l!}}\sbkt{-\alpha}^{l-k}e^{-\frac{1 - \gamma}{\gamma}\abs{\alpha}^2}e^{-\frac{\abs{\alpha}^2}{2}}L_{k}^{l-k}\sbkt{\frac{\abs{\alpha}^2}{\gamma}}\\
        &=& \frac{1}{\gamma}\sandwich{l}{\hat{D}_{\alpha^\ast/\sqrt{\gamma}}}{k}e^{-\frac{\abs{\alpha}^2}{2}\frac{1 - \gamma}{\gamma}}
\end{eqnarray}
Therefore, the decomposition is given by
\begin{eqnarray}
    \hat{E}_{l}^\dagger \hat{E}_k &=& \frac{1}{\gamma}\int d^2 \alpha \frac{e^{-\frac{1}{2}\frac{1-\gamma}{\gamma}\abs{\alpha}^2}}{\pi}\sandwich{l}{\hat{D}_{\alpha^\ast/\sqrt{\gamma}}}{k}\hat{D}_\alpha\\
        &=& \int d^2 \alpha \frac{e^{-\frac{1}{2}\sbkt{1-\gamma}\abs{\alpha}^2}}{\pi}\sandwich{l}{\hat{D}(\alpha^\ast)}{k}\hat{D}_{\alpha\sqrt{\gamma}}
\end{eqnarray}

\end{proof}

\subsection{Amplification channel}

For the amplification channel, we have its Kraus operator form as $\mathcal{N}_A(\hat{\rho}) = \sum_{i=0}^\infty \hat{A}_l \hat{\rho} \hat{A}_l^\dagger$
\begin{eqnarray}
    A_l &:=& \sqrt{\frac{1}{l!}\frac{(G-1)^l}{G}}G^{-\frac{n}{2}}a^{\dagger l}.
\end{eqnarray}
Given the similar forms between the amplification and the loss channel, we can expect a similar decomposition in the displacement representation.

\begin{lemma}[Displacement representation of the amplification channel] \label{lem:amp_disp_expand}
    The amplification Kraus operators can be expanded as
        \begin{equation}
        \hat{A}_{l}^{\dagger} \hat{A}_{l^{\prime}}=\int \frac{d^2 \alpha}{\pi} e^{-\frac{1}{2}G|\alpha|^2}\left\langle l\left|\hat{D}_{\alpha^{\ast}}\right| l^{\prime}\right\rangle \hat{D}_{\alpha \sqrt{G-1}}.
    \end{equation}
    \end{lemma}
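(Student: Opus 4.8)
The plan is to follow the same route as the proof of Lemma~\ref{lem:loss_disp_expand}, exploiting the near-identical structure of the loss and amplification Kraus operators. First I would put the Kraus-operator combination into a clean operator form: from $\hat{A}_l = \sqrt{\tfrac{1}{l!}\tfrac{(G-1)^l}{G}}\,G^{-\hat{n}/2}\hat{a}^{\dagger l}$ one obtains
\begin{equation}
  \hat{A}_l^\dagger \hat{A}_k = \frac{(G-1)^{(l+k)/2}}{G\sqrt{l!\,k!}}\;\hat{a}^{l}\,G^{-\hat{n}}\,\hat{a}^{\dagger k},
\end{equation}
which is the anti-normal-ordered analogue of the normal-ordered combination $(1-\gamma)^{\hat{n}/2}\hat{a}^{\dagger k}\hat{a}^{l}(1-\gamma)^{\hat{n}/2}$ appearing in the loss case.

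Next I would compute the characteristic function $\chi(\alpha):=\operatorname{Tr}\!\big(\hat{D}_\alpha^\dagger\,\hat{A}_l^\dagger\hat{A}_k\big)$ by inserting a Fock resolution of the identity, exactly as in Lemma~\ref{lem:loss_disp_expand}. Acting with $\hat{a}^{\dagger k}$, then $G^{-\hat{n}}$, then $\hat{a}^{l}$ on $\ket{n}$ produces the factor $G^{-(n+k)}$ together with ratios of factorials and leaves the single matrix element $\bra{n}\hat{D}_{-\alpha}\ket{n+k-l}$. Substituting its closed form in terms of an associated Laguerre polynomial and resumming the series in $n$ via the Laguerre generating function $\sum_{q\ge 0}L_q^{(\beta)}(x)\,t^{q}=(1-t)^{-\beta-1}e^{-xt/(1-t)}$ and its $t$-derivatives (the derivatives absorbing the extra polynomial-in-$n$ weight that the anti-normal ordering generates), evaluated at $t=1/G$ so that $1-t=(G-1)/G$, collapses $\chi(\alpha)$ into a Gaussian in $\alpha$ times a displacement matrix element with a rescaled argument --- in direct parallel with the intermediate expression $\tfrac{1}{\gamma}\bra{l}\hat{D}_{\alpha^{\ast}/\sqrt{\gamma}}\ket{k}\,e^{-\frac12\frac{1-\gamma}{\gamma}|\alpha|^2}$ of the loss proof. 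Inserting this into the displacement-basis decomposition $\hat{O}=\int\tfrac{d^2\alpha}{\pi}\operatorname{Tr}(\hat{D}_\alpha^\dagger\hat{O})\,\hat{D}_\alpha$ and rescaling $\alpha\mapsto\alpha\sqrt{G-1}$ normalizes the matrix-element argument to $\bra{l}\hat{D}_{\alpha^{\ast}}\ket{k}$ while simultaneously fixing the prefactor to unity, the Gaussian weight to $e^{-\tfrac12 G|\alpha|^{2}}$, and the displacement to $\hat{D}_{\alpha\sqrt{G-1}}$, which is the claimed identity.

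The only real obstacle is bookkeeping. Because the ordering is $\hat{a}^{l}(\cdot)\hat{a}^{\dagger k}$ rather than $\hat{a}^{\dagger k}(\cdot)\hat{a}^{l}$, the Laguerre index and the factorial prefactors emerge in a different combination than in Lemma~\ref{lem:loss_disp_expand}, and one must select the correct branch of $\bra{m}\hat{D}_\beta\ket{n}$: the cases $k\ge l$ and $k\le l$ use the $m\le n$ and $m\ge n$ forms respectively, but they recombine into the single covariant expression $\bra{l}\hat{D}_{\alpha^{\ast}}\ket{k}$. As a structural sanity check, one can note that $\hat{A}_l = G^{-1/2}\,\hat{E}_l^{T}\big|_{\eta=1/G}$, with $T$ transposition in the Fock basis, so that $\hat{A}_l^\dagger\hat{A}_k = G^{-1}\,\hat{E}_l\hat{E}_k^{\dagger}$ evaluated at transmissivity $1/G$; this already makes transparent why the Gaussian width is $G=1/\eta$ and the squared displacement scale is $G-1=(1-\eta)/\eta$. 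Since the ordering $\hat{E}_l\hat{E}_k^{\dagger}$ differs from the $\hat{E}_l^\dagger\hat{E}_k$ of Lemma~\ref{lem:loss_disp_expand}, one cannot literally quote that lemma, but the short independent resummation above still goes through unchanged.
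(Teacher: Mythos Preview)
Your proposal is correct and matches the paper's intended route: the paper's own proof of this lemma is simply ``The proof is similar to the proof of Lemma~\ref{lem:loss_disp_expand} and is omitted for simplicity,'' and what you outline is precisely that parallel computation. Your write-up is in fact more detailed than the paper's, and the transposition observation $\hat{A}_l = G^{-1/2}\,\hat{E}_l^{T}\big|_{\eta=1/G}$ is a nice extra consistency check that the paper does not mention.
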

    \begin{proof}
        The proof is similar to the proof of Lemma~\ref{lem:loss_disp_expand} and is omitted for simplicity.
    \end{proof}

\section{QEC matrix and near-optimal fidelity}

One key concept we use in this work is the near-optimal fidelity based on the QEC matrix, recently proposed by Ref.~\cite{zheng2024near}. For completeness, we review the key ideas here.

\begin{definition}[QEC matrix]\label{def:qec_mat} Consider a $d_L$-dimensional QEC code defined through codewords $\set{\ket{\mu_L}}$ and a noise channel admitting a Kraus form of $\set{\hat{N}_l}$ with Kruas order $L$. The QEC matrix is a $d_L L\times d_L L$ matrix, defined as 
    \begin{eqnarray}
        M_{[\mu l],[\nu k]} = \bra{\mu_L}\hat{N}_l^\dagger \hat{N}_{k}\ket{\nu_L}.
    \end{eqnarray}
\end{definition}

The QEC matrix is proposed first as a tool to check if the encoding allows for exact recovery of the noise, also known as the Knill-Laflamme conditions. In fact, the QEC matrix contains more information than a Yes/No answer to the question of whether a code is an exact code. For approximate codes, we can discover their optimal fidelity by optimizing the recovery channel 
\begin{align}
    F^{\text{opt}} := \max_{\mathcal{R}} F\left(\mathcal{R}\circ\mathcal{N}\circ\mathcal{E}\right) = F\left(\mathcal{R}^{\text{opt}}\circ\mathcal{N}\circ\mathcal{E}\right).
\end{align}
The fidelity metric of choice is the channel fidelity, defined as 
\begin{eqnarray}
    F\left(\mathcal{Q}\right) := \bra{\Phi} \mathcal{Q} \otimes\mathcal{I}_R\left(\ket{\Phi}\bra{\Phi}\right)\ket{\Phi},
\end{eqnarray}
where $\ket{\Phi}$ is the purified maximally mixed state, $\mathcal{Q}$ is an arbitrary quantum channel, and $\mathcal{I}_R$ is the identity channel acting on the reference ancillary system. With such a metric, the optimal fidelity can be found through convex optimization. The near-optimal fidelity is an optimization-free benchmark that only depends on the QEC matrix, as defined below.

\begin{lemma}[The near-optimal fidelity; Theorem 1, Ref.~\cite{zheng2024near}]\label{lem:near_optimal}
    For a $d_L$-dimensional encoding, $\mathcal{E}$, and a noise channel, $\mathcal{N}$, the near-optimal fidelity is
    \begin{align}
        \tilde{F}^{\text{opt}}=\frac{1}{d_L^2}\left\|\operatorname{Tr}_L \sqrt{M}\right\|_F^2,
    \end{align}
    where $M$ is the QEC matrix, $\left(\operatorname{Tr}_L B\right)_{l,k} = \sum_\mu B_{[\mu l], [\mu k]}$ denotes the partial trace over the code space indices, and $||\cdot||_F$ is the Frobenius norm. The near-optimal fidelity gives a two-sided bound on the optimal fidelity as
    \begin{eqnarray}\label{eq:two_sided_bound_SM}
    \frac{1}{2}\left(1 - \tilde{F}^{\text{opt}} \right)\leq 1 - F^{\text{opt}} \leq 1 - \tilde{F}^{\text{opt}}. 
    \end{eqnarray}
\end{lemma}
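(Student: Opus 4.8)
The plan is to prove three sub-claims from which the statement follows: (i) $\tilde F^{\text{opt}}$ equals the channel fidelity of the transpose-channel recovery; (ii) hence $F^{\text{opt}} \ge \tilde F^{\text{opt}}$, which is the right-hand inequality $1 - F^{\text{opt}} \le 1 - \tilde F^{\text{opt}}$; and (iii) $F^{\text{opt}} \le \sqrt{\tilde F^{\text{opt}}}$, from which the left-hand inequality follows by an elementary one-variable estimate.

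For (i), I would use the elementary fact that for any channel $\mathcal Q$ on the $d_L$-dimensional code with Kraus operators $\{K_m\}$ one has $F(\mathcal Q) = d_L^{-2}\sum_m |\operatorname{Tr}(\hat P_L K_m \hat P_L)|^2$. Applying this to $\mathcal Q = \mathcal R^{\text{TC}}\circ\mathcal N\circ\mathcal E$, whose Kraus operators are $\hat R_i^{\text{TC}}\hat N_l$ with $\hat R_i^{\text{TC}} = \hat P_L\hat N_i^\dagger\,\mathcal N(\hat P_L)^{-1/2}$, and collecting the error vectors $|e_l^\mu\rangle := \hat N_l|\mu_L\rangle$ into the columns of a matrix $E$ indexed by $[\mu l]$, one has $M = E^\dagger E$ and $\mathcal N(\hat P_L) = EE^\dagger$, so that $\operatorname{Tr}(\hat P_L\hat R_i^{\text{TC}}\hat N_l\hat P_L) = (\operatorname{Tr}_L[\,E^\dagger (EE^\dagger)^{-1/2}E\,])_{i,l}$. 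The identity $E^\dagger(EE^\dagger)^{-1/2}E = (E^\dagger E)^{1/2} = \sqrt M$, immediate from a singular-value decomposition of $E$ with the pseudoinverse taken on the support, then gives $F(\mathcal R^{\text{TC}}\circ\mathcal N\circ\mathcal E) = d_L^{-2}\|\operatorname{Tr}_L\sqrt M\|_F^2 = \tilde F^{\text{opt}}$. One small point to clear: the $\hat R_i^{\text{TC}}$ form only a trace-non-increasing map, but completing it to a CPTP recovery only adds Kraus operators supported outside $\operatorname{range}\mathcal N(\hat P_L)$, which contribute zero to every trace above; hence the value $\tilde F^{\text{opt}}$ is actually attained, and (ii) follows because $F^{\text{opt}}$ is the supremum over all recoveries.

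For (iii) I would invoke --- and, for a self-contained treatment, reprove --- the near-optimality of the transpose channel (Barnum--Knill, in the sharpened form of Ng--Mandayam). The structure is: for an arbitrary recovery $\mathcal R$ with Kraus operators $\{R_j\}$, write $F(\mathcal R\circ\mathcal N\circ\mathcal E) = d_L^{-2}\sum_{j,l}|\operatorname{Tr}(\hat P_L R_j\hat N_l\hat P_L)|^2$, insert $\mathbf 1 = \mathcal N(\hat P_L)^{1/2}\mathcal N(\hat P_L)^{-1/2}$ adjacent to $\hat N_l\hat P_L$, and apply a Hilbert--Schmidt Cauchy--Schwarz inequality together with $\sum_j R_j^\dagger R_j = \mathbf 1$; the residual factor carrying $\sqrt M = E^\dagger\mathcal N(\hat P_L)^{-1/2}E$ is exactly what converts $\tilde F^{\text{opt}} = F^{\text{TC}}$ into its square root. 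Granting $F^{\text{opt}}\le\sqrt{\tilde F^{\text{opt}}}$, the left-hand inequality is then purely algebraic: $2F^{\text{opt}} - 1 \le 2\sqrt{\tilde F^{\text{opt}}} - 1 \le \tilde F^{\text{opt}}$, the last step being $(\sqrt{\tilde F^{\text{opt}}}-1)^2\ge 0$, i.e. $1 - F^{\text{opt}} \ge \tfrac12(1-\tilde F^{\text{opt}})$.

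I expect step (iii) to be the main obstacle. The computation in (i) is a clean linear-algebra identity, but the near-optimality bound must control the constructive interference over the code index $\mu$ in $\sum_\mu\langle\mu_L|R_j\hat N_l|\mu_L\rangle$ for a \emph{general} recovery; naive uses of Cauchy--Schwarz (for instance $\|\sum_\mu v_\mu\|^2 \le d_L\sum_\mu\|v_\mu\|^2$) discard this interference and only reproduce the trivial bound $F^{\text{opt}}\le 1$. The delicate part is to keep the recovery's Stinespring isometry intact and exploit precisely how the transpose channel rotates the error vectors back onto the codespace --- this is where $\sqrt M$ and the factor of two enter, and it is the portion I would write out in full detail.
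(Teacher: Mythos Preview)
The paper does not prove this lemma at all: it is quoted verbatim as Theorem~1 of Ref.~\cite{zheng2024near} and merely reviewed ``for completeness,'' with the near-optimality attributed in the main text to the transpose/Petz channel literature~\cite{10.1063/1.1459754,PhysRevA.81.062342}. So there is no in-paper proof to compare against.

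That said, your sketch is the standard argument underlying the cited result and is sound. Part~(i) is exactly the right computation: writing $M=E^\dagger E$ and using $E^\dagger(EE^\dagger)^{-1/2}E=\sqrt{M}$ on the support is clean, and your remark about completing $\mathcal R^{\text{TC}}$ to a CPTP map without changing the fidelity is the correct way to handle trace-non-increase. Part~(ii) is immediate. For part~(iii), the Barnum--Knill/Ng--Mandayam route you outline does deliver $\tilde F^{\text{opt}}\ge (F^{\text{opt}})^2$, and your algebra $1-(F^{\text{opt}})^2=(1-F^{\text{opt}})(1+F^{\text{opt}})\le 2(1-F^{\text{opt}})$ then gives the left-hand inequality. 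Your self-assessment is accurate: the only genuinely nontrivial step is the Cauchy--Schwarz estimate that keeps the code-index coherence intact, and that is precisely what the cited references supply.
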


While the exact expression of the near-optimal fidelity gives a computationally efficient approach to finding the near-optimal performance of any QEC codes, it is not very useful analytically. Therefore, the matrix square root can be expanded perturbatively.
\begin{lemma}[Perturbative form of the near-optimal fidelity; Corollary 1, Ref.~\cite{zheng2024near}]\label{lem:near_optimal_pert}
    The noise channel's Kraus representation can be chosen such that $\frac{1}{d_L}\operatorname{Tr}_L M =D$, with $M$ being the QEC matrix and $D$ being a diagonal matrix. With the residual matrix $\Delta M := M - I_L \otimes D$, the near-optimal infidelity has a perturbative expansion through
   \begin{eqnarray}
   1 - \tilde{F}^{\text{opt}} = \frac{1}{d_L}\norm{f(D)\odot \Delta M }_F^2 + \mathcal{O}\left(\frac{1}{d_L}\norm{f(D)\odot \Delta M }_F^3\right)
   \end{eqnarray}
   where $f(D) _{[\mu l],[\nu k]} = \frac{1}{\sqrt{D_{ll}} + \sqrt{D_{kk}}}$ and the Hadamard product $\left(A\odot B\right)_{ij} = A_{ij}B_{ij}$.
\end{lemma}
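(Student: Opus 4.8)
The plan is to start from the exact identity $\tilde F^{\text{opt}}=\frac{1}{d_L^2}\norm{\operatorname{Tr}_L\sqrt M}_F^2$ of Lemma~\ref{lem:near_optimal} and to Taylor-expand the operator square root about its correctable part. First I would justify the gauge choice: since $(\operatorname{Tr}_L M)_{lk}=\Tr\!\big(\hat P_L\hat N_l^\dagger\hat N_k\big)$ is a positive semidefinite matrix in the Kraus indices and transforms covariantly, $\operatorname{Tr}_L M\mapsto \bar u\,(\operatorname{Tr}_L M)\,u^{T}$, under a unitary reshuffling $\hat N_l\mapsto\sum_m u_{lm}\hat N_m$ of the Kraus operators (which leaves $\mathcal N$ unchanged), one can choose $u$ making $\frac{1}{d_L}\operatorname{Tr}_L M=D$ diagonal with nonnegative entries. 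Restricting to the support of $D$ (discarding the null Kraus directions, on which the whole block of $M$ vanishes by positivity -- this is the truncation mentioned in the main text), I set $M=I_L\otimes D+\Delta M$, where $\Delta M$ is Hermitian and $\operatorname{Tr}_L\Delta M=0$; since $\sum_l\hat N_l^\dagger\hat N_l=I$ one has $\Tr M=d_L$, hence $\Tr D=1$ and $\norm{\sqrt D}_F^2=1$.

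Next I would write $\sqrt M=I_L\otimes\sqrt D+X^{(1)}+X^{(2)}+\cdots$ with $X^{(n)}$ of order $n$ in $\Delta M$, and solve $(\sqrt M)^2=M$ order by order. Because $D$ is diagonal, the first-order Sylvester equation $(I_L\otimes\sqrt D)X^{(1)}+X^{(1)}(I_L\otimes\sqrt D)=\Delta M$ has the entrywise solution $X^{(1)}=f(D)\odot\Delta M$ with $f(D)_{[\mu l],[\nu k]}=(\sqrt{D_{ll}}+\sqrt{D_{kk}})^{-1}$, and $X^{(1)}$ is Hermitian; at second order $X^{(2)}_{[\mu l],[\nu k]}=-\big((X^{(1)})^2\big)_{[\mu l],[\nu k]}/(\sqrt{D_{ll}}+\sqrt{D_{kk}})$. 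The crucial point is the exact cancellation $\operatorname{Tr}_L X^{(1)}=0$: the weight $f(D)_{[\mu l],[\mu k]}$ is independent of $\mu$, so $(\operatorname{Tr}_L X^{(1)})_{lk}=(\sqrt{D_{ll}}+\sqrt{D_{kk}})^{-1}(\operatorname{Tr}_L\Delta M)_{lk}=0$. Hence $\operatorname{Tr}_L\sqrt M=d_L\sqrt D+\operatorname{Tr}_L X^{(2)}+(\text{order}\ge 3)$, and expanding the Frobenius norm and evaluating the one surviving cross term,
\[
\langle d_L\sqrt D,\,\operatorname{Tr}_L X^{(2)}\rangle_F=-\tfrac{d_L}{2}\,\Tr\!\big((X^{(1)})^2\big)=-\tfrac{d_L}{2}\,\norm{f(D)\odot\Delta M}_F^2 ,
\]
gives $\norm{\operatorname{Tr}_L\sqrt M}_F^2=d_L^2-d_L\norm{f(D)\odot\Delta M}_F^2+(\text{higher order})$, so $1-\tilde F^{\text{opt}}=\frac{1}{d_L}\norm{f(D)\odot\Delta M}_F^2+(\text{remainder})$.

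The main obstacle is controlling the remainder and showing it is $\mathcal O\!\big(\frac{1}{d_L}\norm{f(D)\odot\Delta M}_F^3\big)$ rather than merely $\mathcal O(\norm{\Delta M}_F^3)$: the genuine expansion parameter is $X^{(1)}=f(D)\odot\Delta M$, not $\Delta M$, because $D$ may have small eigenvalues. I would make this rigorous via the integral representation $\sqrt M=\frac{1}{\pi}\int_0^\infty M\,(M+t)^{-1}\,t^{-1/2}\,dt$, expanding the resolvent $(M+t)^{-1}$ as a Neumann series in $(I_L\otimes D+t)^{-1/2}\Delta M\,(I_L\otimes D+t)^{-1/2}$ and using the diagonality of $D$ to see that, after integration over $t$, every additional power of $\Delta M$ is accompanied by a factor controlled by $f(D)$; the leading uncancelled contribution is the third-order cross term $2\operatorname{Re}\langle d_L\sqrt D,\operatorname{Tr}_L X^{(3)}\rangle_F$, which this bookkeeping bounds by $\lesssim d_L\norm{f(D)\odot\Delta M}_F^3$, yielding the claimed error after dividing by $d_L^2$. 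A routine secondary check is that the diagonalizing Kraus gauge does not change $\tilde F^{\text{opt}}$, which follows from the unitary covariance of $\operatorname{Tr}_L\sqrt M$ together with the unitary invariance of the Frobenius norm.
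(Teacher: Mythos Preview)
The paper does not supply its own proof of this lemma: it is imported verbatim as Corollary~1 of Ref.~\cite{zheng2024near}, so there is nothing in the paper to compare against beyond the statement itself. Your derivation is the natural one and matches what that reference does: diagonalize $\tfrac{1}{d_L}\operatorname{Tr}_L M$ by a Kraus-unitary gauge, expand $\sqrt{M}$ around $I_L\otimes\sqrt{D}$ via the Sylvester equation, observe the first-order cancellation $\operatorname{Tr}_L X^{(1)}=0$, and read off the second-order cross term $2\operatorname{Re}\langle d_L\sqrt D,\operatorname{Tr}_L X^{(2)}\rangle_F=-d_L\norm{f(D)\odot\Delta M}_F^2$. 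That part is clean and correct.

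The one place where your sketch is genuinely incomplete is the remainder bound. You correctly identify that the expansion parameter is $X^{(1)}=f(D)\odot\Delta M$ rather than $\Delta M$, and the integral representation plus Neumann series is the right tool. But the claim that the $n$th term is controlled by $\norm{f(D)\odot\Delta M}_F^n$ needs an actual inequality, not just ``bookkeeping'': after the $t$-integration the higher-order terms involve nested sums of the form $\sum_{j}\frac{(\cdots)}{\sqrt{D_{ll}}+\sqrt{D_{jj}}}\cdot\frac{(\cdots)}{\sqrt{D_{jj}}+\sqrt{D_{kk}}}$, and bounding these by powers of the Frobenius norm of $f(D)\odot\Delta M$ requires either a submultiplicativity argument or passing through the operator norm with $\norm{\cdot}\le\norm{\cdot}_F$. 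Note also the caveat stated immediately after the lemma in the paper: the implied constant in the $\mathcal O$ is allowed to depend on $d_L$ (the bound is asserted ``under the assumption that $d_L$ is a finite constant''), which relaxes what you need to prove and is consistent with your estimate $\lesssim d_L\norm{f(D)\odot\Delta M}_F^3$ before dividing by $d_L^2$.
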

Based on Lemma~\ref{lem:near_optimal_pert}, we have the perturbative form of the near-optimal infidelity as 
\begin{eqnarray}
    \tilde{\epsilon} := \frac{1}{d_L}\norm{f(D)\odot \Delta M }_F^2,
\end{eqnarray}
such that 
\begin{eqnarray}
    1-\tilde{F}^{\text{opt}} = \tilde{\epsilon} + O\left(\tilde{\epsilon}^{3/2}\right).\label{eq:pert_expression_appendix}
\end{eqnarray}
Here the residual term is bounded under the assumption that $d_L$ is a finite constant. The perturbative form is quite useful for obtaining analytical expressions for qubit and bosonic codes, as shown in Ref.~\cite{zheng2024near}. Lemma~\ref{lem:near_optimal_pert} provides one possible definition of $D$. Nevertheless, under such a definition, it could be cumbersome to compute the analytical form of $D$ since it requires diagonalization. Alternatively, we can relax the definition and instead define $D$ to be a truncated diagonal matrix. The residual part is still defined as $\Delta M := M - I_L \otimes D$. As a result, $\Tr D$ does not necessarily equal 1 as before. Qualitatively, to obtain an analytical expression, we sacrifice some of the correctable parts by dividing them into the residual matrix. 

In Eq.~\eqref{eq:pert_expression_appendix} we assumed $d_L$ to be finite to bound the residual term from the approximation. Such an assumption does not hold when we are concerned with information-theoretic properties such as achievable rates. Nevertheless, it has been shown in Ref.~\cite{zheng2024near} that even with increasing logical dimension, a vanishing first-order approximation is sufficient to guarantee vanishing near-optimal infidelity, which is stated formally below.
\begin{lemma}[Corollary 3, Ref.~\cite{zheng2024near}]\label{lem:vanish_pert_form}
    As $\tilde{\epsilon} \to 0$, the near-optimal fidelity
  \begin{align}
       \tilde{F}^{\text{opt}} \to \Tr{D}.\label{eq:vanishing_pert_form}
  \end{align}
\end{lemma}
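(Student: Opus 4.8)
The plan is to bound $\left|\tilde{F}^{\text{opt}} - \operatorname{Tr} D\right|$ directly by a quantity that vanishes with $\tilde{\epsilon}$, routing every estimate through Frobenius- and trace-norm inequalities so that no power of the logical dimension survives. Write $A := \sqrt{I_L\otimes D}$ and $B := \sqrt{M}$, both well defined and positive semidefinite since $D\succeq 0$ and the QEC matrix $M$ (Definition~\ref{def:qec_mat}) is a Gram matrix. Because $D$ is diagonal, $\operatorname{Tr}_L A = d_L\sqrt{D}$, so $\operatorname{Tr} D = \frac{1}{d_L^2}\norm{\operatorname{Tr}_L A}_F^2$, while Lemma~\ref{lem:near_optimal} gives $\tilde{F}^{\text{opt}} = \frac{1}{d_L^2}\norm{\operatorname{Tr}_L B}_F^2$. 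First I would record two a priori bounds: $\operatorname{Tr} D \le \frac{1}{d_L}\operatorname{Tr} M \le 1$ (from $\sum_l\hat{N}_l^\dagger\hat{N}_l\preceq I$), and, using the Cauchy--Schwarz estimate $\norm{\operatorname{Tr}_L Z}_F\le \sqrt{d_L}\,\norm{Z}_F$ valid for any $d_LL\times d_LL$ matrix $Z$, also $\norm{\operatorname{Tr}_L B}_F\le\sqrt{d_L}\,\norm{B}_F = \sqrt{d_L\operatorname{Tr} M}\le d_L$ and $\norm{\operatorname{Tr}_L A}_F\le d_L$. Factoring a difference of squares and applying the reverse triangle inequality then gives
\begin{equation*}
\left|\tilde{F}^{\text{opt}} - \operatorname{Tr} D\right| = \frac{1}{d_L^2}\left|\,\norm{\operatorname{Tr}_L B}_F - \norm{\operatorname{Tr}_L A}_F\,\right|\left(\norm{\operatorname{Tr}_L B}_F + \norm{\operatorname{Tr}_L A}_F\right) \le \frac{2}{d_L}\,\norm{\operatorname{Tr}_L(B-A)}_F ,
\end{equation*}
so everything reduces to controlling $\norm{\operatorname{Tr}_L(B-A)}_F$, for which it suffices to control $\norm{B-A}_F$.

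The key device is a first-order approximant to $\sqrt{M}$ built from the perturbative object of Lemma~\ref{lem:near_optimal_pert}. Set $Y := f(D)\odot\Delta M$, so $\norm{Y}_F^2 = d_L\tilde{\epsilon}$; since $D$ is diagonal, $Y$ is exactly the Hermitian solution of the Sylvester equation $AY + YA = \Delta M$ (on the support of $D$; wherever a diagonal entry of $D$ vanishes, finiteness of $\tilde{\epsilon}$ forces the corresponding block of $\Delta M$ to vanish too, and we take $Y=0$ there). Put $C := A + Y$, which is Hermitian, and compute
\begin{equation*}
C^2 = A^2 + (AY + YA) + Y^2 = (I_L\otimes D) + \Delta M + Y^2 = M + Y^2 \succeq M \succeq 0 ,
\end{equation*}
so that $|C| = \sqrt{M + Y^2}$. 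It then remains to pass from $B=\sqrt M$ to $|C|$, from $|C|$ to $C$, and from $C$ to $A$.

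For the first two steps I would invoke two classical matrix inequalities. The Powers--St\o rmer inequality gives $\norm{\,|C| - B\,}_F^2 = \norm{\sqrt{M+Y^2}-\sqrt{M}}_F^2 \le \norm{(M+Y^2)-M}_1 = \operatorname{Tr}(Y^2) = \norm{Y}_F^2$. For the gap between $C$ and $|C|$, the Hoffman--Wielandt inequality applied to the Hermitian pair $(C,A)$, together with $A\succeq 0$, bounds the negative part of $C$: every negative eigenvalue obeys $\lambda_i(C)^2 \le \big(\lambda_i^{\downarrow}(A)-\lambda_i^{\downarrow}(C)\big)^2$, hence $\norm{C-|C|}_F^2 = 4\sum_{\lambda_i(C)<0}\lambda_i(C)^2 \le 4\norm{C-A}_F^2 = 4\norm{Y}_F^2$. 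Combining with the trivial $\norm{C-A}_F = \norm{Y}_F$ yields $\norm{B-A}_F \le 4\norm{Y}_F$, so $\norm{\operatorname{Tr}_L(B-A)}_F \le 4\sqrt{d_L}\,\norm{Y}_F = 4 d_L\sqrt{\tilde{\epsilon}}$, and plugging back gives $\left|\tilde{F}^{\text{opt}}-\operatorname{Tr} D\right|\le 8\sqrt{\tilde{\epsilon}}$ — a bound uniform in $d_L$ and in the Kraus order — which proves $\tilde{F}^{\text{opt}}\to\operatorname{Tr} D$ as $\tilde{\epsilon}\to 0$. (In the gauge where $\frac{1}{d_L}\operatorname{Tr}_L M = D$ one has $\operatorname{Tr}_L\Delta M = 0$, hence $\operatorname{Tr}_L Y = 0$, which removes the first-order contribution and improves the constant.)

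The main obstacle — and why a naive approach fails — is dimensional bookkeeping: crude bounds such as $\norm{\sqrt{M}-\sqrt{I_L\otimes D}}_{\mathrm{op}}\le\sqrt{\norm{\Delta M}_{\mathrm{op}}}$, or estimating $|C|-C$ via the number of eigenvalues, pick up factors of $d_LL$ that are fatal in the asymptotic regime relevant to achievable rates. The fix is to keep everything in unitarily invariant norms adapted to the partial trace: the $\sqrt{d_L}$ lost in $\norm{\operatorname{Tr}_L Z}_F\le\sqrt{d_L}\,\norm{Z}_F$ is exactly cancelled by the prefactor $1/d_L$, while Powers--St\o rmer and Hoffman--Wielandt convert $Y^2$ and the mild non-positivity of $C$ into $\norm{Y}_F$ with no dimensional penalty. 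The only remaining subtlety is the degenerate case of a vanishing diagonal entry of $D$, which is benign precisely because finiteness of $\tilde{\epsilon}$ already kills the offending entries of $\Delta M$.
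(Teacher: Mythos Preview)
The paper does not prove this lemma: it is quoted verbatim as ``Corollary~3, Ref.~\cite{zheng2024near}'' with no accompanying argument, so there is no in-paper proof to compare against. Your proposal is therefore a genuine, self-contained proof rather than a reconstruction of something in the text.

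Your argument is correct, and in fact sharper than the bare limit statement: you obtain the quantitative, dimension-free estimate $\big|\tilde{F}^{\text{opt}}-\operatorname{Tr}D\big|\le 8\sqrt{\tilde{\epsilon}}$, which is exactly what is needed downstream for Corollary~\ref{cor:achieve_rate_near_optimal_fid}. The mechanism --- recognising $Y=f(D)\odot\Delta M$ as the Sylvester solution so that $(A+Y)^2=M+Y^2$, then combining Powers--St{\o}rmer for $\sqrt{M+Y^2}-\sqrt{M}$ with Hoffman--Wielandt to control the negative part of $A+Y$ --- is clean and avoids any $d_L L$ factor. The handling of zero diagonal entries of $D$ via finiteness of $\tilde{\epsilon}$ is also right.

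One small caveat: your a~priori inequality $\operatorname{Tr}D\le\frac{1}{d_L}\operatorname{Tr}M$ is not automatic for an \emph{arbitrary} diagonal $D$ in the decomposition $M=I_L\otimes D+\Delta M$ (the paper allows truncations beyond the canonical gauge $D=\frac{1}{d_L}\operatorname{Tr}_L M$). This does not damage the argument: without it your chain gives $\big|\tilde{F}^{\text{opt}}-\operatorname{Tr}D\big|\le 4\big(1+\sqrt{\operatorname{Tr}D}\big)\sqrt{\tilde{\epsilon}}$, and since the paper's applications always work with $\operatorname{Tr}D\to 1$ (cf.\ Eq.~\eqref{eq:pert_form_property} and the hypothesis of Corollary~\ref{cor:achieve_rate_near_optimal_fid}), the bound remains uniform. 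It would be cleanest to either state $\operatorname{Tr}D\le 1$ as a standing hypothesis or carry the factor $1+\sqrt{\operatorname{Tr}D}$ explicitly.
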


The above tools allow us to develop the key corollary we will use to obtain the achievable rate.

\begin{corollary}\label{cor:achieve_rate_near_optimal_fid}
    For a quantum code encoding a $d_L$-dimensional logical space and passing through $N$ independent uses of a channel $\mathcal{Q}^{\otimes N}$, the rate 
    \begin{eqnarray}
            R = \frac{1}{N}\log_2 d_L 
    \end{eqnarray}
    is achievable if $\tilde{\epsilon} \to 0$ and $D$ is chosen such that $\Tr D \to 1$.
\end{corollary}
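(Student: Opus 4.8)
\emph{Proof strategy.} The plan is to repackage the two limiting hypotheses into an explicit encoder--recovery family and then read off achievability directly from its definition, exploiting the fact that the near-optimal fidelity is not merely a bound but the exact performance of a concrete recovery. Fix a code family $\{(\mathcal E_N,d_{L,N})\}_N$ as in the statement, with $\tfrac1N\log_2 d_{L,N}\to R$, with first-order near-optimal infidelity $\tilde\epsilon_N\to 0$, and with the chosen diagonal block satisfying $\Tr D_N\to 1$; let $\mathcal Q$ be the channel of interest. To show $R$ is achievable it suffices, by the definition of an achievable rate, to exhibit recoveries $\mathcal R_N$ with $\tfrac1N\log_2 d_{L,N}\to R$ and $F(\mathcal R_N\circ\mathcal Q^{\otimes N}\circ\mathcal E_N)\to 1$.

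First I would take $\mathcal R_N$ to be the transpose (Petz) channel built from $\mathcal Q^{\otimes N}$ and $\mathcal E_N$: this is an explicit CPTP map (Kraus operators $\hat P_L\hat N_i^\dagger\,\mathcal Q^{\otimes N}(\hat P_L)^{-1/2}$, pseudoinverse understood), and by Lemma~\ref{lem:near_optimal} together with the construction of $\tilde F^{\mathrm{opt}}$ in Ref.~\cite{zheng2024near} it realises exactly the near-optimal fidelity, $F(\mathcal R_N\circ\mathcal Q^{\otimes N}\circ\mathcal E_N)=\tilde F^{\mathrm{opt}}_N$. Next I would control $\tilde F^{\mathrm{opt}}_N$ through Lemma~\ref{lem:vanish_pert_form}, i.e.\ Eq.~\eqref{eq:vanishing_pert_form}: since $\tilde\epsilon_N\to 0$ this gives $\tilde F^{\mathrm{opt}}_N\to\Tr D_N$, and combined with $\Tr D_N\to 1$ it yields $\tilde F^{\mathrm{opt}}_N\to 1$; hence the infidelity of the pair $(\mathcal E_N,\mathcal R_N)$ tends to $0$. (If one prefers the genuinely optimal recovery instead, the left inequality of Eq.~\eqref{eq:two_sided_bound_SM} gives $1-F^{\mathrm{opt}}_N\le 1-\tilde F^{\mathrm{opt}}_N\to 0$ equally well.) Feeding $F\to 1$ and $\tfrac1N\log_2 d_{L,N}\to R$ into the definition then closes the argument.

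As an alternative closing step — one that also ties back to the coherent-information route used elsewhere in the paper — I could instead convert the code into a Werner state for each $N$ and use $\tfrac1N I(\tilde F^{\mathrm{opt}}_N,d_{L,N})$ as an achievable rate following Ref.~\cite{8482307}. Writing $F=\tilde F^{\mathrm{opt}}_N$ and $d_L=d_{L,N}$, the correction terms obey $\tfrac1N\big|F\log_2 F+(1-F)\log_2\tfrac{1-F}{d_L^2-1}\big|\le \tfrac1N|F\log_2 F|+\tfrac1N|(1-F)\log_2(1-F)|+2(1-F)\,\tfrac{\log_2 d_L}{N}$, which vanishes because $F\to 1$ while $\tfrac{\log_2 d_L}{N}$ stays bounded, so $\tfrac1N I(\tilde F^{\mathrm{opt}}_N,d_{L,N})\to R$.

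I do not expect a serious obstacle here: once the pieces are in place the corollary is an assembly from the definitions. The only genuinely non-routine ingredient is Lemma~\ref{lem:vanish_pert_form}, imported from Ref.~\cite{zheng2024near} — that a vanishing first-order term already forces $\tilde F^{\mathrm{opt}}\to\Tr D$ even in the growing-dimension regime where the cubic remainder of Eq.~\eqref{eq:pert_expression_appendix} is not controlled. The one point to stay alert to is that $D$ is allowed to be a \emph{truncated} diagonal block rather than the exact partial trace $\tfrac1{d_L}\operatorname{Tr}_L M$, so in general $\Tr D\le 1$ and some correctable weight may have been moved into $\Delta M$; the hypothesis $\Tr D\to 1$ is precisely what guarantees that this discarded weight is asymptotically negligible, and it is indispensable.
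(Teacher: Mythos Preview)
Your proposal is correct and follows essentially the same route as the paper: invoke Lemma~\ref{lem:vanish_pert_form} to turn $\tilde\epsilon\to 0$ into $\tilde F^{\mathrm{opt}}\to\Tr D\to 1$, then pass to $F^{\mathrm{opt}}\to 1$ via Eq.~\eqref{eq:two_sided_bound_SM} (or, equivalently as you note, by taking the transpose channel as the explicit recovery), and conclude from the definition of achievable rate. The paper's own proof is just a one-line citation of these three ingredients, so your version is a faithful and more fully articulated expansion; the only slip is that you label the inequality $1-F^{\mathrm{opt}}\le 1-\tilde F^{\mathrm{opt}}$ as the ``left'' one when it is the right-hand side of Eq.~\eqref{eq:two_sided_bound_SM}.
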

\begin{proof}
    This corollary is a direct consequence of Lemma~\ref{lem:vanish_pert_form}, the two-sided bounds presented in Eq.~\eqref{eq:two_sided_bound_SM}, and the definition of achievable rates.
\end{proof}

\section{QEC matrix of GKP}
\subsection{Pure loss}
We first focus on the example of GKP under pure loss. The high-level idea is to view multi-mode GKP as Gaussian-transformed multi-mode square lattice GKP, which has a tensor product structure. Therefore, many derivations for single-mode square lattice GKP can be straightforwardly applied in multimode general lattices. Therefore, we start with some properties of square-lattice GKP in single mode.

\subsubsection{Single mode \label{ap:proveM}}
For simplicity, we begin with single-mode GKP and later generalize to multimode scenarios. Since GKP is characterized by its underlying lattice, it is relatively simple to work with displacement operators, which essentially apply a translation on the GKP lattice. In fact, we can derive the following lemma for square lattice GKP code.

\begin{lemma}[Overlap of displaced single-mode square lattice GKP code; \cite{PhysRevA.97.032346}, Eq.~(D11)] \label{lem:single_mode_displaced_code_overlap}
    For a single mode square lattice GKP code encoding $d-$dimensional logical space, the overlap between finite energy displaced codewords is given by
    \begin{align}
    {}_\Delta\bra{\mu_{1}} \hat{D}(\alpha)\ket{\mu_{ 2}}_\Delta  =& \frac{N_{\mu}N_{\nu}}{2\sqrt{\pi}(1-e^{-2\Delta^2})}\sum_{n_1,n_2 \in \mathbb{Z}} e^{i\pi (n_1 + \frac{\mu+\nu}{d})n_2}e^{-\frac{\abs{\sqrt{2\pi}C\bold{L} - \alpha}^2}{4\tanh \frac{\Delta^2}{2}}}e^{-\frac{\tanh \frac{\Delta^2}{2}}{4}\abs{\sqrt{2\pi}C\bold{L} + \alpha}^2}\\
    =& \left(1 +O\left(e^{-\frac{\pi}{d}\left(n_\Delta + \frac{1}{2}\right)}\right)\right)\sum_{n_1,n_2 \in \mathbb{Z}} e^{i\pi (n_1 + \frac{\mu+\nu}{d})n_2}e^{-\frac{\abs{\sqrt{2\pi}C\bold{L} - \alpha}^2}{4\tanh \frac{\Delta^2}{2}}}e^{-\frac{\tanh \frac{\Delta^2}{2}}{4}\abs{\sqrt{2\pi}C\bold{L} + \alpha}^2} \label{eq:singlemode_QEC_squareuDmu}
    \end{align}
    where $D_\alpha$ is the displacement operator, $\bold{L}:= \sqrt{\frac{1}{d}}\left(dn_1+\mu-\nu, n_2\right)^T$, $C$ is the linear map defined in Eq.~\eqref{eq:conversion_T_D}, and $n_\Delta = \frac{1}{e^{2 \Delta^2} -1}$.
\end{lemma}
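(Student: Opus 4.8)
The plan is to reduce the overlap to a doubly‑infinite sum of Gaussian integrals and evaluate it by completing the square. Since $e^{-\Delta^2\hat n}$ is Hermitian, write
\[
{}_\Delta\bra{\mu_1}\hat D(\alpha)\ket{\mu_2}_\Delta = N_{\mu_1} N_{\mu_2}\; {}_0\bra{\mu_1}\, e^{-\Delta^2\hat n}\,\hat D(\alpha)\, e^{-\Delta^2\hat n}\,\ket{\mu_2}_0 ,
\]
and expand the infinite‑energy square‑lattice codewords as position combs $\ket{\mu}_0^{\mathrm{sq}}=\sum_{m\in\mathbb Z}\ket{\sqrt\pi(dm+\mu)}_x$. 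Each envelope then contributes a Mehler (imaginary‑time oscillator) kernel $\bra{x}e^{-\Delta^2\hat n}\ket{y}$, an explicit Gaussian in $(x,y)$ whose widths are set by $\tanh(\Delta^2/2)$, while $\hat D(\alpha)$ contributes a rigid position shift together with a linear phase. An equivalent route, which I would probably prefer for keeping the symplectic phases transparent, is to substitute the displacement‑basis expansion $e^{-\Delta^2\hat n}\propto\int d^2\beta\, e^{-|\beta|^2/(2\tanh(\Delta^2/2))}\hat D(\beta)$ recorded above, so that $e^{-\Delta^2\hat n}\hat D(\alpha)e^{-\Delta^2\hat n}$ becomes a Gaussian‑weighted integral over a product of three displacement operators that collapses to a single $\hat D(\cdot)$ up to a symplectic‑area phase.

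The key steps are then: (i) carry out the sums over the comb indices; because ${}_0\bra{\mu_1}\hat D(\mathbf u)\ket{\mu_2}_0$ is supported (as a sum of delta distributions) only on vectors $\mathbf u$ lying in the square lattice shifted by the logical offset, the surviving vectors are exactly $\mathbf u=\sqrt{2\pi}\,C\mathbf L$ with $\mathbf L=\sqrt{1/d}\,(dn_1+\mu_1-\mu_2,\,n_2)^T$, and the comb indices are relabeled as $(n_1,n_2)$; (ii) perform the remaining Gaussian integral (over the transverse displacement variable, equivalently over $x$) by completing the square, which generates the two Gaussian factors $e^{-|\sqrt{2\pi}C\mathbf L-\alpha|^2/(4\tanh(\Delta^2/2))}$ (the ``narrow'' comb‑tooth factor) and $e^{-(\tanh(\Delta^2/2)/4)\,|\sqrt{2\pi}C\mathbf L+\alpha|^2}$ (the ``wide'' envelope factor); (iii) collect the symplectic‑area phases accumulated when the net displacement $\hat D(\sqrt{2\pi}C\mathbf L)$ is split into its $q$‑ and $p$‑translation parts and commuted through the comb, together with the fractional logical offsets $\mu/d$, which assemble into precisely $e^{i\pi(n_1+\frac{\mu+\nu}{d})n_2}$ (the familiar GKP/Zak phase, with $\mu\equiv\mu_1,\ \nu\equiv\mu_2$); (iv) read off the remaining $\alpha$‑ and $\mathbf L$‑independent prefactor and fix $N_\mu$ by imposing ${}_\Delta\langle\mu|\mu\rangle_\Delta=1$, i.e.\ the special case $\mu_1=\mu_2$, $\alpha=0$. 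This proves the first displayed identity and simultaneously determines $N_\mu$, as promised in the remark preceding the lemma.

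For the second line I would show that the prefactor equals $1+O\!\big(e^{-\frac{\pi}{d}(n_\Delta+\frac12)}\big)$. In the normalization sum from step (iv) the $n_1=n_2=0$ term is the leading contribution; for $\alpha=0$ the product of the two Gaussian factors for a given $\mathbf L$ is $e^{-\frac14|\sqrt{2\pi}C\mathbf L|^2(\coth(\Delta^2/2)+\tanh(\Delta^2/2))}$, and using $\coth(\Delta^2/2)+\tanh(\Delta^2/2)=2\coth(\Delta^2)=2(2n_\Delta+1)$ together with the fact that the smallest nonzero value of $|\sqrt{2\pi}C\mathbf L|^2$ is $\pi/d$ (attained at $n_2=\pm1$, $n_1=0$), the dominant correction to the sum is of order $e^{-\frac{\pi}{d}(n_\Delta+\frac12)}$, uniformly in $\mu,\nu\in\mathbb Z_d$. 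Feeding this back through $N_\mu^2\propto(1-e^{-2\Delta^2})$ converts the prefactor $\tfrac{N_\mu N_\nu}{2\sqrt\pi(1-e^{-2\Delta^2})}$ into $1+O\!\big(e^{-\frac{\pi}{d}(n_\Delta+\frac12)}\big)$, which is the claimed replacement.

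The main obstacle is not conceptual but the careful simultaneous bookkeeping of (a) the ``rotation'' of $\alpha$ and the cross terms that appear because $e^{-\Delta^2\hat n}$ and $\hat D(\alpha)$ do not commute, and (b) the accumulation of symplectic‑area phases, so that they really do combine into the single clean factor $e^{i\pi(n_1+\frac{\mu+\nu}{d})n_2}$; making the error estimate in (iii)--(iv) uniform in the logical labels is a minor additional check. Since exactly this computation appears as Eq.~(D11) of Ref.~\cite{PhysRevA.97.032346}, the cleanest presentation is to adapt that derivation rather than repeat it in full.
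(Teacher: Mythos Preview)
Your proposal is correct and follows essentially the same route as the paper: compute the infinite-energy overlap ${}_0\bra{\mu}\hat D(\alpha)\ket{\nu}_0$ as a comb of delta functions (the paper uses the position-basis expansion plus Poisson summation to produce the phase $e^{i\pi(n_1+\frac{\mu+\nu}{d})n_2}$), insert the displacement-basis expansion of $e^{-\Delta^2\hat n}$ on both sides, perform the resulting Gaussian integrals, and fix $N_\mu$ by normalization at $\alpha=0$, $\mu=\nu$. Your error-estimate argument via $\coth(\Delta^2/2)+\tanh(\Delta^2/2)=2(2n_\Delta+1)$ and the shortest dual-lattice vector $|\mathbf L|^2_{\min}=1/d$ is exactly what the paper does.
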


\begin{proof}
We can focus first on the infinite energy scenario. Expanding the codewords on the position basis,
\begin{eqnarray}
     {}_0\bra{\mu} \hat{D}(\alpha)\ket{\nu}_0 
    &=& \sum_{n_1,n_2 \in \mathbb{Z}}{}_{\hat{x}}\bra{\sqrt{\pi}(dn_1+\mu)}_xD_\alpha \ket{\sqrt{\pi}(dn_2+\nu)}_x\\
    &=& \sqrt{\frac{\pi}{2d}}\sum_{n_1,n_2 \in \mathbb{Z}}e^{i\pi (n_1+\frac{\mu+\nu}{d})n_2} \delta^2 \left(\alpha - \sqrt{\frac{\pi}{d}}(dn_1+\mu-\nu+in_2)\right)\\
    &=& \sqrt{\frac{\pi}{2d}}\sum_{n_1,n_2 \in \mathbb{Z}}e^{i\pi (n_1+\frac{\mu+\nu}{d})n_2} \delta^2 \left(\alpha - \sqrt{2\pi}C\bold{L}\right)
    \label{eq:inf_single_mode_displace_overlap}
\end{eqnarray}
where $\alpha = \alpha_1 + i\alpha_2$ and $\bold{L}:= \frac{1}{\sqrt{{d}}}\left(dn_1+\mu-\nu, n_2\right)^T$.  Here, we have applied the displacement operator decomposition in the position basis, ${}_{\hat{x}}\bra{q_1}D_\alpha \ket{q_2}_{\hat{x}}
    = \frac{1}{\sqrt{2}}e^{i\alpha_1 \alpha_2}e^{i\sqrt{2}\alpha_2 q_2}\delta (\alpha_1 + \frac{q_2-q_1}{\sqrt{2}})$, and the Poisson summation. The finite energy matrix elements can be written as an integral of infinite energy matrix elements,
\begin{align}
    {}_\Delta\bra{\mu} \hat{D}(\alpha)\ket{\nu}_\Delta=& \frac{N_{\mu}N_{\nu}}{\pi^2(1-e^{-\Delta^2})^2}\int d^2 \beta \int d^2 \gamma e^{-\frac{|\beta|^2+|\gamma|^2}{2 \tanh\frac{\Delta^2}{2} }} {}_0\bra{\mu} \hat{D}(\beta)\hat{D}(\alpha)\hat{D}(\gamma)\ket{\nu}_0 \\
    =& \frac{N_{\mu}N_{\nu}}{\pi^2(1-e^{-\Delta^2})^2}\int d^2 \beta \int d^2 \gamma e^{-\frac{|\beta|^2+|\gamma|^2}{2 \tanh\frac{\Delta^2}{2} }} e^{i\Im{\beta\alpha^*+\beta\gamma^*+\alpha\gamma^*}} {}_0\bra{\mu} \hat{D}(\alpha+\beta+\gamma)\ket{\nu}_0, \label{eq:fi_mDn}\\
    =& \frac{N_{\mu}N_{\nu}}{2\sqrt{\pi}(1-e^{-2\Delta^2})}\sum_{n_1,n_2 \in \mathbb{Z}} e^{i\pi (n_1 + \frac{\mu+\nu}{d})n_2}e^{-\frac{\abs{\sqrt{2\pi}C\bold{L} - \alpha}^2}{4\tanh \frac{\Delta^2}{2}}}e^{-\frac{\tanh \frac{\Delta^2}{2}}{4}\abs{\sqrt{2\pi}C\bold{L} + \alpha}^2}\\
    =& \left(1 +O\left(e^{-\frac{\pi}{d}\left(n_\Delta + \frac{1}{2}\right)}\right)\right)\sum_{n_1,n_2 \in \mathbb{Z}} e^{i\pi (n_1 + \frac{\mu+\nu}{d})n_2}e^{-\frac{\abs{\sqrt{2\pi}C\bold{L} - \alpha}^2}{4\tanh \frac{\Delta^2}{2}}}e^{-\frac{\tanh \frac{\Delta^2}{2}}{4}\abs{\sqrt{2\pi}C\bold{L} + \alpha}^2} 
\end{align}
where we make use of the composition rule of the displacement operator. The last equality is obtained by considering $\mu= \nu$ and $\alpha = 0$ and by the definition of normalization factors, $N_\mu$, we have 
\begin{eqnarray}
    {}_\Delta\bra{\mu} \ket{\mu}_\Delta &=& \frac{N^2_{\mu}}{2\sqrt{\pi}(1-e^{-2\Delta^2})}\sum_{n_1,n_2 \in \mathbb{Z}} e^{i\pi (n_1 + \frac{\mu+\nu}{d})n_2}e^{-\pi (n_\Delta + \frac{1}{2})\abs{\bold{L}}^2} = 1.
\end{eqnarray}
Here we have defined $n_\Delta = \frac{1}{e^{2 \Delta^2} -1}$ and used the identity $\sqrt{2}\abs{CL} = \abs{L}$. Physically, $n_\Delta$ is the average photon number of the GKP code, which will become clear in Corollary~\ref{cor:average_energy_single_mode}. The leading order term is when $n_{1,2} = 0$, and the next leading order term ($n_1 = 0, n_2 = \pm 1$) will be exponentially suppressed by the photon number. In the limit of large energy, we have 
\begin{eqnarray}
    N_{\mu}^2 = 2\sqrt{\pi}(1-e^{-2\Delta^2}) \left(1 + O\left(e^{-\frac{\pi}{d}\left(n_\Delta + \frac{1}{2}\right)}\right)\right).\label{eq:normalization}
\end{eqnarray}
Here, the terms that are dependent on $\mu$ are absorbed in the second term.
\end{proof}

As a consequence of Lemma~\ref{lem:single_mode_displaced_code_overlap} and Lemma~\ref{lem:loss_disp_expand}, we arrive at the analytic form of the QEC matrix. While a similar form has been derived in Eq.~(D16) of Ref.~\cite{PhysRevA.97.032346}, the form we provide here is exact, without approximations based on the large energy regime assumption.

\begin{lemma}[QEC matrix of the single-mode square lattice GKP code]\label{lem:single_mode_QEC}
    For a single mode square lattice GKP code encoding $d-$dimensional logical space, the QEC matrix of the GKP code under pure loss has the form of
\begin{align}
    M_{[\mu l],[\nu k]} &= \left(1 +O\left(e^{-\frac{\pi}{d}\left(n_\Delta + \frac{1}{2}\right)}\right)\right)\frac{t^{\frac{l+k}{2}}}{\gamma n_\Delta + 1}\sum_{n_1,n_2 \in \mathbb{Z}}e^{i\pi (n_1 + \frac{\mu+\nu}{d})n_2}e^{-\frac{\pi}{2}\frac{(1-\gamma)}{ (\gamma+\frac{1}{n_{\Delta}})}\left|\bold{L}\right|^2}\left\langle l\left|\hat{D}\left({\sqrt{\frac{n_\Delta + 1}{\gamma n_\Delta + 1}}\left(\sqrt{2\pi}C\bold{L}\right)^\ast}\right) \right| k\right\rangle
    \label{eq:singlemode_QEC_square}
\end{align}
where the average energy $n_{\Delta} :=\frac{1}{e^{2\Delta^2}-1}$, the thermal factor $t = \frac{\gamma n_\Delta}{\gamma n_\Delta + 1}$, and $\bold{L}:= \frac{1}{\sqrt{d}}\left(dn_1+\mu-\nu, n_2\right)^T$.
\end{lemma}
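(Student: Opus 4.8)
The plan is to compose the two structural inputs already established: the displacement‑operator representation of the loss Kraus pair (Lemma~\ref{lem:loss_disp_expand}) and the closed form for the overlap of displaced finite‑energy codewords (Lemma~\ref{lem:single_mode_displaced_code_overlap}). Starting from Definition~\ref{def:qec_mat}, $M_{[\mu l],[\nu k]}={}_\Delta\bra{\mu}\hat E_l^\dagger\hat E_k\ket{\nu}_\Delta$, inserting Lemma~\ref{lem:loss_disp_expand} collapses the double sum over Kraus indices into a single integral,
\[
M_{[\mu l],[\nu k]}=\int\frac{d^2\alpha}{\pi}\,e^{-\frac12(1-\gamma)|\alpha|^2}\,\langle l|\hat D(\alpha^\ast)|k\rangle\;{}_\Delta\bra{\mu}\hat D(\alpha\sqrt{\gamma})\ket{\nu}_\Delta ,
\]
and then substituting Lemma~\ref{lem:single_mode_displaced_code_overlap} with $\alpha\mapsto\alpha\sqrt\gamma$ replaces the codeword overlap by the lattice sum $\sum_{n_1,n_2}e^{i\pi(n_1+\frac{\mu+\nu}{d})n_2}\exp\!\big(-\tfrac{|\bold{w}-\alpha\sqrt\gamma|^2}{4\tanh(\Delta^2/2)}\big)\exp\!\big(-\tfrac{\tanh(\Delta^2/2)}{4}|\bold{w}+\alpha\sqrt\gamma|^2\big)$, where $\bold{w}:=\sqrt{2\pi}C\bold L$, up to the overall prefactor $1+O(e^{-\frac\pi d(n_\Delta+\frac12)})$. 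That prefactor is independent of $\alpha,n_1,n_2,l,k,\mu,\nu$, so it simply factors out of everything that follows; and for fixed $\Delta>0$ all sums and integrals converge absolutely, so Fubini lets us exchange them and handle each $(n_1,n_2)$ term in isolation.

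The core step is the $\alpha$‑integration. The only non‑Gaussian ingredient is $\langle l|\hat D(\alpha^\ast)|k\rangle$ (a Laguerre polynomial in $|\alpha|^2$ times $e^{-|\alpha|^2/2}$), and rather than case‑splitting on $l\gtrless k$ I would fold in the coherent‑state generating function $\sum_{l,k}\frac{\bar x^l y^k}{\sqrt{l!k!}}\langle l|\hat D(\beta)|k\rangle=\exp(-\tfrac12|\beta|^2+\bar x\beta-\beta^\ast y+\bar x y)$ with $\beta=\alpha^\ast$, which turns the integrand into a pure Gaussian in $(\alpha,\bar\alpha)$ with linear terms in $\bar x,y,\bold w,\bar{\bold w}$. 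Completing the square and integrating produces a prefactor $1/S$, where $S$ is the coefficient of $|\alpha|^2$; collecting the four contributions gives $S=1-\tfrac\gamma2+\tfrac\gamma2\coth\Delta^2$, which, using $n_\Delta=1/(e^{2\Delta^2}-1)$ — equivalently $\coth\Delta^2=2n_\Delta+1$ — collapses to $S=\gamma n_\Delta+1$, reproducing the stated prefactor $1/(\gamma n_\Delta+1)$.

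It then remains to read off the three closed forms from the post‑integration exponent by matching it against the same generating function with $x,y$ rescaled by $\sqrt t$: the coefficient of $\bar x y$ fixes $t=\gamma n_\Delta/(\gamma n_\Delta+1)=1-1/S$, the coefficients of $\bar x$ and $y$ fix the effective displacement argument $\sqrt{\tfrac{n_\Delta+1}{\gamma n_\Delta+1}}\,\bold{w}^\ast$, and the coefficient of $\bold{w}\bar{\bold{w}}$, using $2|C\bold L|^2=|\bold L|^2$, produces the Gaussian $\exp\!\big(-\tfrac\pi2\tfrac{1-\gamma}{\gamma+1/n_\Delta}|\bold L|^2\big)$; re‑expanding in $l,k$ yields exactly Eq.~\eqref{eq:singlemode_QEC_square}. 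A good consistency check to run along the way is the $\gamma\to0$ limit, where $t^{(l+k)/2}\to\delta_{l0}\delta_{k0}$, $1/(\gamma n_\Delta+1)\to1$, and the exponent together with $\langle0|\hat D(\sqrt{n_\Delta+1}\,\bold{w}^\ast)|0\rangle$ recombine into $e^{-\pi(n_\Delta+\frac12)|\bold L|^2}$, matching the normalization identity at the end of the proof of Lemma~\ref{lem:single_mode_displaced_code_overlap}.

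The main obstacle is purely computational bookkeeping, concentrated in the last two steps: carrying the two‑complex‑variable square‑completion without sign or conjugation errors, and converting the hyperbolic functions of $\Delta$ into rational functions of $n_\Delta$ so that the scalar prefactor, the squeezing‑like factor inside the displacement argument, and the Gaussian width all land simultaneously on the stated forms. Everything else — the two substitutions, the interchange of sum and integral, and propagating the multiplicative $1+O(\cdot)$ — is routine.
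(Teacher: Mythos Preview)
Your proposal is correct and takes a genuinely different route at the core computational step. Both the paper and you begin identically, substituting Lemma~\ref{lem:loss_disp_expand} and Lemma~\ref{lem:single_mode_displaced_code_overlap} into the definition of $M$, and both then face the same $\alpha$‑integral term by term in $(n_1,n_2)$. The paper attacks this integral head‑on for fixed $l,k$: it writes $\langle l|\hat D(\alpha^\ast)|k\rangle$ as $e^{-|\alpha|^2/2}\alpha^{l-k}L_k^{l-k}(|\alpha|^2)$ (splitting into cases $l\gtrless k$), passes to polar coordinates, evaluates the angular integral via the integral representation of $I_{l-k}$, and then invokes the tabulated identity $\int_0^\infty x^{\lambda/2}e^{-px}I_\lambda(2b\sqrt x)L_n^{(\lambda)}(x)\,dx$ for the radial integral before reassembling the Laguerre into a displacement matrix element. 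Your route instead contracts the Fock indices against the coherent‑state generating function, which absorbs the Laguerre structure into a pure Gaussian in $(\alpha,\bar\alpha,\bar x,y)$; completing the square yields the prefactor $1/S=1/(\gamma n_\Delta+1)$ and an exponential that you then match back against the same generating function with $x,y\mapsto\sqrt t\,x,\sqrt t\,y$. What your approach buys is that it sidesteps the Bessel/Laguerre identities entirely, treats $l\ge k$ and $l<k$ uniformly, and makes the emergence of the thermal factor $t=(S-1)/S$ and the rescaled displacement argument $\sqrt{(n_\Delta+1)/S}\,\bar{\bold w}$ transparent as coefficients of $\bar x y$, $\bar x$, $y$ after the square‑completion. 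The paper's approach buys directness (no auxiliary generating variables to introduce and then strip off) at the cost of relying on two nontrivial special‑function integrals.
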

\begin{proof}
From the definition of the QEC matrix and invoking Lemma~\ref{lem:single_mode_displaced_code_overlap} and Lemma~\ref{lem:loss_disp_expand}, we have
\begin{align}
    & M_{[\mu l],[\nu k]} \\
    :=& _{\Delta}\bra{\mu}\hat{E}_l^\dagger\hat{E}_k\ket{\nu}_{\Delta}\\
    =&\left(1 +O\left(e^{-\frac{\pi}{d}\left(n_\Delta + \frac{1}{2}\right)}\right)\right)\int \frac{d^2 \alpha}{\pi} e^{-\frac{1-\gamma}{2}\abs{\alpha}^2}\bra{l}\hat{D}(\alpha^\ast)\ket{k}\sum_{n_1,n_2 \in \mathbb{Z}} e^{i\pi (n_1 + \frac{\mu+\nu}{d})n_2} e^{-\frac{\abs{\sqrt{2\pi}C\bold{L} - \sqrt{\gamma}\alpha}^2}{4t_\Delta}}e^{-\frac{t_\Delta}{4}\abs{\sqrt{2\pi}C\bold{L} + \sqrt{\gamma}\alpha}^2},
\end{align}
where for simplicity we denote $t_\Delta = \tanh \frac{\Delta^2}{2}$, $\bold{L}:= \frac{1}{\sqrt{d}}\left(dn_1+\mu-\nu, n_2\right)^T$, and $C\bold{L} := |C\bold{L}|e^{i\theta_L}\in \mathbb{C}$. Notice that the displacement operator can be expanded in the Fock basis in terms of the generalized Laguerre function:
\begin{align}
    \bra{l}\hat{D}(\alpha)\ket{k}=e^{-\frac{|\alpha|^2}{2}} \sqrt{\frac{k !}{l !}} L_{k}^{l-k}\left(|\alpha|^2\right) \alpha^{l-k}.
\end{align}
for $l\geq k$. When $l\leq k$, it can be computed through $\left(\bra{k}\hat{D}(-\alpha)\ket{l}\right)^\ast$. Rewriting the integral in polar coordinates, we have
\begin{align}
    M_{[\mu l],[\nu k]} = &\left(1 +O\left(e^{-\frac{\pi}{d}\left(n_\Delta + \frac{1}{2}\right)}\right)\right) \sqrt{\frac{k !}{l !}}\sum_{n_1,n_2 \in \mathbb{Z}} e^{i\pi (n_1 + \frac{\mu+\nu}{d})n_2} e^{-(n_\Delta+\frac{1}{2})|\sqrt{2\pi}C\bold{L}|^2}\\&\int_{0}^{\infty} |\alpha| d|\alpha| e^{-(1-\frac{\gamma}{2}+\gamma (n_\Delta+\frac{1}{2}))|\alpha|^2}L_{k}^{l-k}(|\alpha|^2)|\alpha|^{l-k}\int_{0}^{2\pi}\frac{d\theta}{\pi}e^{-2\sqrt{\gamma n_\Delta\left(n_\Delta+1\right)}|\sqrt{2\pi}C\bold{L}| |\alpha|\cos (\theta-\theta_L)}e^{-i(l-k)\theta},
\end{align}
where $\alpha = \abs{\alpha}e^{i\theta}$ and $n_{\Delta} :=\frac{1}{e^{2\Delta^2}-1}$. We then apply two mathematical identities~\cite{PhysRevA.97.032346}
\begin{align}
    I_n(z)&=\frac{1}{\pi} \int_0^\pi e^{z \cos \theta} \cos (n \theta) d \theta,\\
    \int_0^{\infty} d x x^{\frac{\lambda}{2}} e^{-p x} I_\lambda(2 b \sqrt{x}) L_n^{(\lambda)}(x)& =b^\lambda \frac{(p-1)^n}{p^{\lambda+n+1}} e^{\frac{b^2}{p}} L_n^{(\lambda)}\left(\frac{b^2}{p(p-1)}\right).
\end{align}
With these simplifications, we reach the final expression
\begin{align}
      & M_{[\mu l],[\nu k]}\\
      =&\left(1 +O\left(e^{-\frac{\pi}{d}\left(n_\Delta + \frac{1}{2}\right)}\right)\right) \sqrt{\frac{k !}{l !}}\sum_{n_1,n_2 \in \mathbb{Z}} e^{i\pi (n_1 + \frac{\mu+\nu}{d})n_2} e^{-(n_\Delta+\frac{1}{2})|\sqrt{2\pi}C\bold{L}|^2}e^{-i(l-k)\theta}\left(\sqrt{\gamma n_\Delta (n_\Delta+1)}|\sqrt{2\pi}C\bold{L}|\right)^{l-k}\\
     &\frac{(-\frac{\gamma}{2}+\gamma (n_\Delta+\frac{1}{2}))^k}{(1-\frac{\gamma}{2}+\gamma (n_\Delta+\frac{1}{2}))^{l+1}}e^{|\sqrt{2\pi}C\bold{L}|^2\frac{n_\Delta+1}{\gamma n_\Delta+1}} L_{k}^{l-k}\left(|\sqrt{2\pi}C\bold{L}|^2\frac{n_\Delta+1}{\gamma n_\Delta+1}\right)\\
     =&\left(1 +O\left(e^{-\frac{\pi}{d}\left(n_\Delta + \frac{1}{2}\right)}\right)\right)\frac{t^{\frac{l+k}{2}}}{\gamma n_\Delta + 1}\sum_{n_1,n_2 \in \mathbb{Z}}e^{i\pi (n_1 + \frac{\mu+\nu}{d})n_2}e^{-\frac{\pi}{2}\frac{(1-\gamma)}{ (\gamma+\frac{1}{n_{\Delta}})}\left|\bold{L}\right|^2}\left\langle l\left|\hat{D}\left({\sqrt{\frac{n_\Delta + 1}{\gamma n_\Delta + 1}}\left(\sqrt{2\pi}C\bold{L}\right)^\ast}\right) \right| k\right\rangle.
\end{align}
\end{proof}

One quantity that we have defined is $n_\Delta$, which is considered as the average energy. Indeed, in the large energy regime, it converges to the often-quoted energy of $\frac{1}{2\Delta^2} - \frac{1}{2}$. Below, we show more rigorously this is a more precise expression for the average energy.

\begin{corollary}[Average energy of single-mode square lattice GKP code] \label{cor:average_energy_single_mode}
For single-mode square lattice GKP codes, the average energy in the codeword,
\begin{eqnarray}
    \bar{n}:=\frac{1}{d}\Tr{\hat{P}_L\hat{n}} = n_\Delta + O\left(e^{-\frac{\pi}{2d}\left(n_\Delta + \frac{1}{2}\right)}\right)
\end{eqnarray}
where $n_{\Delta}:=\frac{1}{e^{2 \Delta^2}-1}$.
\end{corollary}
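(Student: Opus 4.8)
The plan is to reduce the average energy to diagonal codeword expectations and then extract the number operator from the displaced‑codeword overlap already computed in Lemma~\ref{lem:single_mode_displaced_code_overlap}. By definition $\bar n=\tfrac1d\Tr{\hat P_L\hat n}$; writing the code‑space projector as $\hat P_L=\sum_\mu\ket{\mu_L}\bra{\mu_L}$ over the normalized finite‑energy codewords $\ket{\mu_L}=\ket{\mu}_\Delta$, the trace keeps only the diagonal terms, $\bar n=\tfrac1d\sum_\mu{}_\Delta\bra{\mu}\hat n\ket{\mu}_\Delta$, so it suffices to show ${}_\Delta\bra{\mu}\hat n\ket{\mu}_\Delta=n_\Delta+O\big(e^{-\frac{\pi}{2d}(n_\Delta+\frac12)}\big)$ uniformly in $\mu$. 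Using the normal‑ordered form $\hat D(\alpha)=e^{-|\alpha|^2/2}e^{\alpha\hat a^\dagger}e^{-\alpha^*\hat a}$ and treating $\alpha,\alpha^*$ as independent (Wirtinger) variables, one verifies the elementary identity $-\,\partial_\alpha\partial_{\alpha^*}\hat D(\alpha)\big|_{\alpha=0}=\hat n+\tfrac12$, hence ${}_\Delta\bra{\mu}\hat n\ket{\mu}_\Delta=-\,\partial_\alpha\partial_{\alpha^*}\big[{}_\Delta\bra{\mu}\hat D(\alpha)\ket{\mu}_\Delta\big]_{\alpha=0}-\tfrac12$.

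Next I would set $\nu=\mu$ in the \emph{exact} (first) equality of Lemma~\ref{lem:single_mode_displaced_code_overlap} and put each summand in Gaussian normal form. Expanding $|\sqrt{2\pi}C\bold L\mp\alpha|^2$ and collecting powers of $\alpha$, the $(n_1,n_2)$ term becomes $e^{-P|\bold w|^2}\,e^{-P|\alpha|^2+Q(\bold w\alpha^*+\bold w^*\alpha)}$ with $\bold w:=\sqrt{2\pi}C\bold L$, $P:=\tfrac14\big(\coth\tfrac{\Delta^2}{2}+\tanh\tfrac{\Delta^2}{2}\big)$ and $Q:=\tfrac14\big(\coth\tfrac{\Delta^2}{2}-\tanh\tfrac{\Delta^2}{2}\big)$. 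Matching the $\alpha=0$ value against the normalization exponent $e^{-\pi(n_\Delta+\frac12)|\bold L|^2}$ from the proof of Lemma~\ref{lem:single_mode_displaced_code_overlap} (using $|\bold w|^2=\pi|\bold L|^2$) gives $P=n_\Delta+\tfrac12$, while $P^2-Q^2=\tfrac14$ gives $Q^2=n_\Delta(n_\Delta+1)$. Differentiating term by term — justified by dominated convergence, since the summand and its first two $\alpha$‑derivatives are bounded, uniformly for $\alpha$ near $0$, by a rapidly convergent Gaussian in $(n_1,n_2)$ — yields
\begin{equation*}
{}_\Delta\bra{\mu}\hat n\ket{\mu}_\Delta+\tfrac12=A\sum_{n_1,n_2\in\mathbb Z}e^{i\pi(n_1+\frac{2\mu}{d})n_2}\,e^{-\pi(n_\Delta+\frac12)|\bold L|^2}\Big[(n_\Delta+\tfrac12)-\pi\,n_\Delta(n_\Delta+1)\,|\bold L|^2\Big],
\end{equation*}
where $A:=N_\mu^2\big/\big[2\sqrt\pi(1-e^{-2\Delta^2})\big]=1+O\big(e^{-\frac{\pi}{d}(n_\Delta+\frac12)}\big)$ by Eq.~\eqref{eq:normalization}.

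Finally I would isolate the $n_1=n_2=0$ term ($\bold L=0$), which equals $A(n_\Delta+\tfrac12)=n_\Delta+\tfrac12+O\big(e^{-\frac{\pi}{2d}(n_\Delta+\frac12)}\big)$. For the square lattice encoding $d$ levels, $|\bold L|^2=\tfrac1d\big((dn_1)^2+n_2^2\big)$ has smallest nonzero value $\tfrac1d$, so the sum of the remaining terms is bounded in modulus by a constant times $(n_\Delta+1)^2\,e^{-\frac{\pi}{d}(n_\Delta+\frac12)}$, the residual Gaussian tail being summable; and $(n_\Delta+1)^2 e^{-\frac{\pi}{d}(n_\Delta+\frac12)}=O\big(e^{-\frac{\pi}{2d}(n_\Delta+\frac12)}\big)$. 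Subtracting $\tfrac12$ and averaging over $\mu$ then gives the claim. The main technical obstacle is precisely this polynomial‑times‑Gaussian bookkeeping of the subleading lattice contributions — the factor $n_\Delta(n_\Delta+1)|\bold L|^2$ multiplying the exponential is exactly what weakens the rate from $\tfrac{\pi}{d}$ to $\tfrac{\pi}{2d}$ in the statement — together with the (routine but necessary) justification of differentiating the lattice sum term by term; everything else is straightforward manipulation of Lemma~\ref{lem:single_mode_displaced_code_overlap}.
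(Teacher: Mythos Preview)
Your argument is correct, and it takes a genuinely different route from the paper. The paper extracts $\hat n$ from the loss no-jump operator via $\hat E_0^\dagger\hat E_0=(1-\gamma)^{\hat n}$, writes ${}_\Delta\bra{\mu}\hat n\ket{\mu}_\Delta=-\tfrac{d}{d\gamma}M_{[\mu0],[\mu0]}\big|_{\gamma=0}$, and then differentiates the QEC-matrix expression of Lemma~\ref{lem:single_mode_QEC} in $\gamma$. You instead extract $\hat n$ directly from the displacement operator via $-\partial_\alpha\partial_{\alpha^*}\hat D(\alpha)\big|_{\alpha=0}=\hat n+\tfrac12$ and differentiate the displaced-overlap formula of Lemma~\ref{lem:single_mode_displaced_code_overlap} in $\alpha$. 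The two computations collapse to exactly the same lattice sum once one uses the normalization identity $A\sum e^{i\phi}e^{-\pi(n_\Delta+\frac12)|\mathbf L|^2}=1$: your $(n_\Delta+\tfrac12)\cdot 1-\tfrac12$ equals the paper's $n_\Delta\cdot 1$, and the remaining correction $-\pi\,n_\Delta(n_\Delta{+}1)A\sum e^{i\phi}|\mathbf L|^2 e^{-\pi(n_\Delta+\frac12)|\mathbf L|^2}$ is identical in both. Your approach is slightly more self-contained (it needs only Lemma~\ref{lem:single_mode_displaced_code_overlap}, not the full QEC matrix), whereas the paper's fits more naturally into its QEC-matrix framework. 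You are also more explicit than the paper about the polynomial prefactor $n_\Delta(n_\Delta{+}1)$ that appears when differentiating the Gaussian, which is precisely what forces the weaker exponent $\tfrac{\pi}{2d}$ in the error term of the statement; the paper's proof glosses over this and writes $O\big(e^{-\frac{\pi}{d}(n_\Delta+\frac12)}\big)$ in its final line.
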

\begin{proof}
    With the no-jump Kraus operator of the pure loss channel, we have that $\hat{E}_0^\dagger \hat{E}_0 = (1-\gamma)^{\hat{n}}$. Therefore, 
    \begin{eqnarray}
        \bar{n}&:=&\frac{1}{d}\Tr{\hat{P}_L\hat{n}} \\
        &=&-\frac{1}{d}\sum_{\mu=0}^{d-1}\frac{d}{d\gamma} M_{[\mu 0], [\mu 0]}\Bigg|_{\gamma = 0}\\
        &=&-\frac{1}{d}\sum_{\mu=0}^{d-1}\frac{d}{d\gamma} \frac{N^2_{\mu}}{2\sqrt{\pi}(1-e^{-2\Delta^2})}\frac{1}{\gamma n_\Delta + 1}\sum_{n_1,n_2 \in \mathbb{Z}}e^{i\pi (n_1 + \frac{\mu+\nu}{d})n_2}e^{-\frac{\pi}{2}\frac{2-\gamma + \frac{1}{n_\Delta}}{ \gamma+\frac{1}{n_{\Delta}}}\left|\bold{L}\right|^2}\Bigg|_{\gamma = 0}\\
        &=& n_\Delta + O\left(e^{-\frac{\pi}{d}\left(n_\Delta + \frac{1}{2}\right)}\right)
    \end{eqnarray}
    where we have adopted the relations given in Eq.~\eqref{eq:normalization}.
\end{proof}

It is worth noting that while we only derived the energy for single-mode square lattices, it can be easily generalized to multimode and general lattices. The modifications would be that $\bar{n} = Nn_\Delta$, where $N$ is the number of modes. Moreover, the error term would be $O\left(e^{-\pi\left(n_\Delta + \frac{1}{2}\right)\abs{x}_{\text{min}}^2}\right)$, where $\abs{x}_{\text{min}}$ is the Euclidean distance of the shortest lattice vector of the symplectic dual lattice of the corresponding GKP lattice. This fact will become clear soon.

\subsubsection{Multi-mode}\label{ap:multi-mode}

Many of the calculations can be inherited from the single-mode case.

\begin{lemma}[Overlap of displaced GKP states]\label{th:mul_2}
    Suppose an infinite-energy N-mode GKP code corresponds to an underlying lattice of $\Lambda$, whose canonical symplectic Gram matrix $A= \operatorname{diag}\left(\bold{d}\right) \otimes \omega$. The overlap between displaced GKP states has the form of
    \begin{align}
         _\Delta\bra{\boldsymbol{\mu}}\hat{D}(\boldsymbol{\alpha})\ket{\boldsymbol{\nu}}_\Delta = (\frac{1}{2\sqrt{\pi}(1-e^{-2\Delta^2})})^N \sum_{\bold{n}_1,\bold{n}_2\in \mathbb{Z}^N} e^{i\pi  \bold{n}^T_{2}\left(\bold{n}_{1}+\left(\boldsymbol{\mu}+\boldsymbol{\nu}\right)\oslash \bold{d}\right)}
         e^{-\frac{\abs{\sqrt{2\pi}C\boldsymbol{L} - \boldsymbol{\alpha}}^2}{4\tanh \frac{\Delta^2}{2}}}e^{-\frac{\tanh \frac{\Delta^2}{2}}{4}\abs{\sqrt{2\pi}C\boldsymbol{L} + \boldsymbol{\alpha}}^2}.
    \end{align}
    where the Hadamard division is defined as $\left(\bold{a}\oslash \bold{b}\right)_i = a_i/b_i$ and
    \begin{eqnarray}
        \bold{L} &:=&  S^{-1}\left(D^{-1/2}_{\text{sq}} \otimes I_2\right)\left(d_1 n_{1,1}+\mu_1-\nu_1, n_{2,1}, \dots, d_N n_{1,N}+\mu_N-\nu_N, n_{2,N}\right)^T
    \end{eqnarray}
    and $D_{\text{sq}} := \text{diag}\left(d_1, \dots, d_N\right)$.
\end{lemma}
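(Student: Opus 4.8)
The plan is to reduce this multimode statement to the single-mode square-lattice computation of Lemma~\ref{lem:single_mode_displaced_code_overlap} by stripping off the Gaussian transformation $\hat U_S$ supplied by Corollary~\ref{cor:transform_square_lattice_GKP}. The one subtlety to dispose of at the outset is that $\ket{\boldsymbol\mu}_\Delta = N_{\boldsymbol\mu}e^{-\Delta^2\sum_i\hat n_i}\ket{\boldsymbol\mu}_0$ is \emph{not} $\hat U_S^\dagger$ applied to a finite-energy square-lattice codeword, since the number-operator envelope does not commute with a generic symplectic $\hat U_S$. So I would not conjugate the full finite-energy matrix element; instead, write
\[
  {}_\Delta\bra{\boldsymbol\mu}\hat D(\boldsymbol\alpha)\ket{\boldsymbol\nu}_\Delta = N_{\boldsymbol\mu}N_{\boldsymbol\nu}\,{}_0\bra{\boldsymbol\mu}\,e^{-\Delta^2\sum_i\hat n_i}\,\hat D(\boldsymbol\alpha)\,e^{-\Delta^2\sum_i\hat n_i}\,\ket{\boldsymbol\nu}_0,
\]
and expand each envelope with the multimode version of the displacement-operator decomposition quoted just after Corollary~\ref{cor:transform_square_lattice_GKP} (a tensor product of single-mode decompositions). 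This turns the left side into a $2N$-fold Gaussian integral over $\boldsymbol\beta,\boldsymbol\gamma\in\mathbb{C}^N$ of \emph{infinite-energy} overlaps ${}_0\bra{\boldsymbol\mu}\hat D(\boldsymbol\beta)\hat D(\boldsymbol\alpha)\hat D(\boldsymbol\gamma)\ket{\boldsymbol\nu}_0$, exactly as in Eq.~\eqref{eq:fi_mDn} of the single-mode proof.

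The next step is to evaluate the infinite-energy overlap. Composing the three displacements gives $e^{i\phi}\hat D(\boldsymbol\alpha+\boldsymbol\beta+\boldsymbol\gamma)$ with a phase $\phi$ quadratic in the displacement arguments; with the envelope now absent I can use $\ket{\boldsymbol\mu}_0 = \hat U_S^\dagger\bigotimes_i\ket{\mu_i}_0^{\text{sq}}$ and conjugate, so ${}_0\bra{\boldsymbol\mu}\hat D(\boldsymbol\xi)\ket{\boldsymbol\nu}_0 = {}^{\text{sq}}_0\bra{\boldsymbol\mu}\,\hat U_S\hat D(\boldsymbol\xi)\hat U_S^\dagger\,\ket{\boldsymbol\nu}_0^{\text{sq}}$, where $\hat U_S\hat D(\boldsymbol\xi)\hat U_S^\dagger$ is again a pure displacement whose argument is a linear symplectic transform of $\boldsymbol\xi$ dictated by $S$ (transported through the map $C$ of Eq.~\eqref{eq:conversion_T_D}). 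The square-lattice infinite-energy overlap factorizes over modes, and each factor is the Poisson-summed comb of delta functions in Eq.~\eqref{eq:inf_single_mode_displace_overlap}; undoing $\hat U_S$ together with the square-lattice rescaling $D_{\text{sq}}^{1/2}\otimes I_2$ introduces the factor $S^{-1}(D_{\text{sq}}^{-1/2}\otimes I_2)$ and collapses the per-mode summation indices into the single vector $\bold L = S^{-1}(D_{\text{sq}}^{-1/2}\otimes I_2)(d_1 n_{1,1}+\mu_1-\nu_1, n_{2,1},\dots)^T$ of the statement, while the symplectic phase $e^{i\pi\bold n_2^T(\bold n_1 + (\boldsymbol\mu+\boldsymbol\nu)\oslash\bold d)}$ is unchanged because $S$ preserves symplectic products. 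As a side benefit this identifies $\{\sqrt{2\pi}C\bold L\}$ with a coset decomposition of the symplectic dual lattice $\Lambda^\perp$, which is what feeds the later QEC-matrix lemma.

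Finally, the Gaussian integral over $\boldsymbol\beta,\boldsymbol\gamma$ is carried out: the delta functions pin $\boldsymbol\beta+\boldsymbol\gamma = \sqrt{2\pi}C\bold L - \boldsymbol\alpha$, and the residual integral over the orthogonal direction is a standard $2N$-dimensional Gaussian which, mode by mode, reproduces exactly the pair of factors $\exp[-|\sqrt{2\pi}C\bold L - \boldsymbol\alpha|^2/(4\tanh\tfrac{\Delta^2}{2})]$ and $\exp[-\tfrac{1}{4}\tanh\tfrac{\Delta^2}{2}\,|\sqrt{2\pi}C\bold L + \boldsymbol\alpha|^2]$, using $\sqrt2\,|C\bold x| = |\bold x|$ to match normalizations; the per-mode prefactors multiply to $(2\sqrt\pi(1-e^{-2\Delta^2}))^{-N}$, and $N_{\boldsymbol\mu} = \prod_i N_{\mu_i}$ is fixed by imposing $\langle\boldsymbol\mu|\boldsymbol\mu\rangle_\Delta = 1$ at $\boldsymbol\alpha=\bold 0$, $\boldsymbol\mu=\boldsymbol\nu$, which reduces to the single-mode normalization in each factor (reproducing Eq.~\eqref{eq:normalization} modewise). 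I expect the main obstacle to be the bookkeeping in the middle step: tracking the accumulated phase $\phi$ from composing the displacements together with the extra phase from conjugating by $\hat U_S$, and then verifying carefully that undoing $S$ and $D_{\text{sq}}^{1/2}$ produces precisely the parametrization $\bold L$ claimed — everything else is a faithful, if tedious, modewise replica of the single-mode argument.
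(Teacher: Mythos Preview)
Your proposal is correct and follows essentially the same route as the paper: compute the infinite-energy overlap by conjugating $\hat D(\boldsymbol\alpha)$ through $\hat U_S$ onto the factorized square-lattice codewords (yielding the product of single-mode delta combs and the vector $\bold L$), then apply the multimode envelope decomposition to obtain the double Gaussian integral over $\boldsymbol\beta,\boldsymbol\gamma$, and finally evaluate that integral to produce the two Gaussian factors and the prefactor $(2\sqrt\pi(1-e^{-2\Delta^2}))^{-N}$. One small caution: the remark $N_{\boldsymbol\mu}=\prod_i N_{\mu_i}$ is not exact (the envelope does not commute with $\hat U_S$, so the multimode normalization is not literally a product of single-mode square-lattice normalizations), but since you correctly fix $N_{\boldsymbol\mu}$ by imposing ${}_\Delta\langle\boldsymbol\mu|\boldsymbol\mu\rangle_\Delta=1$ on the final expression, this does not affect the argument.
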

\begin{proof}
    Recall that a GKP state can be Gaussian transformed into a square lattice GKP through
    \begin{eqnarray}
        \ket{\boldsymbol{\mu}}_0 = \hat{U}^\dagger_S\ket{\boldsymbol{\mu}}^\text{sq}_0
    \end{eqnarray}
    where $\hat{U}_S$ corresponds to the symplectic transform given in Corrolary~\ref{cor:transform_square_lattice_GKP}. $\ket{\boldsymbol{\mu}}_0$ and $\ket{\boldsymbol{\mu}}^\text{sq}_0$ contains the same logical information, $\boldsymbol{\mu}$, but lives on different lattices. The underlying lattice of the Gaussian transformed GKP codewords $\ket{\boldsymbol{\mu}}^\text{sq}_0$ is generated by $M_{\text{sq}} = \operatorname{diag}\left(\sqrt{d_1}, \cdots, \sqrt{d_{N}}\right) \otimes I_2$. Since $\hat{U}_S^\dagger \hat{x} \hat{U}_S = S\hat{x}$, we have $\hat{U}_S \hat{T}(\bold{u}) \hat{U}^\dagger_S = \hat{T}(S\bold{u})$. We can then transform as
    \begin{eqnarray}
        & &_0\bra{\boldsymbol{\mu}}\hat{D}(\boldsymbol{\alpha})\ket{\boldsymbol{\nu}}_0\\
        &=& _0^\text{sq}\bra{\boldsymbol{\mu}} U_S \hat{D}(\boldsymbol{\alpha}) U^\dagger_S \ket{\boldsymbol{\nu}}^\text{sq}_0 \\
        &=& _0^\text{sq}\bra{\boldsymbol{\mu}} \hat{D}(CSC^{-1}\boldsymbol{\alpha}) \ket{\boldsymbol{\nu}}^\text{sq}_0 \\
        &=& \prod_{i=1}^N {}_0^\text{sq}\bra{\mu_i} \left(\hat{D}(CSC^{-1}\boldsymbol{\alpha})\right)_i \ket{\nu_i}^\text{sq}_0 \\
        &=& \frac{\sqrt{\pi}}{2}\sum_{\bold{n}_1,\bold{n}_2\in \mathbb{Z}^N} e^{i\pi  \bold{n}^T_{2}\left(\bold{n}_{1}+\left(\boldsymbol{\mu}+\boldsymbol{\nu}\right)\oslash \bold{d}\right)} \delta^{2N} (CSC^{-1} \boldsymbol{\alpha} - \sqrt{2\pi}C\bold{L}_{\text{sq}})\\
        &=& \frac{\sqrt{\pi}}{2 |\det S |}\sum_{\bold{n}_1,\bold{n}_2\in \mathbb{Z}^N} e^{i\pi  \bold{n}^T_{2}\left(\bold{n}_{1}+\left(\boldsymbol{\mu}+\boldsymbol{\nu}\right)\oslash \bold{d}\right)} \delta^{2N} (\boldsymbol{\alpha}   - \sqrt{2\pi}C S^{-1}\bold{L}_{\text{sq}})\\
        &=& \frac{\sqrt{\pi}}{2 }\sum_{\bold{n}_1,\bold{n}_2\in \mathbb{Z}^N} e^{i\pi  \bold{n}^T_{2}\left(\bold{n}_{1}+\left(\boldsymbol{\mu}+\boldsymbol{\nu}\right)\oslash \bold{d}\right)}\delta^{2N} (\boldsymbol{\alpha}   - \sqrt{2\pi}C\bold{L})
    \end{eqnarray}
    where $\bold{L} := S^{-1}\bold{L}_{\text{sq}}$ with $\bold{L}_{\text{sq}}:= \left(D_{\text{sq}}^{-1/2} \otimes I_2\right)\left(d_1 n_{1,1}+\mu_1-\nu_1, n_{2,1}, \dots, d_N n_{1,N}+\mu_N-\nu_N, n_{2,N}\right)^T$ and $D_{\text{sq}} := \text{diag}\left(d_1, \dots, d_N\right)$. With an abuse of notation, we write $\left(\hat{D}(S\boldsymbol{\alpha})\right)_i$ to represent the displacement component in the $i$th mode. In the derivation, we have applied the result for single mode square lattice given in Eq.~\eqref{eq:inf_single_mode_displace_overlap}, and that $S$ is a symplectic matrix with $\abs{\det S} = 1$. With the infinite energy expression, it can be similarly extended to finite energy conditions
     \begin{align}
       & _\Delta\bra{\boldsymbol{\mu}}\hat{D}(\boldsymbol{\alpha})\ket{\boldsymbol{\nu}}_\Delta \\
       =&N_{\boldsymbol{\mu}}N_{\boldsymbol{\nu}}{}_0\bra{\boldsymbol{\mu}}e^{-\Delta^2\sum_{i=1}^N \hat{n}_i}\hat{D}(\boldsymbol{\alpha})e^{-\Delta^2\sum_{i=1}^N \hat{n}_i}\ket{\boldsymbol{\nu}}_0\\
       =& \frac{N_{\boldsymbol{\mu}}N_{\boldsymbol{\nu}}}{\pi^{2N}\left(1-e^{-\Delta^2}\right)^{2N}} \int d^{2N} \boldsymbol{\beta} \int d^{2N} \boldsymbol{\gamma} e^{-\frac{|\boldsymbol{\beta}|^2+|\boldsymbol{\gamma}|^2}{2 t}} {}_0\bra{\boldsymbol{\mu}}\hat{D}(\boldsymbol{\beta})\hat{D}(\boldsymbol{\alpha})\hat{D}(\boldsymbol{\gamma})\ket{\boldsymbol{\nu}}_0 \\
       =& \frac{N_{\boldsymbol{\mu}}N_{\boldsymbol{\nu}}}{\pi^{2N}\left(1-e^{-\Delta^2}\right)^{2N}} \int d^{2N} \boldsymbol{\beta} \int d^{2N} \boldsymbol{\gamma} e^{-\frac{|\boldsymbol{\beta}|^2+|\boldsymbol{\gamma}|^2}{2 t}} e^{i \Im \{ \boldsymbol{\beta}^T \boldsymbol{\alpha}^\ast  + \boldsymbol{\beta}^T \boldsymbol{\gamma}^\ast + \boldsymbol{\alpha}^T \boldsymbol{\gamma}^\ast\}} {}_0 \bra{\boldsymbol{\mu}}\hat{D}(\boldsymbol{\beta} + \boldsymbol{\alpha} +\boldsymbol{\gamma})\ket{\boldsymbol{\nu}}_0 \\
       =& N_{\boldsymbol{\mu}}N_{\boldsymbol{\nu}}(\frac{1}{2 \pi^{3 / 2}\left(1-e^{-\Delta^2}\right)^2})^N \int d^{2N} \boldsymbol{\beta} \int d^{2N} \boldsymbol{\gamma} e^{-\frac{|\boldsymbol{\beta}|^2+|\boldsymbol{\gamma}|^2}{2 t}} e^{i \Im \{ \boldsymbol{\beta}^T \boldsymbol{\alpha}^\ast  + \boldsymbol{\beta}^T \boldsymbol{\gamma}^\ast + \boldsymbol{\alpha}^T \boldsymbol{\gamma}^\ast\}} \\
       & \sum_{\bold{n}_1,\bold{n}_2\in \mathbb{Z}^N} e^{i\pi  \bold{n}^T_{2}\left(\bold{n}_{1}+\left(\boldsymbol{\mu}+\boldsymbol{\nu}\right)\oslash \bold{d}\right)}\delta^{2N} (\boldsymbol{\beta} + \boldsymbol{\alpha} +\boldsymbol{\gamma}   - C\bold{L})\\
       =& N_{\boldsymbol{\mu}}N_{\boldsymbol{\nu}}(\frac{1}{2\sqrt{\pi}(1-e^{-2\Delta^2})})^N \sum_{\bold{n}_1,\bold{n}_2\in \mathbb{Z}^N} e^{i\pi  \bold{n}^T_{2}\left(\bold{n}_{1}+\left(\boldsymbol{\mu}+\boldsymbol{\nu}\right)\oslash \bold{d}\right)}
       e^{-\frac{\abs{\sqrt{2\pi}C\boldsymbol{L} - \boldsymbol{\alpha}}^2}{4\tanh \frac{\Delta^2}{2}}}e^{-\frac{\tanh \frac{\Delta^2}{2}}{4}\abs{\sqrt{2\pi}C\boldsymbol{L} + \boldsymbol{\alpha}}^2}\\
       =& \left(1 +O\left(e^{-\pi\left(n_\Delta + \frac{1}{2}\right)\abs{x}_{\text{min}}^2}\right)\right) \sum_{\bold{n}_1,\bold{n}_2\in \mathbb{Z}^N} e^{i\pi  \bold{n}^T_{2}\left(\bold{n}_{1}+\left(\boldsymbol{\mu}+\boldsymbol{\nu}\right)\oslash \bold{d}\right)}
       e^{-\frac{\abs{\sqrt{2\pi}C\boldsymbol{L} - \boldsymbol{\alpha}}^2}{4\tanh \frac{\Delta^2}{2}}}e^{-\frac{\tanh \frac{\Delta^2}{2}}{4}\abs{\sqrt{2\pi}C\boldsymbol{L} + \boldsymbol{\alpha}}^2}
    \end{align}
    where the last equaility follow the same normalization reasoning as the single mode case. Here, the leading order term is an exponential function of the lattice's shortest Euclidean distance, defined formally as 
    \begin{eqnarray}
        \abs{x}_{\text{min}}^2:= \sum_{\bold{n}_1,\bold{n}_2\in \mathbb{Z}^N \notin \set{(\bold{0}, \bold{0})}} \abs{\bold{L}}^2
    \end{eqnarray}
    where $\bold{L}$ is a function of $\bold{n}_{1,2}$.
\end{proof}

With the key lemma extended to the multimode, we are now equipped to derive the QEC matrix expression for the multimode.

\begin{lemma}[Multimode GKP QEC matrix under loss]\label{lem:loss_multimode_QEC}
    For the N mode GKP code, the error overlap matrix
    \begin{align}
        & M_{[\boldsymbol{\mu} \bold{l}],[\boldsymbol{\nu} \bold{k}]} \\
        :=& {}_\Delta\bra{\boldsymbol{\mu}} E_{\bold{l}}^\dagger E_{\bold{k}}\ket{\boldsymbol{\nu}}_\Delta\\
        =& \left(1 +O\left(e^{-\pi\left(n_\Delta + \frac{1}{2}\right)\abs{x}_{\text{min}}^2}\right)\right) \frac{t^{\norm{\frac{\bold{l} + \bold{k}}{2}}_1}}{(\gamma n_\Delta + 1)^N} \sum_{\bold{n}_1,\bold{n}_2\in \mathbb{Z}^N} e^{i\pi  \bold{n}^T_{2}\left(\bold{n}_{1}+\left(\boldsymbol{\mu}+\boldsymbol{\nu}\right)\oslash \bold{d}\right)}e^{-\frac{\pi}{2}\frac{1-\gamma}{ \gamma+\frac{1}{n_{\Delta}}}\left|\bold{L}\right|^2}\left\langle \bold{l}\left|\hat{D}\left({\sqrt{\frac{n_\Delta + 1}{\gamma n_\Delta + 1}}\left(\sqrt{2\pi}C\bold{L}\right)^\ast}\right) \right| \bold{k}\right\rangle
        \label{eq:multimode_QEC}
    \end{align}
    where $\hat{E}_{\bold{l}}:= \otimes_{i=1}^N \hat{E}^{(i)}_{l_i} $, $\left\langle \bold{l}\left|\hat{D}\left(\boldsymbol{\alpha}\right) \right| \bold{k}\right\rangle:= \prod_{i=1}^N \left\langle l_i\left|\hat{D}\left(\alpha_i\right) \right| k_i\right\rangle$, $t := \frac{\gamma n_\Delta}{\gamma n_\Delta + 1}$, and the 1-norm $\norm{\bold{u}} := \sum_{i=1}^N \abs{u_i}$.
\end{lemma}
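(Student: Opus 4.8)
\emph{Proof proposal.} The plan is to assemble the multimode QEC matrix directly from the single-mode ingredients already in place, using that the loss channel factorizes across modes while the GKP codewords factorize across modes after the Gaussian frame change of Corollary~\ref{cor:transform_square_lattice_GKP}. First I would write $\hat{E}_{\bold{l}}^\dagger \hat{E}_{\bold{k}} = \bigotimes_{i=1}^N \hat{E}^{(i)\dagger}_{l_i}\hat{E}^{(i)}_{k_i}$ and apply the displacement expansion of Lemma~\ref{lem:loss_disp_expand} mode by mode, obtaining
\begin{equation}
  \hat{E}_{\bold{l}}^\dagger \hat{E}_{\bold{k}} = \int \frac{d^{2N}\boldsymbol{\alpha}}{\pi^N}\, e^{-\frac{1-\gamma}{2}\abs{\boldsymbol{\alpha}}^2} \left\langle \bold{l}\left|\hat{D}(\boldsymbol{\alpha}^\ast)\right| \bold{k}\right\rangle \hat{D}\!\left(\sqrt{\gamma}\,\boldsymbol{\alpha}\right),
\end{equation}
where both the Fock matrix element and the Gaussian weight are products over modes. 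Sandwiching between $\ket{\boldsymbol{\nu}}_\Delta$ and $\bra{\boldsymbol{\mu}}$ and inserting the displaced-codeword overlap of Lemma~\ref{th:mul_2} with $\boldsymbol{\alpha}\to\sqrt{\gamma}\,\boldsymbol{\alpha}$ turns $M_{[\boldsymbol{\mu}\bold{l}],[\boldsymbol{\nu}\bold{k}]}$ into a lattice sum over $\bold{n}_1,\bold{n}_2\in\mathbb{Z}^N$ of the phase $e^{i\pi\bold{n}_2^T(\bold{n}_1 + (\boldsymbol{\mu}+\boldsymbol{\nu})\oslash\bold{d})}$ times an $\boldsymbol{\alpha}$-integral of the product of $e^{-\frac{1-\gamma}{2}\abs{\boldsymbol{\alpha}}^2}$, $\left\langle\bold{l}\left|\hat{D}(\boldsymbol{\alpha}^\ast)\right|\bold{k}\right\rangle$, and the two Gaussians $e^{-\abs{\sqrt{2\pi}C\bold{L}-\sqrt{\gamma}\boldsymbol{\alpha}}^2/(4\tanh\frac{\Delta^2}{2})}$ and $e^{-\tanh\frac{\Delta^2}{2}\,\abs{\sqrt{2\pi}C\bold{L}+\sqrt{\gamma}\boldsymbol{\alpha}}^2/4}$.

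The crucial point for the next step is that for each fixed $(\bold{n}_1,\bold{n}_2)$ the vector $\bold{L} = S^{-1}\bold{L}_{\text{sq}}$, hence $C\bold{L}\in\mathbb{C}^N$, is fixed; so although $\bold{L}$ mixes modes, the $\boldsymbol{\alpha}$-integrand is a genuine product over $i=1,\dots,N$ of single-mode integrands with $\sqrt{2\pi}C\bold{L}$ replaced by its $i$th component $\sqrt{2\pi}(C\bold{L})_i$. Each single-mode factor is exactly the polar-coordinate integral evaluated in the proof of Lemma~\ref{lem:single_mode_QEC} through the Bessel identity and the Laguerre–Bessel integral quoted there, so I would simply invoke that computation mode by mode and take the product. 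Collecting the outputs: $\prod_i t^{(l_i+k_i)/2} = t^{\norm{(\bold{l}+\bold{k})/2}_1}$, $\prod_i (\gamma n_\Delta+1)^{-1} = (\gamma n_\Delta+1)^{-N}$, the Gaussians combine via $\sum_i\abs{\sqrt{2\pi}(C\bold{L})_i}^2 = \abs{\sqrt{2\pi}C\bold{L}}^2$ and $2\abs{C\bold{L}}^2=\abs{\bold{L}}^2$ into $e^{-\frac{\pi}{2}\frac{1-\gamma}{\gamma + 1/n_\Delta}\abs{\bold{L}}^2}$, and the displacement matrix elements reassemble into $\left\langle\bold{l}\left|\hat{D}\left(\sqrt{\frac{n_\Delta+1}{\gamma n_\Delta+1}}(\sqrt{2\pi}C\bold{L})^\ast\right)\right|\bold{k}\right\rangle$ by the definition of the multimode displacement overlap.

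Finally, the $\bigl(1+O(e^{-\pi(n_\Delta+\frac12)\abs{x}_{\text{min}}^2})\bigr)$ prefactor is inherited verbatim from Lemma~\ref{th:mul_2}: it packages the codeword normalizations $N_{\boldsymbol{\mu}}N_{\boldsymbol{\nu}}$ together with the subleading lattice terms, whose magnitude is set by the shortest Euclidean vector $\abs{x}_{\text{min}}$ of the symplectic dual lattice (the $(\bold{n}_1,\bold{n}_2)\neq(\bold{0},\bold{0})$ contributions appearing in the self-overlap used to fix $N_{\boldsymbol{\mu}}$). I expect the only delicate part to be bookkeeping rather than anything conceptual: one must justify exchanging the $(\bold{n}_1,\bold{n}_2)$-sum with the Gaussian $\boldsymbol{\alpha}$-integral, which is immediate from absolute convergence since every term decays as a Gaussian in $\abs{\bold{L}}$ uniformly in $\boldsymbol{\alpha}$, and one must keep the $O(\cdot)$ remainder uniform in $\boldsymbol{\mu},\boldsymbol{\nu},\bold{l},\bold{k}$, which holds because, as in the single-mode case, the $\boldsymbol{\mu}$-dependent pieces of $N_{\boldsymbol{\mu}}$ are absorbed into that same exponentially small term.
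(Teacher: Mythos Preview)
Your proposal is correct and follows essentially the same approach as the paper: expand the tensor-product loss Kraus operators via Lemma~\ref{lem:loss_disp_expand}, insert the multimode displaced-codeword overlap of Lemma~\ref{th:mul_2}, observe that for fixed $(\bold{n}_1,\bold{n}_2)$ the $\boldsymbol{\alpha}$-integral factorizes modewise, and then invoke the single-mode evaluation of Lemma~\ref{lem:single_mode_QEC} on each factor. The paper's proof is considerably terser, simply stating that each factor in the product is handled by the techniques of Lemma~\ref{lem:single_mode_QEC}; your additional remarks on absolute convergence and uniformity of the $O(\cdot)$ remainder are sound bookkeeping that the paper leaves implicit.
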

\begin{proof}
    Extending to multimode, we have that
    \begin{equation}
        \hat{E}_{\bold{l}}^{\dagger} \hat{E}_{\bold{k}}=\bigotimes_{i=1}^N \int \frac{d^2 \alpha_i}{\pi} e^{-\frac{1}{2}(1-\gamma)|\alpha_i|^2}\left\langle l_i\left|\hat{D}(\alpha_i^{\ast})\right| k_i\right\rangle \hat{D}(\alpha_i \sqrt{\gamma}).
    \end{equation}
    Notice that there is a tensor product structure, which makes it easy to work with. 
    \begin{eqnarray}
        && M_{[\boldsymbol{\mu} \bold{l}],[\boldsymbol{\nu} \bold{k}]} \\
        &:=& _{\Delta}\bra{\boldsymbol{\mu}}\hat{E}_{\bold{l}}^{\dagger} \hat{E}_{\bold{k}}\ket{\boldsymbol{{\nu}}}_{\Delta}\\
        &=&\left(1 +O\left(e^{-\frac{\pi}{d}\left(n_\Delta + \frac{1}{2}\right)}\right)\right)\sum_{\bold{n}_1,\bold{n}_2\in \mathbb{Z}^N} e^{i\pi  \bold{n}^T_{2}\left(\bold{n}_{1}+\left(\boldsymbol{\mu}+\boldsymbol{\nu}\right)\oslash \bold{d}\right)} \\
        && \prod_{i=1}^N\int \frac{d^2 \alpha_i}{\pi} e^{-\frac{1-\gamma}{2}\abs{\alpha_i}^2}\bra{l}\hat{D}(\alpha_i^\ast)\ket{k}e^{-\frac{\abs{\sqrt{2\pi}C\bold{L} - \sqrt{\gamma}\alpha_i}^2}{4t_\Delta}}e^{-\frac{t_\Delta}{4}\abs{\sqrt{2\pi}C\bold{L} + \sqrt{\gamma}\alpha_i}^2}.
    \end{eqnarray}
    Each element within the product can be evaluated with the same techniques as we presented in Lemma~\ref{lem:single_mode_QEC}. Therefore, it follows straightforwardly to the final expression.
\end{proof}

The matrix $\bold{L}$ has been appearing in the key Lemmas and it is critical to gain a better understanding of it. Since here we are decomposing most of the operations with displacements, it is natural to imagine that the inner product among GKP codewords is connected to its symplectic dual lattice, similar to the analysis done for GKP under random displacement channels~\cite{PhysRevA.64.062301}. This is indeed the case.

\begin{lemma}[Connection to symplectic dual lattice]\label{lem:dual_lattice_connect}
    Suppose the underlying lattice of the GKP code is $\Lambda$, the lattice is defined as
    \begin{eqnarray}
        \bold{L} := S^{-1}\left(D_{\text{sq}}^{-1/2} \otimes I_2\right)\left(d_1 n_{1,1}+\mu_1-\nu_1, n_{2,1}, \dots, d_N n_{1,N}+\mu_N-\nu_N, n_{2,N}\right)^T\label{eq:symplectic_dual_lattice}
    \end{eqnarray}
    is a sublattice of the symplectic dual lattice. The union of all $\Lambda$ forms the full symplectic dual lattice $\Lambda^\perp$, i.e.
    \begin{eqnarray}
    \Lambda^\perp = \set{\bold{u}\Bigg| \bold{u}\in \boldsymbol{L}, \forall \bold{n}_{1,2}\in \mathbb{Z}^N, \Delta \boldsymbol{\mu}\in\mathbb{Z}^N_{\boldsymbol{d}}}
    \end{eqnarray}
\end{lemma}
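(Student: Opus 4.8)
The plan is to verify the two claims in Lemma~\ref{lem:dual_lattice_connect} separately: first that each fixed-$\Delta\boldsymbol{\mu}$ set $\{\bold{L}\}$ is a sublattice of $\Lambda^\perp$, and second that the union over $\Delta\boldsymbol{\mu}\in\mathbb{Z}^N_{\boldsymbol{d}}$ is all of $\Lambda^\perp$. First I would recall the setup of Corollary~\ref{cor:transform_square_lattice_GKP}: the GKP lattice $\Lambda$ has generator matrix $M = M_{\text{sq}} S^T$ with $M_{\text{sq}} = D_{\text{sq}}^{1/2}\otimes I_2$ and $S$ symplectic, so $\Lambda=\{M^T\bold{a}:\bold{a}\in\mathbb{Z}^{2N}\}$. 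By the definition of the symplectic dual, $\bold{u}\in\Lambda^\perp$ iff $\bold{u}^T\Omega M^T\bold{a}\in\mathbb{Z}$ for all $\bold{a}\in\mathbb{Z}^{2N}$, i.e. iff $M\Omega^T\bold{u}\in\mathbb{Z}^{2N}$ (using $\Omega^T=-\Omega=\Omega^{-1}$). So $\Lambda^\perp = \{\bold{u}: M\Omega^{-1}\bold{u}\in\mathbb{Z}^{2N}\} = \Omega M^{-1}\mathbb{Z}^{2N}$ since $M$ is invertible. Substituting $M^{-1} = (S^T)^{-1} M_{\text{sq}}^{-1} = (S^{-1})^T (D_{\text{sq}}^{-1/2}\otimes I_2)$ and using that $S$ is symplectic so $\Omega (S^{-1})^T = S\,\Omega$, we get $\Lambda^\perp = S\,\Omega\,(D_{\text{sq}}^{-1/2}\otimes I_2)\,\mathbb{Z}^{2N}$. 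Hmm — comparing with Eq.~\eqref{eq:symplectic_dual_lattice} which has $S^{-1}(D_{\text{sq}}^{-1/2}\otimes I_2)$ out front, the cleanest route is instead to work directly with the characteristic-function derivation: the displacements that exactly connect infinite-energy codewords $\ket{\boldsymbol\mu}_0,\ket{\boldsymbol\nu}_0$ appeared in Lemma~\ref{th:mul_2} as $\sqrt{2\pi}\,C\bold{L}$ with $\bold{L}$ given by \eqref{eq:symplectic_dual_lattice}, and a displacement $\hat T(\bold{w})$ maps a codeword of $\Lambda$ to a codeword (up to phase) exactly when $\bold{w}/\sqrt{2\pi}\in\Lambda^\perp$ — this is the standard GKP fact that logical and stabilizer translations are the dual-lattice points. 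So I would identify $\bold{L}$ with $\bold{w}/\sqrt{2\pi}$ and read off that $\bold{L}\in\Lambda^\perp$ directly from the fact that $\hat T(\sqrt{2\pi}\bold{L})$ has nonzero matrix element between infinite-energy codewords.

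Concretely: Step 1, fix $\Delta\boldsymbol\mu=\boldsymbol\mu-\boldsymbol\nu\in\mathbb{Z}^N$ (reduced mod $\boldsymbol d$), let $\bold{n}_1,\bold{n}_2$ range over $\mathbb{Z}^N$, and observe that \eqref{eq:symplectic_dual_lattice} exhibits $\bold{L}$ as $S^{-1}(D_{\text{sq}}^{-1/2}\otimes I_2)$ applied to the integer vector $(d_1 n_{1,1}+\mu_1-\nu_1,\,n_{2,1},\dots)$; since the map $\bold{n}_1,\bold{n}_2\mapsto\bold{L}$ is $\mathbb{Z}$-linear in $(\bold{n}_1,\bold{n}_2)$, the image is a genuine lattice (a coset of a sublattice when $\Delta\boldsymbol\mu\neq\bold 0$, a sublattice when $\Delta\boldsymbol\mu=\bold 0$). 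Step 2, check $\bold{L}\in\Lambda^\perp$ by verifying $\bold{L}^T\Omega\bold{v}\in\mathbb{Z}$ for every $\bold{v}\in\Lambda$; writing $\bold{v}=M^T\bold{a}=S\,M_{\text{sq}}^T\bold{a}$ and $\bold{L}=S^{-1}(D_{\text{sq}}^{-1/2}\otimes I_2)\bold{m}$ with $\bold{m}$ the integer vector above, this becomes $\bold{m}^T(D_{\text{sq}}^{-1/2}\otimes I_2)(S^{-1})^T\Omega S\,M_{\text{sq}}^T\bold{a}$; using the symplectic identity $(S^{-1})^T\Omega S^{-1}=\Omega$ rearranged as $(S^{-1})^T\Omega S=\Omega(S^{-1})^{-1}S\cdot\ldots$ — more simply, $(S^{-1})^T\Omega S = \Omega$ follows from $S^T\Omega S=\Omega$ by left- and right-multiplying appropriately — this collapses to $\bold{m}^T(D_{\text{sq}}^{-1/2}\otimes I_2)\,\Omega\,(D_{\text{sq}}^{1/2}\otimes I_2)\bold{a}$. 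Since $\Omega$ commutes with $D_{\text{sq}}^{1/2}\otimes I_2$ only up to the block structure, one checks that $(D_{\text{sq}}^{-1/2}\otimes I_2)\Omega(D_{\text{sq}}^{1/2}\otimes I_2) = I_N\otimes\omega$ because $D_{\text{sq}}^{1/2}$ is diagonal and $\omega$ is $2\times 2$ — wait, that's not quite right either since the $d_i$ don't cancel against $\omega$; rather $(\operatorname{diag}(d_i^{-1/2})\otimes I_2)(\operatorname{diag}(d_i)\otimes\omega)(\operatorname{diag}(d_i^{-1/2})\otimes I_2) = I_N\otimes\omega$ once we recall the Gram matrix is $A=\operatorname{diag}(\bold d)\otimes\omega$, and thus the pairing reduces to $\bold{m}^T(I_N\otimes\omega)\bold{a}\in\mathbb{Z}$, which holds because $\bold m,\bold a$ are integer. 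Hence $\bold{L}\in\Lambda^\perp$.

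Step 3, the union statement: I would show the reduced residues $\Delta\boldsymbol\mu\in\mathbb{Z}^N_{\boldsymbol d}$ index exactly the cosets of the sublattice $\{\bold L:\Delta\boldsymbol\mu=\bold 0\}$ inside $\Lambda^\perp$. The sublattice with $\Delta\boldsymbol\mu=\bold 0$ is $S^{-1}(D_{\text{sq}}^{-1/2}\otimes I_2)(\operatorname{diag}(\bold d)\oplus I\text{-interleaved})\mathbb{Z}^{2N}$; counting its index in $\Lambda^\perp$ gives $\prod_i d_i = d_L$, which matches the number of residue classes $\lvert\mathbb{Z}^N_{\boldsymbol d}\rvert=\prod_i d_i$, and distinct $\Delta\boldsymbol\mu$ give distinct cosets because the first, third, $\dots$ components of $\bold{L}$ are congruent to $\mu_i-\nu_i$ mod $d_i$. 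So the union is a disjoint cover of $\Lambda^\perp$ and the lemma follows. The main obstacle I anticipate is bookkeeping the interleaved block structure of $\Omega$, $D_{\text{sq}}^{1/2}\otimes I_2$, and the symplectic $S$ correctly — in particular getting the symplectic identity $(S^{-1})^T\Omega S^{-1}=\Omega$ applied on the right side and keeping the $\sqrt{d_i}$ factors straight so that the Gram matrix $A=M\Omega M^T=\operatorname{diag}(\bold d)\otimes\omega$ emerges; everything else is routine linear algebra once that is pinned down.
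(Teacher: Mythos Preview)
Your overall approach is the same as the paper's: verify directly that $\bold{L}^T\Omega\bold{v}\in\mathbb{Z}$ for every $\bold{v}\in\Lambda$, using the symplecticity of $S$ and the block-diagonal structure of $M_{\text{sq}}$. The paper does this at the generator-matrix level by showing $M'\Omega M^T=\Omega$ with $M'=(D_{\text{sq}}^{-1/2}\otimes I_2)S^{-T}$, which is exactly your Step~2 computation compressed into one line, and from which both the containment and the ``union equals $\Lambda^\perp$'' claim follow (since $M'\Omega M^T=\Omega$ with $\Omega$ unimodular forces $(M')^T\mathbb{Z}^{2N}=\Lambda^\perp$, not merely a sublattice of it). Your Step~3 counting argument is a pleasant alternative way to get the equality, more explicit than what the paper writes.

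There is, however, a genuine slip in Step~2. You write $\bold{v}=M^T\bold{a}=S\,M_{\text{sq}}^T\bold{a}$, which uses the convention $M=M_{\text{sq}}S^T$ of Lemma~\ref{lem:sq_lattice_connect}. But the $S$ appearing in the formula for $\bold{L}$ is the one from Corollary~\ref{cor:transform_square_lattice_GKP} and Lemma~\ref{th:mul_2}, where $S^T=M^{-1}M_{\text{sq}}$, i.e.\ $M=M_{\text{sq}}S^{-T}$ and hence $M^T=S^{-1}M_{\text{sq}}$. (The paper's notation is admittedly inconsistent between these two places; its own proof of this lemma uses the latter convention.) With the correct convention the pairing becomes
\[
\bold{L}^T\Omega\bold{v}=\bold{m}^T(D_{\text{sq}}^{-1/2}\otimes I_2)\,(S^{-1})^T\Omega S^{-1}\,(D_{\text{sq}}^{1/2}\otimes I_2)\,\bold{a},
\]
and the identity you need is $(S^{-1})^T\Omega S^{-1}=\Omega$, which holds simply because $S^{-1}$ is symplectic. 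The identity you actually assert, $(S^{-1})^T\Omega S=\Omega$, is false in general (from $S^T\Omega S=\Omega$ one gets $(S^{-1})^T\Omega=\Omega S$, so $(S^{-1})^T\Omega S=\Omega S^2$). Once the convention is fixed, your computation goes through and reduces to $\bold{m}^T(I_N\otimes\omega)\bold{a}\in\mathbb{Z}$ exactly as you intended.
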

\begin{proof}
    The generator matrix of the full lattice can be given as
\begin{eqnarray}
    M^\prime =  \left(D_{\text{sq}}^{-1/2} \otimes I_2\right) S^{-T}
\end{eqnarray}
where as the original GKP lattice is generated by $M = M_\text{sq} S^{-T} = \left(D_{\text{sq}}^{1/2} \otimes I_2\right)  S^{-T}$. One can verify that
\begin{eqnarray}
    M^\prime \Omega M^T =    \left(D_{\text{sq}}^{-1/2} \otimes I_2\right) S^{-T}\Omega S^{-1} \left(D_{\text{sq}}^{1/2} \otimes I_2\right) = \Omega
\end{eqnarray}
where we have used the fact that the inverse of the symplectic matrix $S$ is also a symplectic matrix.
Therefore, the symplectic product of the vectors on the two lattices 
\begin{eqnarray}
        \bold{a}^T M^\prime \Omega M^T \bold{b} = \bold{a}^T \Omega \bold{b} \in \mathbb{Z}
\end{eqnarray}
which is the condition for the symplectic dual lattice.
\end{proof}

\subsection{Extension to amplification channel}

The extension to the amplification channel is straightforward given the displacement operator expansion of the amplification Kraus operators. 

\begin{lemma}[Multimode GKP QEC matrix under amplification]\label{lem:amp_multimode_QEC}
    For the N mode GKP code, the error overlap matrix
    \begin{align}
        & M_{[\boldsymbol{\mu} \bold{l}],[\boldsymbol{\nu} \bold{k}]} \\
        =& \left(1 +O\left(e^{-\pi\left(n_\Delta + \frac{1}{2}\right)\abs{x}_{\text{min}}^2}\right)\right) \frac{t^{\norm{\frac{\bold{l} + \bold{k}}{2}}_1}}{(g m_\Delta + 1)^N} \sum_{\bold{n}_1,\bold{n}_2\in \mathbb{Z}^N} e^{i\pi  \bold{n}^T_{2}\left(\bold{n}_{1}+\left(\boldsymbol{\mu}+\boldsymbol{\nu}\right)\oslash \bold{d}\right)}e^{-\frac{\pi}{2}\frac{1+g}{ g+\frac{1}{m_{\Delta}}}\left|\bold{L}\right|^2}\left\langle \bold{l}\left|\hat{D}\left({\sqrt{\frac{m_\Delta - 1}{g m_\Delta + 1}}\left(\sqrt{2\pi}C\bold{L}\right)^\ast}\right) \right| \bold{k}\right\rangle
        \label{eq:amp_multimode_QEC}
    \end{align}
    where all parameters are defined to be the same as Lemma~\ref{lem:loss_multimode_QEC} except $t := \frac{g m_\Delta}{gm_\Delta + 1}$, $g = G-1$, $m_\Delta = n_\Delta + 1$.
\end{lemma}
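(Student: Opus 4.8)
The plan is to run the proof of the loss QEC matrix (Lemma~\ref{lem:loss_multimode_QEC}) essentially verbatim, substituting the amplification Kraus data for the loss Kraus data. First I would start from the defining expression $M_{[\boldsymbol{\mu}\bold{l}],[\boldsymbol{\nu}\bold{k}]} = {}_\Delta\bra{\boldsymbol{\mu}}\hat{A}_{\bold{l}}^\dagger\hat{A}_{\bold{k}}\ket{\boldsymbol{\nu}}_\Delta$ and insert the displacement-operator expansion of $\hat{A}_{\bold{l}}^\dagger\hat{A}_{\bold{k}}$ supplied by Lemma~\ref{lem:amp_disp_expand}. As in the loss case this expansion factorizes over the $N$ modes, so the $2N$-dimensional integral decomposes into a product of single-mode, two-dimensional Gaussian integrals against displacement matrix elements $\bra{l_i}\hat{D}(\alpha_i^\ast)\ket{k_i}$, the only structural change from Lemma~\ref{lem:loss_multimode_QEC} being that each mode's Gaussian weight is $e^{-\frac{1}{2}G|\alpha_i|^2}$ rather than $e^{-\frac{1}{2}(1-\gamma)|\alpha_i|^2}$ and the residual displacement is $\hat{D}(\alpha_i\sqrt{G-1})$ rather than $\hat{D}(\alpha_i\sqrt{\gamma})$.

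Next I would invoke the overlap of displaced finite-energy GKP codewords (Lemma~\ref{th:mul_2}); this step is entirely insensitive to which channel we are treating, and it produces the lattice sum over $\bold{n}_1,\bold{n}_2\in\mathbb{Z}^N$, the phase $e^{i\pi\bold{n}_2^T(\bold{n}_1 + (\boldsymbol{\mu}+\boldsymbol{\nu})\oslash\bold{d})}$, the dual-lattice vector $\bold{L}$ of Eq.~\eqref{eq:symplectic_dual_lattice}, and the prefactor $1 + O(e^{-\pi(n_\Delta+\frac12)|x|_{\min}^2})$ coming from the codeword normalizations $N_{\boldsymbol{\mu}}$, which are fixed at the noiseless point and hence common to both channels. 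What remains inside each mode is exactly the radial/angular integral treated in the proof of Lemma~\ref{lem:single_mode_QEC}: passing to polar coordinates and applying the modified-Bessel and generalized-Laguerre integral identities quoted there collapses it to a single displacement matrix element with a rescaled argument, times exponential and prefactor terms. Carrying the bookkeeping through, the combination $\gamma n_\Delta + 1$ that controlled the loss result is replaced by $(G-1)(n_\Delta+1)+1 = g m_\Delta + 1$, the thermal factor becomes $t = g m_\Delta/(g m_\Delta + 1)$, the displacement-argument scaling becomes $\sqrt{(m_\Delta-1)/(g m_\Delta+1)}$, and the Gaussian exponent becomes $-\frac{\pi}{2}\frac{1+g}{g+\frac{1}{m_\Delta}}|\bold{L}|^2$, which is precisely Eq.~\eqref{eq:amp_multimode_QEC}.

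The one place where I cannot merely appeal to analogy, and which I expect to be the crux, is verifying that the single-mode radial integral still converges and that the Laguerre/Bessel identities still apply after the substitution. In the loss computation the coefficient of $|\alpha|^2$ in the radial integrand was $1 + \gamma n_\Delta > 0$; tracking the amplification constants one finds it becomes $G + (G-1)n_\Delta = 1 + (G-1)(n_\Delta+1) = g m_\Delta + 1$, manifestly positive for $G \ge 1$ (the degenerate $G=1$ case being the identity channel, handled directly). Equivalently, the amplification Kraus-operator combination equals the loss combination with $\gamma\to G-1$ multiplied by the harmless Gaussian $e^{-(G-1)|\alpha|^2}$, and feeding this extra factor through the radial integral is exactly what promotes $n_\Delta$ to $m_\Delta = n_\Delta+1$ in every noise-dependent term while leaving the phase, the lattice sum, and the $O(\cdot)$ error term untouched. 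Once this positivity check is in place, the remaining manipulations are identical to those in Lemma~\ref{lem:single_mode_QEC} and Lemma~\ref{lem:loss_multimode_QEC}, and the result follows.
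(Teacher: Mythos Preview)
Your proposal is correct and is exactly the approach the paper takes: the paper's own proof simply states that the derivation is the same as that of Lemma~\ref{lem:loss_multimode_QEC} with the amplification Kraus decomposition of Lemma~\ref{lem:amp_disp_expand} substituted in, yielding the indicated parameter replacements. Your write-up in fact supplies more detail than the paper, including the convergence check $1+\gamma n_\Delta \to 1+g m_\Delta>0$ and the observation that the extra Gaussian factor $e^{-(G-1)|\alpha|^2}$ is precisely what promotes $n_\Delta$ to $m_\Delta=n_\Delta+1$.
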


The derivation of Lemma~\ref{lem:amp_multimode_QEC} is very similar to the derivation of Lemma~\ref{lem:loss_multimode_QEC} except for the exact decomposition of the noise channel. Therefore, the resulting expression substitutes a few loss-related parameters with amplification-related parameters. As a result, the following derivations for loss can be easily modified to apply to the amplification channel.

\section{Near-optimal performance}

Following the results of the transpose channel, the near-optimal performance of the GKP code can be obtained through its QEC matrix alone. Here, we attempt such a calculation.

% \begin{lemma}[Performance of the square lattice single-mode GKP]
%     The near-optimal performance of the square lattice single-mode GKP at infinite energy is given by 
%     \begin{eqnarray}
%         1 - \tilde{F}^\text{opt} = 
%     \end{eqnarray}
% \end{lemma}

\begin{lemma}[Near-optimal performance of the infinite-energy GKP code]\label{lem:inf_energy_performance}
    The near-optimal performance of a GKP code with an underlying symplectic integral lattice, $\Lambda$, can be upper bounded through
    \begin{eqnarray}
        \tilde{\epsilon} \leq \frac{1}{4}\sum_{\bold{x}\in \Lambda^\perp\notin\set{\bold{0}}}e^{-\pi\frac{1-\gamma}{\gamma}\left|\bold{x}\right|^2 }
    \end{eqnarray}
    where $\Lambda^\perp$ is the symplectic dual lattice of $\Lambda$. The logical dimension of the GKP code, $d_L$, is assumed to be finite.
\end{lemma}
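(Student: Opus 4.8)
The plan is to start from the perturbative expression $\tilde{\epsilon} = \frac{1}{d_L}\norm{f(D)\odot \Delta M}_F^2$ of Lemma~\ref{lem:near_optimal_pert}, with the concrete gauge choice $D_{[\boldsymbol{\mu}\bold{l}],[\boldsymbol{\nu}\bold{k}]} = \delta_{\boldsymbol{\mu}\boldsymbol{\nu}}\delta_{\bold{l}\bold{k}}\prod_{i=1}^N \frac{t^{l_i}}{\gamma n_\Delta+1}$, i.e.\ a product of per-mode thermal distributions of mean $\gamma n_\Delta$; this is the leading ($\bold{L}=\bold{0}$) part of $\frac{1}{d_L}\operatorname{Tr}_L M$ in Lemma~\ref{lem:loss_multimode_QEC}, and it satisfies $\operatorname{Tr}D=1$. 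Since $\frac{1}{(\sqrt{a}+\sqrt{b})^2}\le \frac{1}{4\sqrt{ab}}$, we get $\tilde{\epsilon}\le \frac{1}{4d_L}\sum_{\boldsymbol{\mu},\boldsymbol{\nu},\bold{l},\bold{k}} |\Delta M_{[\boldsymbol{\mu}\bold{l}],[\boldsymbol{\nu}\bold{k}]}|^2/\sqrt{D_{\bold{l}\bold{l}}D_{\bold{k}\bold{k}}}$. From Lemma~\ref{lem:loss_multimode_QEC} and Lemma~\ref{lem:dual_lattice_connect}, $\Delta M_{[\boldsymbol{\mu}\bold{l}],[\boldsymbol{\nu}\bold{k}]}$ equals $(1+\delta)\frac{t^{\norm{(\bold{l}+\bold{k})/2}_1}}{(\gamma n_\Delta+1)^N}\sum_{\bold{L}} e^{i\phi_{\bold{L}}}e^{-a|\bold{L}|^2}\langle\bold{l}|\hat{D}(\beta_{\bold{L}})|\bold{k}\rangle$ summed over the \emph{nonzero} points $\bold{L}$ of the coset $\mathcal{L}(\boldsymbol{\mu}-\boldsymbol{\nu})\subseteq\Lambda^\perp$, with $a=\frac{\pi}{2}\frac{1-\gamma}{\gamma+1/n_\Delta}$ and $\beta_{\bold{L}}=\sqrt{\tfrac{n_\Delta+1}{\gamma n_\Delta+1}}\,(\sqrt{2\pi}C\bold{L})^\ast$, plus an $O(\delta)$ diagonal remnant from the mismatch between the $(1+\delta)$ prefactor of $M$ and $D$; I would verify that this remnant contributes only $O(\delta^2)$ to $\tilde{\epsilon}$, hence is harmless in the limit.

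Next I would carry out the Fock sum. Because $\norm{(\bold{l}+\bold{k})/2}_1=(\norm{\bold{l}}_1+\norm{\bold{k}}_1)/2$ for $\bold{l},\bold{k}\ge 0$, for a fixed pair of coset points $\bold{L},\bold{L}'$ the sum $\sum_{\bold{l},\bold{k}}\frac{1}{\sqrt{D_{\bold{l}\bold{l}}D_{\bold{k}\bold{k}}}}\big(\tfrac{t^{\norm{(\bold{l}+\bold{k})/2}_1}}{(\gamma n_\Delta+1)^N}\big)^2 \langle\bold{l}|\hat{D}(\beta_{\bold{L}})|\bold{k}\rangle\langle\bold{k}|\hat{D}(-\beta_{\bold{L}'})|\bold{l}\rangle$ collapses, mode by mode, to $\frac{1}{(\gamma n_\Delta+1)^N}\operatorname{Tr}\!\big[\hat{T}^{\otimes N}\hat{D}(\beta_{\bold{L}})\hat{T}^{\otimes N}\hat{D}(-\beta_{\bold{L}'})\big]$ with the single-mode operator $\hat{T}:=t^{\hat{n}/2}$. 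All interchanges of summation are legitimate: writing $\hat{T}=(1-t)^{-1/2}\sqrt{\hat{\rho}_{\mathrm{th}}}$, the $\bold{L}=\bold{L}'$ per-mode trace equals $(1-t)^{-1}e^{-\frac{1-\sqrt{t}}{1+\sqrt{t}}|\beta_{\bold{L}}|^2}$, a finite Gaussian-state overlap, so every partial sum is dominated by a convergent lattice theta series $\sum_{\bold{L}}e^{-a|\bold{L}|^2}$.

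I would then evaluate the Gaussian trace and take $n_\Delta\to\infty$ (i.e.\ $t\to 1$). Using $t^{\hat{n}/2}\hat{D}(\beta)t^{\hat{n}/2}=e^{-|\beta|^2/2}e^{\sqrt{t}\beta\hat{a}^\dagger}e^{-\beta^\ast\hat{a}/\sqrt{t}}\,t^{\hat{n}}$, the single-mode trace reduces to a Gaussian integral and equals $\frac{1}{1-t}e^{i\psi(\beta,\beta')}e^{-Q_t(\beta,\beta')}$ with $Q_t$ a positive semidefinite quadratic form; evaluating it on the diagonal and at $\beta'=0$ pins it down as $Q_t(\beta,\beta')=A_t\big|\tfrac{\beta+\beta'}{2}\big|^2+(4B_t-A_t)\big|\tfrac{\beta-\beta'}{2}\big|^2$ with $A_t=\frac{1-\sqrt{t}}{1+\sqrt{t}}\to 0$ and $4B_t-A_t=2\frac{1+t}{1-t}-A_t\to\infty$. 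Multiplying by $(1-t)^N$: the $\bold{L}=\bold{L}'$ contributions tend to $1$, while the off-diagonal ones are suppressed by $e^{-\Theta((1-t)^{-1})|\beta_{\bold{L}}-\beta_{\bold{L}'}|^2}\to 0$ (the gaps $|\beta_{\bold{L}}-\beta_{\bold{L}'}|$ remain bounded below since $\sqrt{(n_\Delta+1)/(\gamma n_\Delta+1)}\to\gamma^{-1/2}$ and distinct coset points are separated). By dominated convergence over the Fock and dual-lattice double sums, in the limit only the diagonal survives, the phases $\phi_{\bold{L}}-\phi_{\bold{L}'}$ cancel, and $a\to\frac{\pi}{2}\frac{1-\gamma}{\gamma}$, yielding $\lim_{n_\Delta\to\infty}\tilde{\epsilon}\le \frac{1}{4d_L}\sum_{\boldsymbol{\mu},\boldsymbol{\nu}}\sum_{\bold{L}\in\mathcal{L}(\boldsymbol{\mu}-\boldsymbol{\nu})\setminus\{\bold{0}\}}e^{-\pi\frac{1-\gamma}{\gamma}|\bold{L}|^2}$.

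Finally, the sum over $(\boldsymbol{\mu},\boldsymbol{\nu})$ produces an overall factor $d_L$ since the summand depends only on $\boldsymbol{\mu}-\boldsymbol{\nu}$, and by Lemma~\ref{lem:dual_lattice_connect} the cosets $\mathcal{L}(\Delta\boldsymbol{\mu})$, $\Delta\boldsymbol{\mu}\in\mathbb{Z}_{\bold{d}}$, partition $\Lambda^\perp$ with $\bold{0}$ lying only in $\mathcal{L}(\bold{0})$; hence $\frac{1}{4d_L}\cdot d_L\sum_{\Delta\boldsymbol{\mu}}\sum_{\bold{L}\in\mathcal{L}(\Delta\boldsymbol{\mu})\setminus\{\bold{0}\}}=\frac{1}{4}\sum_{\bold{x}\in\Lambda^\perp\setminus\{\bold{0}\}}$, which is the claimed bound. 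For $N=1$ the AM--GM step becomes an asymptotic equality, since the ratio $4\sqrt{D_{ll}D_{kk}}/(\sqrt{D_{ll}}+\sqrt{D_{kk}})^2\to 1$ pointwise as $t\to 1$ and the off-diagonal contributions vanish, so the single-mode inequality is saturated. I expect the main obstacle to be the closed-form evaluation of $\operatorname{Tr}\!\big[t^{\hat{n}/2}\hat{D}(\beta)t^{\hat{n}/2}\hat{D}(-\beta')\big]$ and, more delicately, establishing the uniform bounds needed to exchange the $n_\Delta\to\infty$ limit with the infinite Fock and dual-lattice summations.
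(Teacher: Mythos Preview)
Your proposal is correct and shares the same backbone as the paper's proof: the same thermal choice of $D$ (the $\bold{L}=\bold{0}$ term of the QEC matrix), the same AM--GM step $\frac{1}{(\sqrt{D_{\bold{l}\bold{l}}}+\sqrt{D_{\bold{k}\bold{k}}})^2}\le \frac{1}{4\sqrt{D_{\bold{l}\bold{l}}D_{\bold{k}\bold{k}}}}$, the same per-mode factorization after that bound, the same conclusion that only the diagonal $\bold{L}=\bold{L}'$ survives as $n_\Delta\to\infty$, and the same final coset bookkeeping via Lemma~\ref{lem:dual_lattice_connect}. The one genuine difference is how the Fock sum is handled. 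The paper expands $\langle l|\hat{D}(\beta)|k\rangle$ in Laguerre polynomials, applies the Hardy--Hille formula to obtain modified Bessel functions $I_{\Delta l}$, and then invokes a dedicated asymptotic identity (the paper's Lemma on $\sum_{n\ge 0}(1-\tfrac{1}{x})^{cn}I_n(x)$) to pass to the limit. You instead resum the Fock indices into the closed Gaussian trace $\operatorname{Tr}\!\big[t^{\hat{n}/2}\hat{D}(\beta)t^{\hat{n}/2}\hat{D}(-\beta')\big]$ and read off the limit directly from the resulting quadratic form $Q_t$. Your route is more conceptual and avoids special-function machinery; the paper's route is more explicit and yields, as a byproduct, the finite-energy single-mode expression (its Eq.~\eqref{eq:finite_energy_single_mode}) in terms of Bessel functions, which is used elsewhere in the paper. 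Both arrive at the identical bound, and your remark about single-mode saturation matches the paper's separate single-mode computation.
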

\begin{proof}
    From the perturbative expansion of the near-optimal fidelity, we have $ \tilde{\epsilon} = \frac{1}{d} \norm{f(D)\odot \Delta M}_F^2$. Recall that while there are some degrees of freedom in choosing $D$, it should be diagonal, i.e. $\bold{l} = \bold{k}$ and $\boldsymbol{\mu} = \boldsymbol{\nu}$. We can tentatively pick $\bold{n}_1 = \bold{n}_2 = \bold{0}$ to have $\bold{L} = \bold{0}$. With this choice, $D$ has no dependence on $\boldsymbol{\mu}$ and has the form of a thermal distribution
    \begin{eqnarray}
            D_{\bold{l}, \bold{l}} &=& M_{[\boldsymbol{\mu} \bold{l}],[\boldsymbol{\mu} \bold{l}]}\vert_{\bold{n}_{1,2} = \bold{0}}\\
            &=& \frac{1}{(\gamma n_\Delta +1)^N} t^{\norm{\bold{l}}_1}\\
            \Tr D &=& \sum_{\bold{l}\in \mathbb{Z}^{\ast}_N} D_{\bold{l}, \bold{l}}= 1.
    \end{eqnarray}
    where $\mathbb{Z}^\ast = \set{0}\cup \mathbb{Z}^+ $ denotes non-negative integers. Here, we take the prefactor of $1 +O\left(e^{-\pi\left(n_\Delta + \frac{1}{2}\right)\abs{x}_{\text{min}}^2}\right)$ by unity since we are dealing with infinite energy, which is valid for finite $N$ and $d_L$. Such an approximation can be done more rigorously for the asymptotic cases if we focus on specific types of lattices as we do later in Theorem~\ref{theo:rates_GKP_loss_amp}. $\Delta M$ is then the summation except the lattice points included by $D$ such that
    \begin{eqnarray}
            && (f(D)\odot \Delta M)_{[\boldsymbol{\mu} \bold{l}],[\boldsymbol{\nu} \bold{k}]} \\
            &=& \frac{(\gamma n_\Delta + 1)^{-N/2}\sqrt{t}^{\norm{\bold{l} + \bold{k}}_1}}{\sqrt{t}^{\norm{\bold{l}}_1} +\sqrt{t}^{\norm{\bold{k}}_1} }\sum_{\bold{n}_1,\bold{n}_2\in \mathbb{Z}^\prime_N} e^{i\pi  \bold{n}^T_{2}\left(\bold{n}_{1}+\left(\boldsymbol{\mu}+\boldsymbol{\nu}\right)\oslash \bold{d}\right)}e^{-\frac{\pi}{2}\frac{(1-\gamma)}{(\gamma+\frac{1}{n_{\Delta}})}\left|\bold{L}\right|^2}\left\langle \bold{l}\left|\hat{D}\left({\sqrt{\frac{n_\Delta + 1}{\gamma n_\Delta + 1}}\left(\sqrt{2\pi}C\bold{L}\right)^\ast}\right) \right| \bold{k}\right\rangle
    \end{eqnarray}
    where we define $ \mathbb{Z}^{\prime N}$ to be the special set that excludes the term with $\boldsymbol{\mu} = \boldsymbol{\nu}$ and $\bold{n}_1 = \bold{n}_2$. The near-optimal infidelity can be expressed as 
    \begin{eqnarray}
        \tilde{\epsilon}&=& \frac{1}{d_L} \norm{f(D)\odot \Delta M}_F^2\\
        &=& (\gamma n_\Delta + 1)^{-N}\sum_{\boldsymbol{\mu}, \boldsymbol{\nu}\in \mathbb{Z}_{\bold{d}}^N}\sum_{\bold{l}, \bold{k}\in \mathbb{Z}_{+}^N}\sum_{\bold{n}_{1,2},\bold{m}_{1,2}\in \mathbb{Z}^{\prime N}} \frac{t^{\norm{\bold{l} + \bold{k}}_1}}{\left(\sqrt{t}^{\norm{\bold{l}}_1} +\sqrt{t}^{\norm{\bold{k}}_1} \right)^2}e^{i\left(\phi(\boldsymbol{n}_{1,2}, \boldsymbol{\mu}, \boldsymbol{\nu}) - \phi(\boldsymbol{m}_{1,2}, \boldsymbol{\mu}, \boldsymbol{\nu})\right)}\\
        && e^{-\pi\frac{(1-\gamma)}{ (\gamma+\frac{1}{n_{\Delta}})}\left(\left|\bold{L_\bold{n}}\right|^2 + \left|\bold{L}_\bold{m}\right|^2\right)}\left\langle \bold{l}\left|\hat{D}\left(G_\bold{n}\right) \right| \bold{k}\right\rangle \left\langle \bold{k}\left|\hat{D}^\dagger\left(G_\bold{m}\right) \right| \bold{l}\right\rangle\\
        &=& (\gamma n_\Delta + 1)^{-N}\sum_{\boldsymbol{\mu}, \boldsymbol{\nu}\in \mathbb{Z}_{\bold{d}}^N}\sum_{\bold{n}_{1,2},\bold{m}_{1,2}\in \mathbb{Z}^{\prime N}} e^{i\left(\phi(\boldsymbol{n}_{1,2}, \boldsymbol{\mu}, \boldsymbol{\nu}) - \phi(\boldsymbol{m}_{1,2}, \boldsymbol{\mu}, \boldsymbol{\nu})\right)}e^{-\frac{\pi}{2}\frac{(1-\gamma)}{ (\gamma+\frac{1}{n_{\Delta}})}\left(\left|\bold{L_\bold{n}}\right|^2 + \left|\bold{L}_\bold{m}\right|^2\right)}\\
        && \sum_{\bold{l}, \bold{k}\in \mathbb{Z}_{+}^N} \frac{t^{\norm{\bold{l} + \bold{k}}_1}}{\left(\sqrt{t}^{\norm{\bold{l}}_1} +\sqrt{t}^{\norm{\bold{k}}_1} \right)^2}
        \left\langle \bold{l}\left|\hat{D}\left(G_\bold{n}\right) \right| \bold{k}\right\rangle \left\langle \bold{k}\left|\hat{D}^\dagger\left(G_\bold{m}\right) \right| \bold{l}\right\rangle
    \end{eqnarray}
    where we define, for simplicity, $\phi(\boldsymbol{n}_{1,2}, \boldsymbol{\mu}, \boldsymbol{\nu}) = \pi  \bold{n}^T_{2}\left(\bold{n}_{1}+\left(\boldsymbol{\mu}+\boldsymbol{\nu}\right)\oslash \bold{d}\right)$, $G_\bold{n} := \sqrt{\frac{n_\Delta + 1}{\gamma n_\Delta + 1}}\left(\sqrt{2\pi}C\bold{L}_\bold{n}\right)^\ast$, and we use the subscript of $\bold{L}$ to indicate its dependence. Focusing on the last term in the product, we have
    \begin{eqnarray}
    &&  (\gamma n_\Delta + 1)^{-N}\sum_{\bold{l}, \bold{k}\in \mathbb{Z}_{+}^N} \frac{t^{\norm{\bold{l} + \bold{k}}_1}}{\left(\sqrt{t}^{\norm{\bold{l}}_1} +\sqrt{t}^{\norm{\bold{k}}_1} \right)^2}
    \left\langle \bold{l}\left|\hat{D}\left(G_\bold{n}\right) \right| \bold{k}\right\rangle \left\langle \bold{k}\left|\hat{D}^\dagger\left(G_\bold{m}\right) \right| \bold{l}\right\rangle\\
    &=&  (\gamma n_\Delta + 1)^{-N}\sum_{\bold{l}, \bold{k}\in \mathbb{Z}_{+}^N} \frac{1}{\left(t^{\frac{\norm{\bold{l}}_1 - \norm{\bold{k}}_1}{4}} +t^{\frac{\norm{\bold{k}}_1 - \norm{\bold{l}}_1}{4}} \right)^2}t^{\frac{1}{2}\norm{\bold{l} + \bold{k}}_1}
    \left\langle \bold{l}\left|\hat{D}\left(G_\bold{n}\right) \right| \bold{k}\right\rangle \left\langle \bold{k}\left|\hat{D}^\dagger\left(G_\bold{m}\right) \right| \bold{l}\right\rangle\\
    &\leq & (\gamma n_\Delta + 1)^{-N}\frac{1}{4}\sum_{\bold{l}, \bold{k}\in \mathbb{Z}_{+}^N} t^{\frac{1}{2}\norm{\bold{l} + \bold{k}}_1}
    \left\langle \bold{l}\left|\hat{D}\left(G_\bold{n}\right) \right| \bold{k}\right\rangle \left\langle \bold{k}\left|\hat{D}^\dagger\left(G_\bold{m}\right) \right| \bold{l}\right\rangle\label{eq:up_bound_mult_mod}\\
    &=& \frac{1}{4}\prod_{i=1}^N  \frac{1}{\gamma n_\Delta + 1}\sum_{l_i, k_i=0}^\infty t^{\frac{1}{2}(l_i + k_i)}
    \left\langle l_i\left|\hat{D}\left(G_\bold{n}\right)_i \right| k_i\right\rangle \left\langle k_i\left|\hat{D}^\dagger\left(G_\bold{m}\right)_i \right| l_i\right\rangle
    \end{eqnarray}
    where we have used the inequality $x + \frac{1}{x}\geq 2$. For each mode, we can observe that with the summation of $\bold{n}, \bold{m}$, each $l_i, k_i$ is symmetric and can therefore be written as
    \begin{eqnarray}
        && \frac{1}{\gamma n_\Delta + 1}\sum_{l_i, k_i=0}^\infty t^{\frac{1}{2}(l_i + k_i)}
        \left\langle l_i\left|\hat{D}\left(G_\bold{n}\right)_i \right| k_i\right\rangle \left\langle k_i\left|\hat{D}^\dagger\left(G_\bold{m}\right)_i \right| l_i\right\rangle\\
        &=&  2\frac{1}{\gamma n_\Delta + 1}\sum_{\Delta l_i = 0}^\infty t^{\frac{\Delta l_i}{2}}\sum_{l_i=0}^\infty t^{l_i}
        \left\langle l_i\left|\hat{D}\left(G_\bold{n}\right)_i \right| l_i+\Delta l_i\right\rangle \left\langle l_i+\Delta l_i\left|\hat{D}^\dagger\left(G_\bold{m}\right)_i \right| l_i\right\rangle\\
        && - \sum_{l_i=0}^\infty t^{l_i}
        \left\langle l_i\left|\hat{D}\left(G_\bold{n}\right)_i \right| l_i\right\rangle \left\langle l_i\left|\hat{D}^\dagger\left(G_\bold{m}\right)_i \right| l_i\right\rangle\\
        &=& e^{-\left( \abs{G_\bold{n}^i}^2 +\abs{G_\bold{m}^i}^2 \right)\left(\frac{1}{2}+\frac{t}{1-t}\right)} \left(2\sum_{\Delta l_i=0}^\infty t^{\Delta l_i/2}e^{i\Delta l_i\left(\theta\left(G_\bold{n}^i\right) - \theta\left(G_\bold{m}^i\right)\right)} I_{\Delta l_i}\left(2\frac{\abs{G_\bold{n}^iG_\bold{m}^i}\sqrt{t}}{1 - t}\right) - I_{0}\left(2\frac{\abs{G_\bold{n}^iG_\bold{m}^i}\sqrt{t}}{1 - t}\right)\right)
    \end{eqnarray}
    where $\Delta l_i := k_i - l_i$. The second equality is a result of the Hardy-Hill formula,
    \begin{eqnarray}
            \sum_{n=0}^\infty \frac{n!}{\Gamma\left(1 + \alpha + n\right)}L_n^\alpha(x)L_n^\alpha(y) t^n = \frac{1}{(xyt)^{\alpha/2}(1-t)}e^{-(x+y)t/(1-t)}I_\alpha\left(\frac{2\sqrt{xyt}}{1-t}\right)
    \end{eqnarray}
    with $I$ being the modified Bessel function of the first kind. The infinite-energy assumption comes in to greatly simplifies the expression by eliminating the cross terms. The second term vanishes in the limit of infinite energy since $I_\alpha(x)\to \frac{e^x}{\sqrt{2\pi x}}$ when $x\gg \alpha$. Focusing on the first term, we have
    \begin{eqnarray}
        && 2e^{-\left( \abs{G_\bold{n}^i}^2 +\abs{G_\bold{m}^i}^2 \right)\left(\frac{1}{2}+\frac{t}{1-t}\right)}\sum_{\Delta l_i = 0}^\infty  t^{\Delta l_i}e^{i\Delta l_i\left(\theta\left(G_\bold{n}^i\right) - \theta\left(G_\bold{m}^i\right)\right)} I_{\Delta l_i}\left(2\frac{\abs{G_\bold{n}^iG_\bold{m}^i}\sqrt{t}}{1 - t}\right)\\
        &=&2e^{-\left( \abs{G_\bold{n}^i}^2 +\abs{G_\bold{m}^i}^2 \right)\left(\frac{1}{2}+\frac{t}{1-t}\right)}\sum_{\Delta l_i = 0}^\infty  t^{\Delta l_i} I_{\Delta l_i}\left(2\frac{\abs{G_\bold{n}^iG_\bold{m}^i}\sqrt{t}}{1 - t}\right)\\
        &=&e^{-\left( \abs{G_\bold{n}^i}^2 +\abs{G_\bold{m}^i}^2 \right)\left(\frac{1}{2}+\frac{t}{1-t}\right)} \exp{2\frac{\abs{G_\bold{n}^iG_\bold{m}^i}\sqrt{t}}{1 - t}}\\
        &=&e^{-\left( \abs{G_\bold{n}^i}^2 +\abs{G_\bold{m}^i}^2 \right)\left(\frac{1}{2}+\frac{1-\epsilon}{\epsilon}\right) + 2\frac{\abs{G_\bold{n}^iG_\bold{m}^i}\sqrt{1-\epsilon}}{\epsilon}} \\
        &=&e^{-\left( \abs{G_\bold{n}^i}^2 +\abs{G_\bold{m}^i}^2 - 2 \abs{G_\bold{n}^iG_\bold{m}^i}\right)\frac{1}{\epsilon} + \frac{1}{2}\left( \abs{G_\bold{n}^i}^2 +\abs{G_\bold{m}^i}^2 - 2 \abs{G_\bold{n}^iG_\bold{m}^i}\right) + O(\epsilon)}\\
        &=& \delta(G_\bold{n}^i - G_\bold{m}^i)
    \end{eqnarray}
    where $\epsilon:=1-t\to0$ at infinite energy. The first equality is taken at $\theta\left(G_\bold{n}^i\right) = \theta\left(G_\bold{m}^i\right)$ since otherwise the summation of the complex phases would vanish. In the second equality, we used the mathematical identity of $\sum_{i=0}^\infty (1-\frac{1}{x})^{ci} I_i(x)\to \frac{1}{2}e^x$ as $x\to\infty$ for constant $c$, which is proved in Lemma~\ref{lem:Bessel_sum}. The final result implies that $G_\bold{n}^i = G_\bold{m}^i$, which leads to $\bold{n} = \bold{m}$. Therefore, in the following, the complex phase we haven't taken into consideration, $\phi(\boldsymbol{n}_{1,2}, \boldsymbol{\mu}, \boldsymbol{\nu}) - \phi(\boldsymbol{m}_{1,2}, \boldsymbol{\mu}, \boldsymbol{\nu})$ also vanishes. Combining these with the full expression, we have that 
    \begin{eqnarray}
        \tilde{\epsilon} &\leq& \frac{1}{4d_L}\sum_{\boldsymbol{\mu}, \boldsymbol{\nu}\in \mathbb{Z}_{\bold{d}}^N}\sum_{\bold{n}_{1,2}\in \mathbb{Z}^{\prime N}}e^{-\pi\frac{1-\gamma}{\gamma}\left|\bold{L_\bold{n}}\right|^2 }\\
        &=& \frac{1}{4d_L}\sum_{\boldsymbol{\mu}, \boldsymbol{\nu}\in \mathbb{Z}_{\bold{d}}^N}\sum_{\bold{n}_{1,2}\in \mathbb{Z}^{\prime N}}e^{-\pi\frac{1-\gamma}{\gamma}\left|\bold{L_\bold{n}}\right|^2 }\\
        &=& \frac{1}{4}\sum_{\Delta_\mu \in \mathbb{Z}_+^\bold{d}}\sum_{\bold{n}_{1,2}\in \mathbb{Z}^{\prime N}}e^{-\pi\frac{1-\gamma}{\gamma}\left|\bold{L_\bold{n}}\right|^2 }\\
        &=& \frac{1}{4}\sum_{\bold{x}\in \Lambda^\perp\notin\set{\bold{0}}}e^{-\pi\frac{1-\gamma}{\gamma}\left|\bold{x}\right|^2 }\\
    \end{eqnarray}
    where we applied the multiplicity of $\sum_{\mu_i, \nu_i=1}^{d_i} = d_i\sum_{\Delta \mu_i = 0}^{d_i-1}$, the total logical dimension $d_L = \prod_{i=1}^N d_i$, and the connection between $\bold{L}$ and the symplectic dual lattice $\Lambda^\perp$ given in Lemma~\ref{lem:dual_lattice_connect}.
\end{proof}

While in the multimode case, one can only obtain an upper bound, in single mode we can obtain the exact expressions of the near-optimal performance expression. The key difference between single mode and multimode is that instead of resorting to obtaining an upper bound of the near-optimal infidelity as in Eq.~\eqref{eq:up_bound_mult_mod}, we have
\begin{eqnarray}
    && (\gamma n_\Delta + 1)\sum_{l,k\in \mathbb{Z}_{+}} \frac{t^{l+k}}{\left(\sqrt{t}^{l} +\sqrt{t}^k \right)^2}
    \left\langle l\left|\hat{D}\left(G_n\right) \right| k\right\rangle \left\langle k\left|\hat{D}^\dagger\left(G_m\right) \right| l\right\rangle\\
    &=& 2 (\gamma n_\Delta + 1)\sum_{\Delta l = 0}^\infty \frac{t^{\Delta l}}{(\sqrt{t}^{\Delta l} + 1)^2} \sum_{l\in \mathbb{Z}_{+}} t^l \left\langle l\left|\hat{D}\left(G_n\right) \right| l+\Delta l\right\rangle \left\langle l+\Delta l\left|\hat{D}^\dagger\left(G_m\right) \right| l\right\rangle\\
    && -  \frac{1}{4} (\gamma n_\Delta + 1)\sum_{l\in \mathbb{Z}_{+}} t^l \left\langle l\left|\hat{D}\left(G_\bold{n}\right) \right| l\right\rangle \left\langle l\left|\hat{D}^\dagger\left(G_\bold{m}\right) \right| l\right\rangle\\
    &=& e^{-\left( \abs{G_n}^2 +\abs{G_m}^2 \right)\left(\frac{1}{2}+\frac{t}{1-t}\right)} \left( 2 \sum_{\Delta l = 0}^\infty \frac{\sqrt{t}^{\Delta l}}{(\sqrt{t}^{\Delta l} + 1)^2} e^{i\Delta l\left(\theta\left(G_n\right) - \theta\left(G_m\right)\right)} I_{\Delta l}\left(2\frac{\abs{G_nG_m}\sqrt{t}}{1 - t}\right) - I_{0}\left(2\frac{\abs{G_n G_m}\sqrt{t}}{1 - t}\right)\right)
    \end{eqnarray}
    Therefore, at infinite energy, we know precisely the expression instead of the upper bound as 
    \begin{eqnarray}
        \lim_{n_\Delta \to \infty} \tilde{\epsilon}&=& \frac{1}{4}\sum_{\bold{x}\in \Lambda^\perp\notin\set{\bold{0}}}e^{-\pi\frac{1-\gamma}{\gamma}\left|\bold{x}\right|^2 }.
    \end{eqnarray}
    At finite energy, we have the full expression as
    \begin{eqnarray}
         \tilde{\epsilon} &=& \frac{1}{d_L}\sum_{\mu, \nu\in \mathbb{Z}_{\bold{d}}}\sum_{n_{1,2},m_{1,2}\in \mathbb{Z}} e^{i\left(\phi(n_{1,2}, \mu, \nu) - \phi(m_{1,2}, \mu, \nu)\right)}k^{\abs{L_n}^2 + \abs{L_m}^2}\times \nonumber\\
        &&  \left( 2 \sum_{\Delta l = 0}^\infty \frac{\sqrt{t}^{\Delta l}}{(\sqrt{t}^{\Delta l} + 1)^2} e^{i\Delta l\left(\theta\left(L_n\right) - \theta\left(L_m\right)\right)} I_{\Delta l}\left(z\right) - I_{0}\left(z\right)\right)\label{eq:finite_energy_single_mode}
    \end{eqnarray}
    where $k := e^{-\pi \left(n_\Delta + \frac{1}{2}\right)}$ and $z := 2\frac{\abs{G_nG_m}\sqrt{t}}{1 - t} = \pi (n_\Delta + 1) \sqrt{t}\abs{L_n L_m}$.

One remark we give is that here we are ignoring the overlap between the codewords, which is exponentially suppressed by $n_\Delta$. For instance, one can check that the finite energy performance in Eq.~\eqref{eq:finite_energy_single_mode} at $\gamma = 0$ is exponentially suppressed but finite, which is precisely because we ignored the overlap. If we include the overlap, the near-optimal infidelity should vanish when the noise channel is an identity channel. 

To give two concrete examples, we consider the square lattice and the hexagonal lattice. For a square lattice encoding $d$-dimensional logical space, the generator is
\begin{eqnarray}
        M_{\text{sq}} = \sqrt{d}\begin{pmatrix}
            1 & 0\\
            0 & 1
        \end{pmatrix}.
\end{eqnarray}
Suppose we only consider the closest lattice points of the symplectic dual lattice, which has a Euclidean distance given by $\abs{x}^2_{\text{min}} = \frac{1}{d}$. By enumerating all values of $n_{1,2}, m_{1,2}$ and $\mu, \nu$, we can count the number of points that contribute to the leading order term with their relative phase and obtain the performance to be 
\begin{eqnarray}
     1 - \tilde{F}^\text{opt} &\approx& 2 e^{-\frac{2\pi}{d} \left(n_\Delta + \frac{1}{2}\right)} \left( 8\sum_{\substack{\Delta l\in \mathbb{Z}^\ast \\ \Delta l=0(\mathrm{mod}\ 2) }}\frac{\sqrt{t}^{\Delta l}}{(\sqrt{t}^{\Delta l} + 1)^2}I_{\Delta l}(z_{\text{sq}}) - I_0(z_{\text{sq}}) -1\right)\\
     &\to& e^{-\frac{\pi}{d}\frac{1-\gamma}{\gamma}} \text{ at }n_\Delta \to \infty
\end{eqnarray}
with $z_{\text{sq}} := \frac{\pi}{d} (n_\Delta + 1) \sqrt{\frac{\gamma n_\Delta}{\gamma n_\Delta + 1}}$. The last term in the finite energy expression is due to the codeword overlap. The infinite energy expression can be read from the closest-point distance and has 4 closest points.

Similarly, we can compute the expression explicitly for the hexagonal lattice through its generating matrix and the symplectic transformation required to transform it into the square lattice.
\begin{eqnarray}
    M_{\text{hex}} &=& \sqrt{\frac{2d}{\sqrt{3}}}\begin{pmatrix}
        1 & 0\\
        -\frac{1}{2} & \frac{\sqrt{3}}{2}
    \end{pmatrix},\\
    S_{\text{hex}} &=& \frac{1}{3^{1/4}}\begin{pmatrix}
        \sqrt{\frac{3}{2}} & \frac{1}{\sqrt{2}}\\
        0 & \sqrt{2}
    \end{pmatrix}.
\end{eqnarray}
As a result, we get the performance as 
\begin{eqnarray}
    1 - \tilde{F}^\text{opt} &\approx& 3 e^{-\frac{4\pi}{\sqrt{3}d} \left(n_\Delta + \frac{1}{2}\right)} \left( 8\sum_{\substack{\Delta l\in \mathbb{Z}^\ast \\ \Delta l=0(\mathrm{mod}\ 2) }}\frac{\sqrt{t}^{\Delta l}}{(\sqrt{t}^{\Delta l} + 1)^2}I_{\Delta l}(z_{\text{hex}}) - I_0(z_{\text{hex}}) -1\right)\\
    &\to& \frac{3}{2}e^{-\frac{2\pi}{\sqrt{3}d}\frac{1-\gamma}{\gamma}} \text{ at }n_\Delta \to \infty
\end{eqnarray}
where $z_{\text{hex}} := \frac{2\pi}{\sqrt{3}d} (n_\Delta + 1) \sqrt{\frac{\gamma n_\Delta}{\gamma n_\Delta + 1}}$. Notice that the exponents and the prefactors have changed compared to the square lattice. This is a direct consequence of the closest-point distance increasing to $\abs{x}^2_{\text{min}} = \frac{2}{\sqrt{3}}\frac{1}{d}$ and having 6 closest points.

\section{Achievable rate}

The achievable rate defines the maximum rate of information transmission with vanishing error probability. Equivalently, we can obtain the achievable rate by requiring the entanglement infidelity to vanish at an infinite system size limit. To obtain the achievable rate of the GKP code, we pick a self-dual lattice and apply Lemma.~\ref{lem:Busar_good_lattice}.

\begin{theorem}[Achievable rate of the GKP code under pure loss and amplification]\label{theo:rates_GKP_loss_amp}
    The achievable rates of the GKP code under loss channel with loss rate $\gamma$ and amplification channel with gain $G$ are 
    \begin{eqnarray}
            R_{\textrm{loss}} &=& \log_2\frac{1-\gamma}{\gamma},\\
            R_{\textrm{amp}} &=& \log_2\frac{G}{G-1}.
    \end{eqnarray}
\end{theorem}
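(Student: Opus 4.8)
\emph{Proof plan.} The plan is to prove achievability by exhibiting one explicit family of infinite-energy GKP codes whose near-optimal infidelity $\tilde\epsilon$ vanishes as the number of modes grows; by Corollary~\ref{cor:achieve_rate_near_optimal_fid} (using that, for an infinite-energy code, the thermal-diagonal choice of $D$ obtained by setting $\mathbf n_1=\mathbf n_2=\mathbf 0$ has $\Tr D=1$ exactly) this promotes $\frac1N\log_2 d_L$ to an achievable rate. For each $N$ and each fixed positive integer $\lambda$ I would take the GKP code whose lattice is $\sqrt{\lambda}\,\Lambda_N$, where $\Lambda_N$ is a $2N$-dimensional symplectic self-dual lattice satisfying the inequality of Lemma~\ref{lem:Busar_good_lattice}. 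Self-duality forces $d_0=\det M_0=1$, so by Eq.~\eqref{eq:scale_log_dim} the scaled code has $d_L=\lambda^N$ and rate $\frac1N\log_2 d_L=\log_2\lambda$ for every $N$; the restriction to integer $\lambda$ is exactly the symplectic-integrality constraint on $\sqrt\lambda M_0$.

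The core estimate is a three-step chain. First, Lemma~\ref{lem:inf_energy_performance} bounds $\lim_{\bar n\to\infty}\tilde\epsilon$ by $\tfrac14\sum_{\mathbf x\in\Lambda^\perp\setminus\{\mathbf 0\}}e^{-\pi\frac{1-\gamma}{\gamma}|\mathbf x|^2}$ with $\Lambda=\sqrt\lambda\,\Lambda_N$. Second, by Lemma~\ref{lem:dual_lattice_connect} and self-duality, $(\sqrt\lambda\,\Lambda_N)^\perp=\tfrac1{\sqrt\lambda}\Lambda_N^\perp=\tfrac1{\sqrt\lambda}\Lambda_N$, so after rescaling $\mathbf x\mapsto\mathbf y/\sqrt\lambda$ the sum becomes $\sum_{\mathbf y\in\Lambda_N\setminus\{\mathbf 0\}}e^{-\pi a|\mathbf y|^2}$ with $a=\frac{1-\gamma}{\gamma\lambda}$. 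Third, $e^{-\pi a|\mathbf y|^2}$ being integrable and rotationally invariant, Lemma~\ref{lem:Busar_good_lattice} gives $\sum_{\mathbf y\in\Lambda_N\setminus\{\mathbf 0\}}e^{-\pi a|\mathbf y|^2}\le\int_{\mathbb R^{2N}}e^{-\pi a|\mathbf y|^2}\,d^{2N}\mathbf y=a^{-N}$. Putting these together,
\[
\lim_{\bar n\to\infty}\tilde\epsilon\;\le\;\tfrac14\Big(\lambda\,\tfrac{\gamma}{1-\gamma}\Big)^{N},
\]
which is Eq.~\eqref{eq:upper_bound_infidelity_selfdual} and tends to $0$ as $N\to\infty$ whenever $\lambda<\frac{1-\gamma}{\gamma}$. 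Hence every positive integer $\lambda$ below $\frac{1-\gamma}{\gamma}$ yields an achievable rate $\log_2\lambda$; for $\frac{1-\gamma}{\gamma}\notin\mathbb Z$ the largest admissible $\lambda$ gives $R_{\mathrm{loss}}=\log_2\lfloor\frac{1-\gamma}{\gamma}\rfloor$ as in Eq.~\eqref{eq:multimode_rate}, the floor being suppressed in the statement, and this coincides with the loss capacity exactly when $\frac{1-\gamma}{\gamma}\in\mathbb Z$. The amplification case runs identically through Lemma~\ref{lem:amp_multimode_QEC}: in the infinite-energy limit the exponent $\frac{1-\gamma}{\gamma}$ is replaced by $\frac{G}{G-1}$ (consistent with Eq.~\eqref{eq:equivalence_loss_amp}), giving $\tilde\epsilon\le\tfrac14\big(\lambda\,\tfrac{G-1}{G}\big)^N$ and hence $R_{\mathrm{amp}}=\log_2\lfloor\frac{G}{G-1}\rfloor$.

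The hard part is not the lattice bookkeeping but interchanging the two limits $\bar n\to\infty$ and $N\to\infty$. Lemma~\ref{lem:inf_energy_performance} discards the QEC-matrix prefactor $1+O\!\big(e^{-\pi(n_\Delta+\frac12)|x|_{\min}^2}\big)$ and the finite-energy cross terms under the explicit assumption that $d_L$ stays finite, whereas here $d_L=\lambda^N\to\infty$. The resolution—the technical core of the theorem—is to pin the construction to $\Delta=0$ exactly, where the codeword overlaps vanish identically and the prefactor is precisely $1$, and to self-dual lattices, for which the $\mathbf n=\mathbf m$ delta-function collapse in the proof of Lemma~\ref{lem:inf_energy_performance} (via the Hardy--Hill identity) goes through uniformly in $N$; one must check that the displayed bound is then a genuine non-asymptotic inequality for each $N$, not merely an asymptotic one. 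A secondary delicacy is the boundary value $\frac{1-\gamma}{\gamma}\in\mathbb Z$: since Lemma~\ref{lem:Busar_good_lattice} only delivers $\le\int f$, at $\lambda=\frac{1-\gamma}{\gamma}$ the bound stalls at the constant $\tfrac14$ rather than vanishing, so the capacity-saturating claim at that value is to be read as ``rate $\log_2\lambda$ is achievable for every integer $\lambda<\frac{1-\gamma}{\gamma}$'' together with $\lfloor\frac{1-\gamma}{\gamma}\rfloor=\frac{1-\gamma}{\gamma}$ there. Finally, Lemma~\ref{lem:vanish_pert_form} and Corollary~\ref{cor:achieve_rate_near_optimal_fid} convert the vanishing of $\tilde\epsilon$ into vanishing $1-\tilde F^{\mathrm{opt}}$, and the two-sided bound of Lemma~\ref{lem:inf_energy_performance}'s ambient framework then gives vanishing $1-F^{\mathrm{opt}}$, completing the achievability argument.
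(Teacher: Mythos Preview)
Your proposal is correct and follows essentially the same route as the paper: scale a Buser--Sarnak self-dual lattice by $\sqrt{\lambda}$, feed the dual-lattice bound of Lemma~\ref{lem:inf_energy_performance} through Lemma~\ref{lem:Busar_good_lattice} to get $\tilde\epsilon\le\tfrac14\big(\lambda\,\tfrac{\gamma}{1-\gamma}\big)^N$, and invoke Corollary~\ref{cor:achieve_rate_near_optimal_fid}; the amplification case is handled by the structural substitution $\tfrac{1-\gamma}{\gamma}\to\tfrac{G}{G-1}$. The one place your treatment differs is the prefactor/limits issue you flag: rather than ``pinning $\Delta=0$'' (where the codewords are non-normalizable), the paper keeps $\Delta>0$ and separately bounds the normalization correction by $\big(\lambda/(n_\Delta+\tfrac12)\big)^N$ for the self-dual family, which vanishes as $n_\Delta\to\infty$ even with $d_L=\lambda^N$ growing---this is the rigorous substitute for your proposed $\Delta=0$ shortcut. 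Your observations about the floor function and the boundary case $\lambda=\tfrac{1-\gamma}{\gamma}$ are well taken and more careful than the paper's own phrasing.
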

\begin{proof}
    If we first focus on loss, from Lemma.~\ref{lem:inf_energy_performance} we already obtained the performance of the infinite-energy GKP as 
    \begin{eqnarray}
        \tilde{\epsilon} \leq \frac{1}{4}\sum_{\bold{x}\in \Lambda^\perp\notin\set{\bold{0}}}e^{-\pi\frac{1-\gamma}{\gamma}\left|\bold{x}\right|^2 }.
    \end{eqnarray}
    Considering a symplectic self-dual lattice $\Lambda_0$ generated by $M_0$. While the symplectic self-dual lattice cannot encode information since $\text{det}M_0 = 1$, we can rescale the generator matrix as $M = \sqrt{\lambda} M_0$ such that the generated lattice $\Lambda $ encodes $\lambda^N$-dimensional logical information. As a result, the sympelctic dual lattice scales as $\Lambda^\perp = \frac{1}{\sqrt{\lambda}}\Lambda_0^\perp = \frac{1}{\sqrt{\lambda}}\Lambda_0$. Given by Lemma.~\ref{lem:Busar_good_lattice}, there exists a good symplectic self-dual lattice $\Lambda_0$ such that 
    \begin{eqnarray}
        \tilde{\epsilon} &\leq& \frac{1}{4}\sum_{\bold{x}\in \Lambda^\perp\notin\set{\bold{0}}}e^{-\pi\frac{1-\gamma}{\gamma}\left|\bold{x}\right|^2 }\\
        &=& \frac{1}{4}\sum_{\bold{x}\in \Lambda_0\notin\set{\bold{0}}}e^{-\frac{\pi}{\lambda}\frac{1-\gamma}{\gamma}\left|\bold{x}\right|^2 }\\
        &\leq& \frac{1}{4}\int e^{-\frac{\pi}{\lambda}\frac{1-\gamma}{\gamma}\abs{\bold{x}}^2}d^{2N} \bold{x}\\
        &=& \frac{1}{4}\left(\lambda \frac{\gamma}{1-\gamma}\right)^N.
    \end{eqnarray}
    Therefore, the near-optimal infidelity vanishes at infinite system size limit $N\to\infty$ if and only if $\lambda \leq \frac{1-\gamma}{\gamma}$. By Corollary~\ref{cor:achieve_rate_near_optimal_fid}, this is sufficient for vanishing optimal channel infidelity. Therefore, this constitutes a lower bound of the GKP code achievable rate, $R\geq \log_2 \lambda$. However, since it coincides with the channel capacity of the loss channel, which is the supremum of all achievable rates, the GKP code exactly achieves the channel capacity,
    \begin{eqnarray}
        R_{\textrm{loss}} &=& \log_2\frac{1-\gamma}{\gamma}.
    \end{eqnarray}
    Moreover, while it can be more rigorously derived, we infer here the GKP's performance through an observation for simplicity. Comparing the multimode QEC matrix expression given in Lemma.~\ref{lem:loss_multimode_QEC} and \ref{lem:amp_multimode_QEC}, we remark that when $n_\Delta, m_\Delta \to \infty$, the only difference between the two is that the exponential function is modified from $e^{-\frac{\pi}{2}\frac{1-\gamma}{ \gamma}\left|\bold{L}\right|^2}$ to $e^{-\frac{\pi}{2}\frac{G}{ G-1}\left|\bold{L}\right|^2}$. Therefore, with similar derivations, one can show 
    \begin{eqnarray}
        R_{\textrm{amp}} &=& \log_2\frac{G}{G-1}.
    \end{eqnarray}
    One factor we neglected is the normalization factor. In Lemma~\ref{lem:inf_energy_performance}, we derived the upper bound assuming finite $d_L$ to approximate the normalization factor by unity, which does not hold when we consider asymptotic quantities like the achievable rates. However, with a given set of lattice-like the self-dual lattice, we can still show this property holds rigorously,
    \begin{eqnarray}
        N^2_{\boldsymbol{\mu}}(\frac{1}{2\sqrt{\pi}(1-e^{-2\Delta^2})})^N -1 &=& \sum_{\boldsymbol{n}_{1,2}, \boldsymbol{m}_{1,2} \in \mathbb{Z}^N} e^{i\left(\phi(\boldsymbol{n}_{1,2}, \boldsymbol{\mu}, \boldsymbol{\mu}) - \phi(\boldsymbol{m}_{1,2}, \boldsymbol{\mu}, \boldsymbol{\mu})\right)}e^{-\pi (n_\Delta + \frac{1}{2})\abs{\bold{L}}^2}\\
        &\leq & \sum_{\bold{x}\in \Lambda^\perp\notin\set{\bold{0}}} e^{-\pi (n_\Delta + \frac{1}{2})\abs{\bold{x}}^2}\\
        &\leq&  \left(\lambda \frac{1}{n_\Delta + \frac{1}{2}}\right)^N.
    \end{eqnarray}
    While $n_\Delta\to \infty$, the symplectic lattice that achieves capacity has a finite $\lambda$. Therefore, the normalization constant $N^2_{\boldsymbol{\mu}} = (2\sqrt{\pi}(1-e^{-2\Delta^2}))^N$ in this limit and justifies our choice in deriving Lemma~\ref{lem:inf_energy_performance} for the symplectc lattice of interest here.
\end{proof}

\section{Mathematical identities}

\begin{lemma}[Summation of the modified Bessel function]\label{lem:Bessel_sum}
    The following identity holds,
     \begin{eqnarray}
         \lim_{x\to \infty}e^{-x}\sum_{n=0}^\infty (1-\frac{1}{x})^{cn}I_{n}(x) = \frac{1}{2},\label{eq:ap_lemma}
     \end{eqnarray}
     where $I_i(x)$ is modified Bessel function, and $c$ is a constant.
 \end{lemma}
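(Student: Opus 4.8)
The plan is to reduce everything to the generating-function identity
\begin{equation}
\sum_{n=-\infty}^{\infty} s^{n} I_{n}(x) \;=\; \exp\!\Big(\tfrac{x}{2}\big(s+\tfrac1s\big)\Big),\qquad s\neq 0,
\end{equation}
together with two elementary consequences of it. Setting $s=1$ and using $I_{-n}=I_{n}$ gives $\sum_{n=0}^{\infty} I_{n}(x) = \tfrac12\big(e^{x}+I_{0}(x)\big)$, so the standard asymptotic $I_{0}(x)\sim e^{x}/\sqrt{2\pi x}$ yields $e^{-x}\sum_{n\ge 0} I_{n}(x)\to \tfrac12$. Applying $(s\,\mathrm{d}/\mathrm{d}s)^{2}$ to the identity and evaluating at $s=1$ (equivalently, using $\partial_{\theta}^{2}e^{x\cosh\theta}|_{\theta=0}=x e^{x}$) gives the ``second-moment'' relation $\sum_{n\in\mathbb{Z}} n^{2} I_{n}(x) = x\,e^{x}$; this is the only quantitative control on the tail of $I_{n}(x)$ I will need. (Equivalently, $e^{-x}I_{n}(x)$ is the Skellam law of a difference of two independent $\mathrm{Poisson}(x/2)$ variables, with mean $0$ and variance $x$; this picture motivates the argument but is not logically required.) Throughout write $t_{x}:=(1-1/x)^{c}$, and note $t_{x}\in(0,1]$ and $t_{x}\to 1$ for $x$ large when $c\ge 0$.

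For the upper bound, since $0<t_{x}\le 1$ every term obeys $t_{x}^{n}I_{n}(x)\le I_{n}(x)$, hence $e^{-x}\sum_{n\ge 0} t_{x}^{n} I_{n}(x)\le e^{-x}\sum_{n\ge 0} I_{n}(x)\to\tfrac12$, so $\limsup_{x\to\infty} e^{-x}\sum_{n\ge 0} t_{x}^{n} I_{n}(x)\le\tfrac12$.

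For the lower bound — where the weight $(1-1/x)^{cn}$ must actually be handled — fix $K>0$ and truncate the sum at $n=\lfloor K\sqrt{x}\rfloor$. On that range $t_{x}^{n}\ge t_{x}^{K\sqrt{x}} = \exp\!\big(cK\sqrt{x}\,\log(1-1/x)\big) = \exp\!\big(-cK/\sqrt{x}+O(x^{-3/2})\big)\to 1$, so
\begin{equation}
e^{-x}\sum_{n=0}^{\lfloor K\sqrt{x}\rfloor} t_{x}^{n} I_{n}(x)\;\ge\; t_{x}^{K\sqrt{x}}\Big(e^{-x}\sum_{n\ge 0} I_{n}(x)\;-\;e^{-x}\!\!\sum_{n>K\sqrt{x}}\!\! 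I_{n}(x)\Big).
\end{equation}
By the second-moment relation and Chebyshev's inequality the tail is $e^{-x}\sum_{n>K\sqrt{x}} I_{n}(x)\le \tfrac{1}{K^{2}x}\,e^{-x}\sum_{n\in\mathbb{Z}} n^{2} I_{n}(x)=\tfrac{1}{K^{2}}$. Letting $x\to\infty$ gives $\liminf \ge \tfrac12-\tfrac1{K^{2}}$, and then $K\to\infty$ gives $\liminf\ge\tfrac12$; combined with the upper bound this establishes \eqref{eq:ap_lemma}.

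The delicate point — the main obstacle — is that the truncation scale and the weight must be matched: the near-unit factor $t_{x}^{K\sqrt{x}}$ has to tend to $1$ while the Chebyshev tail bound stays uniform in $x$. Both hold precisely because the mass of $\{e^{-x}I_{n}(x)\}_{n}$ concentrates on the scale $n\sim\sqrt{x}$, exactly the scale on which $(1-1/x)^{cn}$ starts to deviate from $1$; shrinking $1-1/x$ any faster, or widening the summation window, would break this balance. For $c<0$ one argues identically after replacing $t_{x}^{n}\le 1$ by $t_{x}^{n}\le t_{x}^{K\sqrt{x}}$ on $0\le n\le K\sqrt{x}$ in the upper-bound step and controlling the far tail crudely via $I_{n}(x)\le (x/2)^{n}/n!$; since the paper only invokes the statement with $c>0$, this extension is not needed here.
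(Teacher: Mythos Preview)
Your proof is correct and takes a genuinely different route from the paper. Both arguments start the same way, using the generating function at $s=1$ to get $e^{-x}\sum_{n\ge 0}I_n(x)\to\tfrac12$, but they diverge on how to kill the correction coming from the weight $(1-1/x)^{cn}$. The paper bounds the difference directly: it invokes Bernoulli's inequality $(1-1/x)^{cn}\ge 1-cn/x$ and then telescopes $\sum_n \tfrac{n}{x}I_n(x)$ using the recurrence $\tfrac{2n}{x}I_n=I_{n-1}-I_{n+1}$, obtaining a bound $\tfrac{c}{2}e^{-x}(I_0+I_1)\to 0$. Your argument instead computes the second moment $\sum_{n\in\mathbb{Z}}n^2 I_n(x)=xe^x$ by differentiating the generating function, and then runs a Chebyshev truncation at scale $K\sqrt{x}$, observing that on this window the weight satisfies $t_x^{K\sqrt{x}}\to 1$ while the tail mass is at most $1/K^2$.

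The paper's approach is shorter and purely algebraic (one inequality plus a telescoping identity), though the Bernoulli step is slightly sloppy for the finitely many terms with $cn<1$. Your approach is more structural: it makes explicit that the mass of $e^{-x}I_n(x)$ concentrates on $n=O(\sqrt{x})$ (the Skellam picture you mention), which is exactly the scale on which the weight is still essentially $1$. This makes your proof more robust --- it would extend with no change to any weight of the form $w(n/x)^n$ with $w$ continuous at $0$ and $w(0)=1$, whereas the paper's telescoping is tied to the linear bound $1-cn/x$.
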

 \begin{proof}
     We first prove that 
     \begin{eqnarray}
         \lim_{x\to \infty}e^{-x}\sum_{n=0}^\infty I_{n}(x) = \frac{1}{2}\label{eq:sum_mod_bes}.
     \end{eqnarray}
     The generating function of the modified Bessel function reads
     \begin{align}
         \sum_{i=-\infty}^{\infty} t^n I_n(x) = e^{\frac{x}{2}(t+t^{-1})},
     \end{align}
     where we let $t=1$ and make use of $I_{-n}(x) = I_{n}(x)$ for $i \in \mathbb{Z}$, then we have
     \begin{align}
         e^{x} = -I_0(x) + 2(\sum_{i=0}^{\infty}I_n(x)).
     \end{align}
     Because the asymptotic expansion of $I_0(x)$ reads
     \begin{align}
         I_\alpha (x) \sim \frac{e^{x}}{\sqrt{2\pi x}} \qquad |\arg x| < \frac{\pi}{2}, \label{eq:asymp}
     \end{align}
     we have
     \begin{eqnarray}
         \lim_{x\to \infty}e^{-x}\sum_{n=0}^\infty I_{n}(x) = \frac{1}{2}-\lim_{x\to \infty}\frac{e^x}{\sqrt{2\pi x}}e^{-x} = \frac{1}{2}.
     \end{eqnarray}
     Thus we prove Eq.~\eqref{eq:sum_mod_bes}. Compare Eq.~\eqref{eq:sum_mod_bes} and Eq.~\eqref{eq:ap_lemma}, we find that it is sufficient to prove
     \begin{align}
           \lim_{x\to\infty}  \sum_{n=0}^{\infty}(1-(1-\frac{1}{x})^{cn})e^{-x}I_n(x) = 0.
     \end{align}
     Since $I_0 (x)\geq 0$ for $x\geq 0$, the summation is non-negative. With Bernoulli's inequality states $(1-\frac{1}{x})^{cn} \ge 1-\frac{cn}{x}$ for $x > 1$, we have
     \begin{eqnarray}
         e^{-x}\sum_{i=0}^{\infty}(1-(1-\frac{1}{x})^{cn})I_n(x) &\le& e^{-x}\sum_{n=0}^{\infty}\frac{cn}{x}I_n(x)\\
         &\le& e^{-x}\frac{c}{2}\sum_{i=1}^{\infty}(I_{n-1}(x)-I_{n+1}(x))\\
         &\le& \frac{c}{2} e^{-x}(I_{-1}(x) +I_{0}(x) - 2\lim_{n\to\infty}I_{n}(x)) \\
     &\le& \frac{c}{2} e^{-x}(I_{-1}(x) +I_{0}(x)),
     \end{eqnarray}
     where we have used the recurrence relation of the modified Bessel function, $\frac{2n}{x}I_n(x) = I_{n-1}(x) - I_{n+1}(x)$. Thus we have
     \begin{align}
         \lim _{x \rightarrow \infty} \sum_{n=0}^{\infty}\left(1-\left(1-\frac{1}{x}\right)^{c n}\right)e^{-x} I_n(x) \le \lim _{x \rightarrow \infty} \frac{c}{2} e^{-x}(I_{-1}(x) +I_{0}(x)) = 0,
     \end{align}
     where we use the asymptotic behavior Eq.~\eqref{eq:asymp}
 \end{proof}

    \section{QEC matrix of general codes under loss and amplification}
    The establishment of GKP's rate is based on the existence of a self-dual symplectic lattice with good packing properties. As soon as one such good lattice, $\Lambda_0$ is found, all the capacity-achieving GKP lattices are obtained by scaling it by a factor dependent on the noise strength. One observation is that if we define 
    \begin{eqnarray}
        K_{\textrm{loss}}:=2^{R_{\textrm{loss}}} &=& \frac{1-\gamma}{\gamma}\\
        K_{\textrm{amp}}:=2^{R_{\textrm{amp}}} &=& \frac{G}{G-1},
    \end{eqnarray}
    then when 
    \begin{eqnarray}
        K_{\textrm{loss}} = K_{\textrm{amp}}\label{eq:equivalence_loss_amp}
    \end{eqnarray}
    the same GKP lattice achieves the capacity of loss and amplification simultaneously. Note that here while the encoding is the same, the recovery is tailored to loss or amplification. Such a result has its roots in similar forms of loss and amplification. In particular, we have that
    \begin{eqnarray}
        \hat{E}_{l}^\dagger \hat{E}_k &=& \frac{1}{\pi\left(K_{\textrm{loss}} + 1\right)}\int d^2 \beta e^{-\frac{1}{2}K_{\textrm{loss}}\abs{\beta}^2}\sandwich{l}{\hat{D}(\sqrt{K_{\textrm{loss}}+1}\beta^\ast)}{k}\hat{D}_{\beta}\\
        \hat{A}_{l}^\dagger \hat{A}_k &=& \frac{1}{\pi\left(K_{\textrm{amp}} - 1\right)}\int d^2 \beta e^{-\frac{1}{2} K_{\textrm{amp}}\abs{\beta}^2}\sandwich{l}{\hat{D}(\sqrt{ K_{\textrm{amp}}-1}\beta^\ast)}{k}\hat{D}_{\beta}
    \end{eqnarray}
    where we simply rewrite the expressions obtained in Appendix.~\ref{sec:appendix_noise_channel}. The similarities between the Kraus operator combinations are apparent in the exponential factor. However, there are other components that are not identical. Here, we show that when we evaluate the code with some specific metric, namely $\norm{\Delta M}_F^2$, general codes have the same performance.
    \begin{lemma}[Equivalence of general codes under loss and amplification]\label{lem:KL_cond_converge}
        For any code defined by a set of codewords $\ket{\mu}$ with $\mu\in [d_L]$, its deviation from the Knill-Laflamme condition, characterized by $\norm{\Delta M}_F^2$, is equal if 
        \begin{eqnarray}
            K_{\textrm{loss}} = K_{\textrm{amp}}.
        \end{eqnarray}
     \end{lemma}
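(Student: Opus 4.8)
The plan is to push both QEC matrices into the displacement-operator representation, where the noise channel and the code enter through cleanly separated factors, and then to carry out the Frobenius-norm bookkeeping so that every channel-dependent quantity reduces to a function of $K$ alone. Concretely, I would first substitute the displacement expansions of $\hat{E}_l^\dagger\hat{E}_k$ and $\hat{A}_l^\dagger\hat{A}_k$ from Appendix~\ref{sec:appendix_noise_channel} (Lemmas~\ref{lem:loss_disp_expand} and \ref{lem:amp_disp_expand}) into $M_{[\mu l],[\nu k]}=\bra{\mu}\hat{N}_l^\dagger\hat{N}_k\ket{\nu}$. The structural point is that, for both channels, $M_{[\mu l],[\nu k]}$ becomes an integral over a single displacement label $\beta$ of a product: a channel-dependent scalar prefactor, a Gaussian in $|\beta|^2$ whose width is fixed by $K$, the Fock matrix element $\bra{l}\hat{D}(s_{\mathrm{ch}}\beta^\ast)\ket{k}$ with $s_{\mathrm{loss}}=\sqrt{K+1}$ versus $s_{\mathrm{amp}}=\sqrt{K-1}$, and the overlap $\bra{\mu}\hat{D}(\beta)\ket{\nu}$ — and the code appears only through this last factor. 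Once $K_{\mathrm{loss}}=K_{\mathrm{amp}}=K$, the two expressions differ only in the prefactor and in $s_{\mathrm{ch}}$.

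Next I would split $M=I_L\otimes D+\Delta M$, choosing $D$ as the same covariant functional of $M$ for both channels (the diagonal truncation of $\frac{1}{d_L}\operatorname{Tr}_L M$), so that $\langle M,I_L\otimes D\rangle_F=\norm{I_L\otimes D}_F^2$ and hence $\norm{\Delta M}_F^2=\norm{M}_F^2-\norm{I_L\otimes D}_F^2$. For $\norm{M}_F^2=\sum_{\mu\nu lk}|M_{[\mu l],[\nu k]}|^2$ I would pair two copies of the $\beta$-integral (variables $\beta,\beta'$) and perform the Kraus-index sums via $\sum_{l,k}\bra{l}\hat{D}(s\beta^\ast)\ket{k}\overline{\bra{l}\hat{D}(s\beta'^\ast)\ket{k}}=\operatorname{Tr}[\hat{D}(s\beta^\ast)\hat{D}(-s\beta'^\ast)]\propto s^{-2}\delta^{2}(\beta-\beta')$, with the accompanying cocycle phase trivial on the support of the delta. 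The $\beta'$-integral then collapses and leaves $\norm{M}_F^2$ equal to a $K$-only channel factor (the $s^{-2}$ from the delta absorbs the $s$-dependence of the prefactor) times $\int d^2\beta\,e^{-K|\beta|^2}W(\beta)$, where $W(\beta):=\operatorname{Tr}[\hat{P}_L\hat{D}(\beta)\hat{P}_L\hat{D}(-\beta)]$ is purely code-dependent; one can cross-check this via the identity $\norm{M}_F^2=\operatorname{Tr}[\mathcal{N}(\hat{P}_L)^2]$. I would then run the analogous computation for $\norm{I_L\otimes D}_F^2=d_L\sum_l D_{ll}^2$, expressing the channel dependence again through $K$ and through code data (here $\operatorname{Tr}[\hat{P}_L\hat{D}(\beta)]$), and combine the two pieces using $K_{\mathrm{loss}}=K_{\mathrm{amp}}$ to obtain the equality.

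The main obstacle I anticipate is exactly the cancellation of the two genuinely non-identical ingredients of the Kraus-operator combinations — the prefactor and the scale $s_{\mathrm{ch}}=\sqrt{K\pm 1}$ that sits \emph{inside} $\bra{l}\hat{D}(s_{\mathrm{ch}}\beta^\ast)\ket{k}$. Inside $\norm{M}_F^2$ the Fock indices are fully summed, so the trace-of-displacements identity is available and the $s$-dependence drops cleanly; inside $\norm{I_L\otimes D}_F^2$ only the diagonal elements $\bra{l}\hat{D}(s_{\mathrm{ch}}\beta^\ast)\ket{l}$ enter and the clean delta-function collapse is unavailable, so the $K$-dependent normalizations must be tracked term by term. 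Making the two contributions cancel will therefore require either a matched rescaling of the displacement variable (absorbing $s_{\mathrm{ch}}$) combined with the Laguerre/Hardy–Hill sum identities already used in Lemma~\ref{lem:inf_energy_performance}, or a ``dual'' choice of the gauge defining $D$ on the two channels so that the leftover $K$-dependent factors coincide; I expect this bookkeeping, rather than any conceptual step, to be the crux, and it is also where the hypothesis $K_{\mathrm{loss}}=K_{\mathrm{amp}}$ (as opposed to $\eta G=1$ or any other pairing) becomes essential.
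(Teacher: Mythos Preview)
Your displacement-expansion strategy and the use of the full $\sum_{l,k}$ Fock sum to collapse $\Tr[\hat D^\dagger(s\beta^\ast)\hat D(s\beta'^\ast)]$ to a delta function is exactly the paper's mechanism. The paper, however, never splits $\norm{\Delta M}_F^2$ into $\norm{M}_F^2 - \norm{I_L\otimes D}_F^2$. Instead it absorbs the subtraction of $D$ directly into the characteristic function: it sets $\Delta\chi(\beta)=e^{-\frac12 K|\beta|^2}\bra\mu \hat D_\beta \ket\nu$ for $\mu\neq\nu$ and $\Delta\chi(\beta)=e^{-\frac12 K|\beta|^2}\bigl(\bra\mu \hat D_\beta\ket\mu - \tfrac{1}{d_L}\sum_\xi\bra\xi \hat D_\beta\ket\xi\bigr)$ for $\mu=\nu$, defines $\hat O_{\mu\nu}\propto\int d^2\beta\,\Delta\chi(\beta)\,\hat D(\sqrt{K\pm1}\,\beta^\ast)$, and reads off $(\Delta M)_{[\mu l],[\nu k]} = \bra{l}\hat O_{\mu\nu}\ket{k}$. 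The full Fock sum is therefore still present on $\Delta M$ itself, so $\norm{\Delta M}_F^2=\sum_{\mu\nu}\Tr(\hat O_{\mu\nu}^\dagger\hat O_{\mu\nu})$ collapses via the very delta-function identity you reserved for $\norm{M}_F^2$, landing on $\int d^2\beta\,|\Delta\chi(\beta)|^2$, which depends only on $K$ and the code.

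The obstacle you anticipate is entirely self-inflicted by taking $D$ to be the \emph{diagonal truncation} of $\tfrac{1}{d_L}\Tr_L M$ in the fixed Kraus basis. That truncation kills the off-diagonal Fock indices in $\norm{I_L\otimes D}_F^2$ and leaves you with sums of $|\bra l\hat D(s\beta^\ast)\ket l|^2$, where the delta collapse is genuinely unavailable and the $s_{\mathrm{ch}}=\sqrt{K\pm1}$ dependence does not cancel cleanly. The paper's $D$ is the \emph{untruncated} $\tfrac{1}{d_L}\Tr_L M$ --- equivalently, the subtraction is performed on the displacement side, not on the Fock side --- which is gauge-invariant and keeps both Fock indices free. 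With that choice even your two-term decomposition would work (both $\norm{M}_F^2$ and $\norm{I_L\otimes D}_F^2$ become full Hilbert--Schmidt traces of displacement integrals), but at that point it is simpler to compute $\norm{\Delta M}_F^2$ in one shot as the paper does. No Laguerre/Hardy--Hill identities or dual gauge choices are needed.
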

     \begin{proof}
        Elements of the QEC matrix are defined as $M_{[\mu l], [\nu k]} = \bra{\mu}\hat{E}_{l}^\dagger \hat{E}_k\ket{\nu}$ for loss and $M_{[\mu l], [\nu k]} = \bra{\mu}\hat{A}_{l}^\dagger \hat{A}_k\ket{\nu}$ for amplification. Suppose $K_{\textrm{loss}} = K_{\textrm{amp}} = K$, we can rewrite the QEC matrix elements as
    \begin{eqnarray}
        M_{[\mu l], [\nu k]} = \bra{l} \frac{1}{\pi (K\pm 1)}\int d^2 \beta \chi(\beta) \hat{D}(\sqrt{K \pm 1}\beta^\ast)\ket{k}.
    \end{eqnarray}
    where the plus and minus sign correspond to loss and amplification respectively, and $\ket{l}$ and $\ket{k}$ are Fock states. Here, we define $\chi(\beta):= e^{-\frac{1}{2}K\abs{\beta}^2}\bra{\mu}\hat{D}_{\beta}\ket{\nu}$, which can be seen as the scaled characteristic function of the operator in between. For $\Delta M:= M - I_L \otimes \left(\Tr_L M\right)$, we can define a modified characteristic function as
    \begin{eqnarray}
        \Delta \chi(\beta) = 
        \begin{cases}
                e^{-\frac{1}{2}K\abs{\beta}^2}\bra{\mu}\hat{D}_{\beta}\ket{\nu} & \text{if } \mu\neq\nu\\
                e^{-\frac{1}{2}K\abs{\beta}^2}\left(\bra{\mu}\hat{D}_{\beta}\ket{\mu}-\frac{1}{d_L}\sum_{\xi}\bra{\xi}\hat{D}_{\beta}\ket{\xi}\right) & \text{if } \mu=\nu
              \end{cases} 
    \end{eqnarray}
    and, for simplicity, the operator $\hat{O}_{\mu, \nu}:= \frac{1}{\pi (K\pm 1)}\int d^2 \beta \Delta\chi(\beta) \hat{D}(\sqrt{K\pm1}\beta^\ast)$. Then, we can express $(\Delta M)_{[\mu l], [\nu k]} = \bra{l}O_{\mu,\nu}\ket{k}$. As a result, we have
    \begin{eqnarray}
        \norm{\Delta M}_F^2 &=&  \sum_{\mu, \nu}\sum_{l,k}\abs{\bra{\mu^{(n)}}\hat{E}_{l}^\dagger \hat{E}_k\ket{\nu^{(n)}}}^2\\
               &=& \sum_{\mu, \nu}\sum_{l,k}\bra{k}O^\dagger_{\mu,\nu}\ket{l}\bra{l}O_{\mu,\nu}\ket{k}\\
               &=& \sum_{\mu, \nu}\sum_{k}\bra{k}O^\dagger_{\mu,\nu}O_{\mu,\nu}\ket{k}\\
               &=& \sum_{\mu, \nu}\Tr{O^\dagger_{\mu,\nu}O_{\mu,\nu}}\\
               &=& \frac{1}{\pi^2}\sum_{\mu, \nu}\int d^2 \beta\abs{\Delta \chi(\beta)}^2.
    \end{eqnarray}
    Note that the final expression of $\norm{\Delta M}_F^2$ is independent of whether the noise channel is loss or amplification.
    \end{proof}
    The above lemma does not utilize any structure in the code itself. However, it also does not directly imply results on the error rate or achievable rate of the code. Past works~\cite{PhysRevLett.104.120501, zhao2023extractingerrorthresholdsframework} have examined the connection between quantum relative entropy, Bures metric, and the decoding error probability. While it is possible to show that $\norm{\Delta M}_F^2$ is a necessary condition, it is unknown whether it is sufficient. Given the almost identical structure of loss and amplification, we conjecture that if a general code can achieve a rate of $R$ under pure loss channel, the same code can achieve the same rate under amplification channel, provided that Eq.~\eqref{eq:equivalence_loss_amp} holds. We note that connections between loss and amplification channels have been explored in other contexts, such as degradability~\cite{PhysRevA.74.062307}. We leave the exploration along this direction to future work.

\end{document}